\newtheorem{theorem}{Theorem}
\newtheorem{lemma}[theorem]{Lemma}
\newtheorem{claim}[theorem]{Claim}
\newtheorem{corollary}[theorem]{Corollary}
\newtheorem{remark}[theorem]{Remark}
\newtheorem {conjecture}[theorem]{Conjecture}
\newcommand{\R}{{\mathbb R}}
\newcommand{\fs}{{\mathrm{fs}}}
\newcommand{\F}{{\mathcal{F}}}
\newcommand{\nb}{{\mathrm{NB}}}
\DeclareMathOperator{\STAB}{STAB}
\DeclareMathOperator{\QSTAB}{CLIQUE}
\DeclareMathOperator{\conv}{conv}
\DeclareMathOperator{\cone}{cone}
\DeclareMathOperator{\diag}{diag}
\newcommand{\one}{\mathbf{1}} 
\newcommand{\set}[1]{\left\{#1\right\}}
\DeclareMathOperator{\fra}{\textup{FRAC}}
\DeclareMathOperator{\stab}{\textup{STAB}}
\DeclareMathOperator{\oddcycle}{\textup{OC}}
\DeclareMathOperator{\clique}{\textup{CLIQUE}}
\DeclareMathOperator{\NB}{\textup{NB}}
\DeclareMathOperator{\LS}{\textup{LS}}
\def\diag{\textup{diag}}
\def\th{\textup{TH}}
\def\R{{\mathbb R}}
\def\S{{\mathbb S}}
\def\e{{\mathbf e}}
\begin{document}

\title[SDP-Operator, Near-Perfect and Near-Bipartite Graphs]{Lov\'{a}sz-Schrijver SDP-operator,\\
near-perfect graphs and
near-bipartite graphs}\thanks{Some of the results in this paper were first announced in conference
proceedings in abstracts \cite{BENT2,BENT3}.  This research was supported in part by grants
PID-CONICET 241, PICT-ANPCyT 0361, ONR N00014-12-10049, and a Discovery Grant from NSERC}

\author{S. Bianchi, M. Escalante, G. Nasini, L. Tun\c{c}el}\thanks{S. Bianchi:
Universidad Nacional de Rosario, Argentina (e-mail: sbianchi@fceia.unr.edu.ar)\\
M. Escalante: Universidad Nacional de Rosario, Argentina (e-mail: mariana@fceia.unr.edu.ar)\\
G. Nasini: Corresponding author. Universidad Nacional de Rosario, Argentina
(e-mail: nasini@fceia.unr.ac.ar)\\
L. Tun\c{c}el: Department of Combinatorics and Optimization, Faculty
of Mathematics, University of Waterloo, Waterloo, Ontario N2L 3G1,
Canada (e-mail: ltuncel@uwaterloo.ca)}

\date{November 6, 2014}

\begin{abstract}
We study the Lov\'{a}sz-Schrijver lift-and-project operator ($\LS_+$) based on the cone of symmetric,
positive semidefinite matrices, applied to the fractional stable set polytope
of graphs.  The problem of obtaining a combinatorial characterization of graphs for
which the $\LS_+$-operator generates the stable set polytope in one step has been open since 1990.
We call these graphs ${\LS}_+$-\emph{perfect}.  In the current contribution, we pursue
a full combinatorial characterization of ${\LS}_+$-perfect graphs and make progress towards such
a characterization by establishing a new, close relationship among ${\LS}_+$-perfect graphs,
near-bipartite graphs and a newly introduced concept of full-support-perfect graphs.
\end{abstract}

\keywords{stable set problem, lift-and-project methods, semidefinite programming,
integer programming}
\maketitle

\section{Introduction}

The notion of a perfect graph was introduced by Berge in the early 1960s \cite{Berge1963}.
A graph is called \emph{perfect} if each of its induced subgraphs has chromatic number equal to the
size of a maximum cardinality clique in the subgraph.
Perfect graphs have caught the attention of many researchers in the area and inspired numerous
interesting contributions to the literature for the past fifty years.
One of the main results in the seminal paper of Gr\"{o}tschel, Lov\'{a}sz and Schrijver \cite{GLS81}
is that perfect graphs constitute a graph class where the Maximum Weight Stable Set Problem (MWSSP)
can be solved in polynomial time.  Some years later, the same authors proved a very beautiful,
related result which connects a purely graph theoretic notion to polyhedrality of a
typically nonlinear convex relaxation and to the integrality and equality of two fundamental polytopes:

\newpage

\begin{theorem}
\label{thm:GLS}
(Gr\"{o}tschel, Lov\'{a}sz and Schrijver \cite{GLS81,GLS})
For every graph $G$, the following are equivalent:
\begin{enumerate}
\item
$G \textrm{ is perfect;}$
\item
$\stab(G)=\clique(G);$
\item
$\th(G)=\stab(G);$
\item
$\th(G)=\clique(G);$
\item
$\th(G) \textrm{ is polyhedral.}$
\end{enumerate}
\end{theorem}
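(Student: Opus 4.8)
The plan is to organize the whole equivalence around the sandwich relation
\[
\stab(G) \subseteq \th(G) \subseteq \clique(G),
\]
which holds for \emph{every} graph $G$: the incidence vector of each stable set satisfies the orthonormal-representation constraints defining $\th(G)$ (left inclusion), while each clique of $G$ furnishes a valid inequality for $\th(G)$ (right inclusion). Because $\stab(G)$ and $\clique(G)$ are visibly polyhedral, this sandwich already delivers the easy half of the theorem. If (2) holds, i.e. $\stab(G)=\clique(G)$, then the sandwich collapses and forces $\th(G)=\stab(G)=\clique(G)$, giving (3), (4) and (5) at once; and the implications $(2)\Rightarrow(3)\Rightarrow(5)$ and $(2)\Rightarrow(4)\Rightarrow(5)$ are then all trivial, which splices (3) and (4) into the picture. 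It therefore suffices to establish the cycle $(1)\Rightarrow(2)\Rightarrow(5)\Rightarrow(1)$.

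For $(1)\Rightarrow(2)$ I would appeal to the polyhedral theory of perfect graphs. Using Lov\'asz's replication lemma together with LP duality, one shows that for a perfect graph the clique-constrained polytope $\clique(G)$ is integral; since its integer points are exactly the incidence vectors of stable sets, integrality means $\clique(G)=\conv\{\,\text{stable-set incidence vectors}\,\}=\stab(G)$, which is (2). The reverse reading --- that an integral $\clique(G)$ lets one extract a minimum clique cover of value $\omega(H)$ on each induced subgraph $H$ and hence certifies perfection --- is the classical converse $(2)\Rightarrow(1)$.

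The genuinely hard step, and the one I expect to be the main obstacle, is $(5)\Rightarrow(1)$: polyhedrality of the a priori nonlinear body $\th(G)$ must be leveraged to prove perfection. I would argue the contrapositive, so assume $G$ is imperfect and pass to an induced subgraph $H$ that is \emph{minimally} imperfect. Since $\th(G)\cap\{x_i=0:\ i\notin V(H)\}=\th(H)$, polyhedrality is inherited by induced subgraphs, and it is enough to contradict polyhedrality of $\th(H)$. Here I would combine Lov\'asz's structural facts for minimally imperfect graphs --- in particular $|V(H)|=\alpha(H)\,\omega(H)+1$ --- with the exact evaluation $\max\{\uno\tr x:\ x\in\th(H)\}=\vartheta(H)$ and the antiblocking self-duality of the theta body (the antiblocker of $\th(H)$ is $\th(\bar H)$). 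These ingredients conspire to exhibit an extreme point of $\th(H)$ whose coordinates are governed by $\vartheta(H)$; since $\vartheta(H)$ is irrational for such $H$ (for instance $\vartheta(C_5)=\sqrt5$), this extreme point is irrational and cannot arise from any finite system of rational inequalities, so $\th(H)$ fails to be a rational polytope.

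The crux of the entire argument is thus the control of the extreme-point (equivalently facet) structure of $\th(H)$ via orthonormal representations: one must show that the only way $\th(H)$ can avoid such irrational extreme points is for $H$ to be perfect, in which case $\th(H)=\stab(H)=\clique(H)$. Everything else --- the sandwich, the collapse of the inclusions, and the hereditary behaviour under induced subgraphs --- is bookkeeping, and the polyhedral characterization in $(1)\Leftrightarrow(2)$ is classical; it is the passage from the soft hypothesis ``$\th(G)$ is polyhedral'' to the rigid combinatorial conclusion ``$G$ is perfect'' that carries all the analytic weight.
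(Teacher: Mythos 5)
The paper itself gives no proof of Theorem~\ref{thm:GLS}: it is stated as background and attributed to \cite{GLS81,GLS} (with $(1)\Leftrightarrow(2)$ credited separately to \cite{chvatal}), so your attempt can only be compared with the classical Gr\"otschel--Lov\'asz--Schrijver argument. Your global structure is fine: the sandwich $\stab(G)\subseteq\th(G)\subseteq\clique(G)$, its collapse under (2), the trivial implications $(3)\Rightarrow(5)$ and $(4)\Rightarrow(5)$, and the classical $(1)\Leftrightarrow(2)$ are all correct, and you correctly isolate $(5)\Rightarrow(1)$ as the crux.

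Your proof of $(5)\Rightarrow(1)$, however, has a genuine gap --- in fact two. First, irrationality cannot do the job you ask of it: statement (5) says that $\th(G)$ is a polyhedron, not a \emph{rational} polyhedron, and a set cut out by rational semidefinite data can be a polytope with irrational vertices, e.g.\ $\bigl\{x\in\R: \bigl(\begin{smallmatrix}1&x\\ x&2\end{smallmatrix}\bigr)\succeq 0\bigr\}=[-\sqrt{2},\sqrt{2}]$. So exhibiting an irrational extreme point of $\th(H)$, or noting that $\max\{\uno\tr x: x\in\th(H)\}=\vartheta(H)$ is irrational, only excludes \emph{rational} polyhedrality; hypothesis (5) survives untouched. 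Second, the fact you lean on --- that every minimally imperfect $H$ has irrational $\vartheta(H)$, realized at an irrational extreme point --- is asserted, not proved. Identifying the minimally imperfect graphs as odd holes and antiholes requires the Strong Perfect Graph Theorem \cite{CRST2006}, which is vastly heavier than the statement being proved (and unavailable to GLS in 1981/1986); the fact $|V(H)|=\alpha(H)\omega(H)+1$ that you cite gives neither the vertex-transitivity you would need to place the maximizer at $\bigl(\vartheta(H)/|V(H)|\bigr)\uno$, nor extremality of that point, since the optimal face of $\th(H)$ could be a nontrivial face whose barycenter is not a vertex.

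The classical proof replaces all of this by a facet lemma: $\th(G)$ is the set of nonnegative vectors satisfying all orthonormal-representation constraints, and any such constraint that induces a facet of $\th(G)$ must be a clique constraint. Consequently, if $\th(G)$ is polyhedral --- rational or not --- then $\th(G)=\clique(G)$. Your antiblocking instinct then finishes the argument correctly: from $\mathrm{abl}(\th(G))=\th(\overline{G})$ and $\mathrm{abl}(\clique(G))=\stab(\overline{G})$ one gets $\th(\overline{G})=\stab(\overline{G})$; since the antiblocker of a polytope is a polytope, the facet lemma applies again to give $\th(\overline{G})=\clique(\overline{G})$, hence $\stab(\overline{G})=\clique(\overline{G})$, so $\overline{G}$ is perfect by $(2)\Rightarrow(1)$, and $G$ is perfect by Lov\'asz's perfect graph theorem. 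The missing idea in your attempt is precisely this facet lemma, which converts the soft hypothesis ``polyhedral'' into the identity $\th(G)=\clique(G)$ without any rationality assumption or minimal-imperfection analysis.
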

In the above theorem, $\stab(G)$ is the stable set polytope of $G$, $\clique(G)$ is its clique relaxation
and $\th(G)$ is the \emph{theta body} of $G$ defined
by Lov\'asz \cite{Lovasz1979}.

In the early 1990s, Lov\'{a}sz and Schrijver \cite{LS91}
introduced the semidefinite relaxation ${\LS}_+(G)$ of
$\stab(G)$ which is stronger than $\th(G)$.
Following the same line of reasoning used for perfect graphs, they proved that MWSSP can be solved in
polynomial time for the class of graphs for which ${\LS}_+(G)=\stab(G)$. We call these graphs
\emph{${\LS}_+$-perfect graphs} \cite{BENT2}.  The set of ${\LS}_+$-perfect graphs is known
to contain many rich and interesting classes of graphs (e.g., perfect graphs,
$t$-perfect graphs, wheels, anti-holes, near-bipartite graphs) and their clique sums.
However, no combinatorial characterization of ${\LS}_+$-perfect graphs have been obtained so far.

There are many studies of various variants of lift-and-project operators applied
to the relaxations of the stable set problem
(see for instance, \cite{SL96,Liptak1999,deKP2002,Laurent,BO2003,LipTun2003,EMN,PVZ,GL07,GRS2013,GLRS2009}).
Why study $\LS_+$-perfect graphs?  For example, if we want to characterize the largest family
of graphs for which MWSSP can be solved in polynomial time, then perhaps, we should pick
a tractable relaxation of $\stab(G)$ that is as strong as possible.  This reasoning
would suggest that, we should focus on the strongest, tractable lift-and-project operator
and reiterate it as much as possible while maintaining tractability of the underlying
relaxation.  Even though the (lower bound) analysis for the strongest lift-and-project operators is
typically challenging, indeed, some work on the behaviour of the strongest lift-and-project
operators applied to the stable set problem already exists (see \cite{Au2014} and the references therein).
In the spectrum of strong lift-and-project operators which utilize positive semidefiniteness
constraints, given the above results of Gr\"{o}tschel, Lov\'{a}sz and Schrijver, it seems
clear to us that we should pick an operator that is at least as strong as $\th(G)$.  
Among many of the convex relaxations that are closely related to $\th(G)$ but stronger,
$\th(G)$ continues to emerge as the central object with quite special mathematical
properties (see \cite{deCarliSilva2013} and the references therein).  Given that the operator
$\LS_+(G)$ can be defined as the intersection of the matrix-space liftings of the odd-cycle polytope
of $G$ and the theta body of $G$, by definition, $\LS_+(G)$ encodes and retains very interesting combinatorial
information about the graph $G$. Then, the next question is why not focus on iterated (hence stronger)
operator $\LS_+^k$ for $k\geq 2$ but small enough to maintain tractability?  The answer to this
is related to the above; but, it is a bit more subtle: in the lifted, matrix-space representation
of $\LS_+$, if we remove the positive semidefiniteness constraint, we end up with the lifting of the operator
$\LS$ (defined later). In this lifted matrix space, if we remove the restriction that the matrix
be symmetric, we end up with the lifting of a weaker relaxation $\LS_0$.  $\LS_0^k$ retains many
interesting combinatorial properties of $G$, see \cite{Liptak1999,LiLo2001}.  Moreover, Lov\'{a}sz and Schrijver
proved that for every graph $G$, $\LS_0(G)=\LS(G)$.  However, this property  does not
generalize to the iterated operators $\LS_0^k$, $\LS^k$, even for $k=2$,
even if we require that the underlying graph $G$ be perfect (see \cite{Au2008,AuTuncel2009}).
Therefore, $\LS_+$ has many of the desired attributes for this purpose.

One of our main goals in this line of research is to obtain a characterization of ${\LS}_+$-perfect graphs
similar to the one given in Theorem \ref{thm:GLS} for perfect graphs.
More precisely, we would like to find an appropriate polyhedral relaxation
of $\stab(G)$ playing the role of
$\clique(G)$ in Theorem \ref{thm:GLS}, when we replace $\th(G)$ by ${\LS}_+(G)$.
In \cite{BENT2} we introduced the polyhedral relaxation $\NB(G)$ of $\stab(G)$,
which is, to the best of our knowledge, the tightest polyhedral relaxation of ${\LS}_+(G)$.
Roughly speaking, $\NB(G)$ is defined by the family of facets of
stable set polytopes of a family of graphs that are built from \emph{near-bipartite graphs}
by using simple operations so that the the stable set polytope of the resulting graph
does not have any facets outside the class of facets which define the stable set polytope of
near-bipartite graphs (for a precise definition of $\NB(G)$, see Section~\ref{sec:definitions}).
In our quest to obtain the desired characterization of ${\LS}_+$-perfect graphs,
$\NB(G)$ is our current best guess for replacing $\clique(G)$ in Theorem \ref{thm:GLS}.
More specifically, we conjecture that the next four statements are equivalent.

\begin{conjecture}
\label{conj0}
For every graph $G$, the following four statements are equivalent:
\begin{enumerate}
\item $\stab(G)=\NB(G)$;
\item ${\LS}_+(G)=\stab(G)$;
\item ${\LS}_+(G)=\NB(G)$;
\item ${\LS}_+(G)$ is polyhedral.
\end{enumerate}
\end{conjecture}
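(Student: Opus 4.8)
The plan is to prove the equivalence as a cycle, isolating the one genuinely hard direction. First I would record the containment chain $\stab(G)\subseteq \LS_+(G)\subseteq \NB(G)$: the left inclusion is the standard fact that $\LS_+(G)$ is a relaxation of the stable set polytope, and the right inclusion holds because $\NB(G)$ was introduced precisely as a polyhedral relaxation of $\LS_+(G)$. From this chain four of the implications are immediate. If $\stab(G)=\NB(G)$, then $\LS_+(G)$ is squeezed between two equal polytopes, forcing both $\LS_+(G)=\stab(G)$ and $\LS_+(G)=\NB(G)$; this gives (1)$\Rightarrow$(2) and (1)$\Rightarrow$(3). Since $\stab(G)$ and $\NB(G)$ are by definition polytopes, each of (2) and (3) trivially yields (4). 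Thus the entire content of the conjecture is concentrated in closing the cycle back to (1), for which it suffices to prove the single implication (4)$\Rightarrow$(1); I would route this through (4)$\Rightarrow$(2)$\Rightarrow$(1).

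Second, I would attempt (4)$\Rightarrow$(2) by following the template of the Gr\"otschel--Lov\'asz--Schrijver proof that polyhedrality of $\th(G)$ forces perfection. The key structural feature exploited there is the antiblocking duality relating $\th(G)$ to $\th(\overline{G})$, together with the fact that a polyhedral $\th(G)$ must have integral vertices. I would look for the analogue for $\LS_+$: since, as noted in the introduction, $\LS_+(G)$ can be written as the intersection of the matrix liftings of the odd-cycle polytope of $G$ and of $\th(G)$, polyhedrality of $\LS_+(G)$ should be shown to propagate to induced subgraphs and to force the defining positive-semidefiniteness constraint to be redundant given the facets already present. The aim of this step is to prove that a polyhedral $\LS_+(G)$ can have no vertex outside $\stab(G)$, hence $\LS_+(G)=\stab(G)$.

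Third, for (2)$\Rightarrow$(1), i.e.\ upgrading $\LS_+(G)=\stab(G)$ to $\NB(G)=\stab(G)$, I would use the facet description behind $\NB(G)$ and the newly introduced full-support-perfect graphs. The idea is to argue that every facet of $\NB(G)$ arises as a facet of the stable set polytope of a near-bipartite-type minor of $G$, and that each such facet is already valid for $\stab(G)$ once $\LS_+(G)$ is exact; since near-bipartite graphs are themselves $\LS_+$-perfect, this should prevent $\NB(G)$ from strictly enclosing $\stab(G)$. Combining (4)$\Rightarrow$(2) with (2)$\Rightarrow$(1) closes the cycle and delivers the full equivalence.

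I expect the main obstacle to be precisely the implication (4)$\Rightarrow$(2): showing that polyhedrality of the semidefinite relaxation forces it to coincide with $\stab(G)$. Unlike $\th(G)$, the operator $\LS_+$ lacks a clean self-dual/antiblocking description, so the integrality argument underlying Theorem~\ref{thm:GLS} does not transfer directly; one must instead control how the positive-semidefiniteness constraint curves the boundary of $\LS_+(G)$ and rule out that this boundary can be flat (polyhedral) while still strictly enclosing $\stab(G)$. This is exactly the gap that keeps the statement at the level of a conjecture, and a complete argument would likely require either a structural classification of the minimal graphs with $\LS_+(G)\neq\stab(G)$ or a genuinely new duality theory for $\LS_+$.
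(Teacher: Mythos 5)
You should note first that the statement you were given is a \emph{conjecture}: the paper does not prove it, and it remains open. What the paper does provide is precisely the content of your first paragraph: using the known chain $\stab(G)\subseteq{\LS}_+(G)\subseteq\NB(G)$, statement (1) forces (2) and (3) by squeezing, and each of (2), (3) trivially gives (4) because $\stab(G)$ and $\NB(G)$ are polytopes; hence the conjecture reduces to the nontrivial implications that the paper isolates as Conjecture~\ref{conj1} ((4)$\Rightarrow$(1)) and Conjecture~\ref{conj2} ((2)$\Rightarrow$(1)). Your routing (4)$\Rightarrow$(2)$\Rightarrow$(1) is an equivalent packaging of the same reduction, so up to this point you agree with the paper. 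Beyond it, however, your second and third steps are strategy sketches rather than proofs, and they cannot be credited as closing the gap; the paper itself only establishes special cases (its main result, Theorem~\ref{main}, verifies (2)$\Rightarrow$(1), in the equivalent form of Conjecture~\ref{conjecture}, for $\fs$-perfect graphs, and it cites earlier partial progress on (4)$\Rightarrow$(1)).

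Two concrete defects in your sketches deserve naming. For (2)$\Rightarrow$(1) your plan is logically backwards: you propose to show that the facets of $\NB(G)$, coming from near-bipartite subgraphs, are valid for $\stab(G)$. That is automatic from $\stab(G)\subseteq\NB(G)$ and merely re-proves the known inclusion. What must be shown is the reverse inclusion $\NB(G)\subseteq\stab(G)$, i.e., that \emph{every facet-defining inequality of} $\stab(G)$ is valid for $\NB(G)$; equivalently (Conjecture~\ref{conjecture}), that an ${\LS}_+$-perfect graph whose stable set polytope has a full-support facet must be near-bipartite. Where the paper makes progress on this, it does so not by a validity-transfer argument of the kind you describe, but by a structural one: reduction to the families $\F^k$, explicit PSD certificates showing the graphs $H^k$, $G_{LT}$ and $G_{EMN}$ are ${\LS}_+$-imperfect, and the fact that odd subdivision, $1$-stretching and clique subdivision preserve ${\LS}_+$-imperfection, culminating in Theorem~\ref{main}. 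For (4)$\Rightarrow$(2), the Gr\"{o}tschel--Lov\'{a}sz--Schrijver template you invoke rests on the antiblocking duality between $\th(G)$ and $\th(\overline{G})$, which, as you yourself concede, has no known ${\LS}_+$ analogue; nothing in your sketch substitutes for it, so that implication remains entirely open. In short: your trivial implications are correct and coincide with the paper's reduction, but the remainder of the proposal neither proves the conjecture nor matches the partial results the paper actually establishes.
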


Verifying the validity of Conjecture~\ref{conj0}
is equivalent to determine the validity of the following two statements:

\begin{conjecture}
\label{conj1}
For every graph $G$, if
${\LS}_+(G)$ is polyhedral then $\stab(G)=\NB(G)$.
\end{conjecture}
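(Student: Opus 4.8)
The goal is the reverse of the always-valid inclusion $\stab(G) \subseteq \NB(G)$, so the plan is to establish $\NB(G) \subseteq \stab(G)$ under the hypothesis that $\LS_+(G)$ is polyhedral. Throughout I would exploit the chain $\stab(G) \subseteq \LS_+(G) \subseteq \NB(G)$, the second inclusion expressing that the near-bipartite-type inequalities defining $\NB(G)$ are valid for $\LS_+(G)$, and I would argue by contraposition: assuming $\stab(G) \subsetneq \NB(G)$, I would exhibit an induced subgraph whose $\LS_+$-relaxation fails to be polyhedral and then transfer this back to $G$.

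First I would reduce to a \emph{minimal} counterexample. Using the monotonicity of $\LS_+$ under node deletion — for an induced subgraph $H$ of $G$, the set $\LS_+(H)$ is a coordinate projection of the face $\{x \in \LS_+(G) : x_v = 0 \text{ for } v \notin V(H)\}$, so polyhedrality of $\LS_+(G)$ is inherited by every induced subgraph — together with the analogous restriction behaviour of $\stab$ and $\NB$, it suffices to treat a graph $G$ with $\stab(G) \neq \NB(G)$ but $\stab(H) = \NB(H)$ for every proper induced subgraph $H$. For such a minimal $G$ the inequality separating $\NB(G)$ from $\stab(G)$ must be a facet of $\stab(G)$ of \emph{full support} that is not of near-bipartite type; this is precisely the regime in which the notion of full-support-perfect graphs announced in the abstract should be brought to bear, to isolate the combinatorial obstruction carried by such a facet.

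Next I would convert this combinatorial obstruction into geometry. Since perfect graphs satisfy $\stab = \NB$, a minimal counterexample $G$ is not perfect, and hence by Theorem~\ref{thm:GLS} its theta body $\th(G)$ is not polyhedral. The heart of the argument is to show that the curvature of $\th(G)$ survives inside $\LS_+(G)$ along the face exposed by the offending facet. Concretely, using $\LS_+(G) \subseteq \th(G)$ and the semidefinite description of $\th$, I would produce a one-parameter family $x(t)$ of boundary points of $\LS_+(G)$ whose supporting hyperplanes vary continuously — that is, a genuinely curved boundary arc, equivalently infinitely many extreme points of $\LS_+(G)$ — contradicting the assumed polyhedrality and closing the contrapositive.

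The main obstacle is exactly this final step, and it has two intertwined difficulties. First, one needs a sufficiently sharp structural description of the minimal graphs with $\stab \neq \NB$ — ideally a classification, or at least a usable invariant of the offending full-support facet — and producing this is essentially the crux of the entire $\LS_+$-perfection program. Second, even granting such structure, the curvature argument must be carried out uniformly over a possibly infinite family and is delicate because $\LS_+$ interlaces the polyhedral odd-cycle constraints $\oddcycle(G)$ with the semidefinite theta constraint: the odd holes, where $G$ is imperfect yet $\LS_+(G) = \stab(G)$ is polyhedral, show that imperfection of $G$ by itself does not force a curved $\LS_+$-boundary, so the argument must locate the curvature on the specific face determined by the non-near-bipartite facet rather than merely inferring it from non-perfection.
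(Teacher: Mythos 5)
First, a point of order: the statement you were given is Conjecture~\ref{conj1}, which the paper does \emph{not} prove. It is posed as an open conjecture; the paper only cites partial progress (an infinite family of graphs in \cite{BENT1}) and its own main result, Theorem~\ref{main}, proves the \emph{different} Conjecture~\ref{conj2} (in its equivalent form, Conjecture~\ref{conjecture}) restricted to $\fs$-perfect graphs. So there is no proof in the paper to compare against; the only question is whether your proposal constitutes a proof, and it does not.

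Your preliminary reductions are in themselves sound, and they parallel the paper's own reformulation: ${\LS}_+$ commutes with the cube faces $\{x_v=0\}$, so polyhedrality of ${\LS}_+(G)$ is inherited by induced subgraphs; a minimal counterexample must be separated from $\NB(G)$ by a \emph{full-support} facet of $\stab(G)$ (this is exactly the regime of Conjecture~\ref{conjecture}); and such a $G$ is imperfect, so $\th(G)$ is non-polyhedral by Theorem~\ref{thm:GLS}. The genuine gap is your final step, and it is fatal rather than merely delicate: the inclusion ${\LS}_+(G)\subseteq \th(G)$ points the wrong way for your purposes. Non-polyhedrality (curvature) of an \emph{outer} relaxation transfers no information whatsoever to an \emph{inner} one --- the inner set can perfectly well be a polytope, as your own example of odd holes shows ($\th$ curved, ${\LS}_+=\stab$ polyhedral). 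You acknowledge this, but the proposal supplies no mechanism for locating curvature of ${\LS}_+(G)$ itself; ``produce a curved boundary arc on the face exposed by the offending facet'' is a restatement of the goal, not an argument, so the proposal reduces an open conjecture to an equally open claim. It is instructive to contrast this with how the paper handles ${\LS}_+$ in its partial results: it never reasons abstractly from imperfection, but instead exhibits explicit positive semidefinite matrix certificates showing that specific points lie in ${\LS}_+(H^k)\setminus\stab(H^k)$ (Theorem~\ref{oddanti}) and proves that concrete graph operations preserve ${\LS}_+$-imperfection (Theorems~\ref{Thm2.10} and~\ref{cliquesubdivision}). Some construction of that explicit, certificate-based kind --- applied not to ${\LS}_+$-imperfection but to the much subtler property of non-polyhedrality --- is what your missing step would require.
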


\begin{conjecture}
\label{conj2}
For every graph $G$,
if ${\LS}_+(G)=\stab(G)$ then $\stab(G)=\NB(G)$.
\end{conjecture}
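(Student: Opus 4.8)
The plan is to work with the contrapositive together with the standing inclusions $\stab(G)\subseteq \LS_+(G)\subseteq \NB(G)$, which hold for every graph. Under the hypothesis $\LS_+(G)=\stab(G)$ these inclusions collapse the goal $\stab(G)=\NB(G)$ to showing $\NB(G)\subseteq \LS_+(G)=\stab(G)$; since $\NB(G)$ is the polyhedron cut out by its near-bipartite-derived inequalities, it suffices to prove that every facet-defining inequality of $\stab(G)$ is one of those inequalities (an \emph{$\NB$-facet}). So the whole statement reduces to a facet-by-facet claim: assuming $G$ is $\LS_+$-perfect, no facet of $\stab(G)$ lies outside the family defining $\NB(G)$.

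First I would set up an inductive skeleton on $|V(G)|$. $\LS_+$-perfection is hereditary, because for $W\subseteq V(G)$ the projection of $\LS_+(G)$ onto the coordinates of $W$ is contained in $\LS_+(G[W])$, so $\LS_+(G)=\stab(G)$ forces $\LS_+(G[W])=\stab(G[W])$. Any facet $a^\top x\le b$ of $\stab(G)$ with $\supp(a)\subsetneq V(G)$ restricts to a facet of $\stab(G[W])$ for $W=\supp(a)$; by the inductive hypothesis $\stab(G[W])=\NB(G[W])$, and since $\NB$ is constructed to be closed under the relevant induced-subgraph and lifting operations, such a facet is automatically an $\NB$-facet of $G$. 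This isolates the real difficulty in the \emph{full-support} facets, i.e.\ those with $\supp(a)=V(G)$.

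For a full-support facet I would exploit the description of $\LS_+(G)$ as the projection of the intersection of the matrix liftings of $\oddcycle(G)$ and of $\th(G)$. Because $\LS_+(G)=\stab(G)$, the facet $a^\top x\le b$ is valid over this intersection and therefore admits a positive semidefinite matrix certifying its validity. Combining this semidefinite certificate with the $|V(G)|$ affinely independent stable sets that are tight on the facet, the aim is to read off enough combinatorial structure on $\supp(a)=V(G)$ to conclude that $G$ is near-bipartite, or is obtained from near-bipartite graphs by the admissible operations, so that $a^\top x\le b$ is exactly an $\NB$-facet. In the language of the paper, the target is to show that $\LS_+$-perfect graphs are \emph{full-support-perfect}: all of their full-support facets are near-bipartite facets.

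The main obstacle is precisely this last step. Converting a semidefinite feasibility certificate for a full-support facet into a purely graph-theoretic near-bipartiteness conclusion is the heart of the open problem, and it is exactly the place where a clean combinatorial classification is missing. A natural line of attack is through \emph{minimally $\LS_+$-imperfect} graphs: one would try to show that every such minimal graph produces a full-support facet that is not an $\NB$-facet, and, conversely, that a non-$\NB$ full-support facet cannot occur in an $\LS_+$-perfect graph because the positive semidefinite certificate of its validity would force a near-bipartite critical subgraph. Establishing this dichotomy — equivalently, ruling out exotic full-support facets that $\LS_+$ can certify but that do not arise from near-bipartite structure — is the step I expect to resist any routine argument and to require the structural theory of near-bipartite and full-support-perfect graphs developed in the body of the paper.
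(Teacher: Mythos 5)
There is a genuine gap, and it is not a small one: the statement you were asked to prove is Conjecture~\ref{conj2} of the paper, which is \emph{open}; the paper itself does not prove it in general, but only for the restricted class of $\fs$-perfect graphs (Section~\ref{thevalidity}, culminating in Theorem~\ref{main}). Your proposal is likewise not a proof, and you say so yourself: the decisive step --- showing that an $\LS_+$-perfect graph whose stable set polytope has a full-support facet-defining inequality must be near-bipartite --- is deferred as the part that ``resists any routine argument.'' That deferred step \emph{is} the conjecture. Concretely, your support-induction (heredity of $\LS_+$-perfection, plus the fact that a facet of $\stab(G)$ with support $W\subsetneq V(G)$ induces a facet of $\stab(G[W])$, which by induction is valid for $\NB(G[W])$ and hence for $\NB(G)$) correctly reduces Conjecture~\ref{conj2} to exactly the paper's Conjecture~\ref{conjecture}; this is the same equivalence the paper records. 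But reducing one open conjecture to an equivalent open conjecture, and then gesturing at ``reading off combinatorial structure'' from a positive semidefinite validity certificate, leaves the statement unproven: no mechanism for extracting near-bipartiteness from such a certificate is given in your sketch, and none is known.

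For comparison, the partial result the paper does obtain is reached by a purely structural route rather than by analyzing PSD certificates of valid inequalities, and it shows what a completable version of your plan looks like on a subclass. Restricting to $\fs$-perfect graphs, a full-support facet forces the graph to contain a minimally imperfect induced subgraph, and the minimal obstructions are the one-extra-node families $\F^k$ (Section~\ref{Fk}); Theorem~\ref{reduccion} shows every relevant member of $\F^k$ arises from $G_{LT}$ or $G_{EMN}$ by odd subdivision, $1$-stretching and clique subdivision; Theorems~\ref{Thm2.10} and~\ref{cliquesubdivision} show these operations preserve $\LS_+$-imperfection; and Theorem~\ref{oddanti} exhibits explicit PSD matrices certifying points of $\LS_+(H^k)\setminus\stab(H^k)$. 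Together (Theorems~\ref{antiholes}, \ref{join0}, \ref{main}) this verifies Conjecture~\ref{conjecture}, hence Conjecture~\ref{conj2}, for $\fs$-perfect graphs only. If you want a provable theorem along the lines of your sketch, you must similarly restrict the graph class and do the structural classification; in full generality the final step of your proposal remains the open problem, not a proof.
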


In \cite{BENT1}, we made some progress towards proving Conjecture~\ref{conj1}, by presenting
an infinite family of graphs for which it holds.  Recently, Conjecture~\ref{conj2}
was verified for web graphs \cite{EN2014}.  In this contribution, we prove
that Conjecture~\ref{conj2} holds for a class of graphs called \emph{$\fs$-perfect graphs}
that stand for \emph{full-support-perfect} graphs.
This graph family was originally defined in \cite{gra} and includes the set of
\emph{near-perfect} graphs previously defined by Shepherd in \cite{Shepherd-1994}.

One of the main difficulties in obtaining a good combinatorial characterization
for ${\LS}_+$-perfect graphs is that the lift-and-project operator ${\LS}_+$
(and many related operators) can behave sporadically under many well-studied graph-minor operations
(see \cite{EMN,LipTun2003}).  Therefore, in the study of ${\LS}_+$-perfect graphs
we are faced with the problem of constructing suitable graph operations and then
deriving certain monotonicity or loose invariance properties under such
graph operations. In this context, we present two operations which
\emph{preserve} ${\LS}_+$-\emph{imperfection} in graphs.

In the next section, we begin with notation and preliminary results that will be used
throughout the paper. We also state our main characterization
conjecture on ${\LS}_+$-perfect graphs. In Section \ref{Fk},
we characterize $\fs$-perfection in the family of graphs built from a minimally imperfect graph
and one additional node.
In Section \ref{thevalidity} we prove the validity of the conjecture
on $\fs$-perfect graphs.  In order to ease the reading of this contribution,
the proofs of results on the ${\LS}_+$ operator are presented in Section \ref{lodelN+}.
Section \ref{conclu} is devoted to the conclusions and some further results.

\section{Further definitions and preliminary results} \label{sec:definitions}

\subsection{Graphs and the stable set polytope}

Throughout this work, $G$ stands for a simple graph with node set $V(G)$ and edge set $E(G)$. The \emph{complementary graph} of $G$, denoted by $\overline{G}$, is such that $V(\overline{G})=V(G)$ and, for $E(\overline{G})=\{uv: u,v\in V(G), uv \notin E(G)\}$.  For any positive integer $n$, $K_n$, $C_n$ and $P_n$ denote the graphs with $n$ nodes corresponding to a  \emph{complete graph}, a \emph{cycle} and a \emph{path}, respectively. We assume that in the cycle $C_{n}$ node $i$ is adjacent to node $i+1$ for $i \in \{1,\ldots, n-1\}$ and $n$ is adjacent to node $1$.

Given $V'\subseteq  V(G)$, we say that $G'$ is a subgraph of $G$ \emph{induced by the nodes in} $V'$ if $V(G')=V'$ and $E(G')=\{uv: uv\in E(G), \{u,v\}\subseteq  V(G')\}$. When $V(G')$ is clear from the context we say that $G'$ is a node induced subgraph of $G$ and write $G'\subseteq G$.  Given $U\subseteq  V(G)$, we denote by $G-U$ the subgraph of $G$ induced by the nodes in $V(G)\setminus U$.
For simplicity, we write $G-u$ instead of $G-\{u\}$.  We say that $G_E$ is an \emph{edge subgraph} of $G$ if $V(G_E)=V(G)$ and $E(G_E)\subseteq E(G)$.

Given the graph $G$, the set $\Gamma_G(v)$ is the \emph{neighbourhood} of node $v\in V(G)$ and $\delta_G(v)=|\Gamma_G(v)|$.
The set $\Gamma_G[v]=\Gamma_G(v)\cup \{v\}$ is the \emph{closed neighbourhood} of node $v$. When the graph is clear from context we simply write $\Gamma(v)$ or $\Gamma[v]$.
If $G'\subseteq G$ and $v\in V(G)$, $G'\ominus v$ is the subgraph of $G$ induced by the nodes in $V(G')\setminus \Gamma[v]$. We say that $G'\ominus v$ is obtained from $G'$ by \emph{destruction} of $v\in V(G)$.

A \emph{stable set} in $G$ is a subset of mutually nonadjacent nodes in $G$ and a \emph{clique} is a subset of pairwise adjacent nodes in $G$. The cardinality of a stable set of maximum cardinality in $G$ is denoted by $\alpha(G)$. The stable set polytope in $G$, $\stab(G)$, is the convex hull of the characteristic vectors of stable sets in $G$. The \emph{support} of a valid inequality of the stable set polytope of a graph $G$ is the subgraph induced by the nodes with nonzero coefficient in the inequality and a \emph{full-support} inequality has $G$ as support.

If $G'\subseteq  G$ we can consider every point in
$\stab(G')$ as a point in $\stab(G)$, although they do not belong
to the same space (for the missing nodes, we take direct sums with the
interval $[0,1]$, since originally
$\stab(G) \subseteq  \stab(G') \oplus [0,1]^{V(G) \setminus V(G')}$).
Then, given any family of graphs $\mathcal{F}$ and a graph $G$,
we denote by $\mathcal{F}(G)$
the relaxation of $\stab(G)$ defined by
\begin{equation}\label{F}
\mathcal{F}(G)=\bigcap\limits_{G'\subseteq  G; G'\in \mathcal{F}} \stab(G').
\end{equation}

If $\fra$ denotes the family of complete graphs of size two, following the definition (\ref{F}), the polyhedron $\fra(G)$ is called the \emph{edge relaxation}. It is known that  $G$ is bipartite if and only if $\stab(G)= \fra(G)$.
Similarly, if $\clique$ denotes the family of complete graphs,
$\clique(G)$ is the \emph{clique relaxation} already mentioned
and a graph is perfect if and only if $\stab(G)=\clique(G)$ \cite{chvatal}.
Moreover, if $\oddcycle$ denotes the family of odd cycles, as a consequence of results in \cite{LS91} we have the following
\begin{remark}\label{Gmenosv}
If $G-v$ is bipartite for some $v\in V(G)$ then $\stab(G)= \fra(G) \cap \oddcycle(G)$.
\end{remark}

In \cite{Shepherd-1995} Shepherd defined a graph $G$ to be near-bipartite if $G\ominus v$ is bipartite for every $v\in V(G)$.
We denote by $\NB$ the family of near-bipartite graphs.
Since complete graphs and odd cycles are near-bipartite graphs, it is clear that
\[\NB(G)\subseteq   \clique(G)\cap \oddcycle(G).\]

\subsection{Minimally imperfect, near-perfect and fs-perfect graphs}

\emph{Minimally imperfect} graphs are those graphs that are not perfect but after deleting any node they become perfect.
The Strong Perfect Graph Theorem  \cite{CRST2006} (also see \cite{CRST2003}; and see \cite{Cor} for the related
recognition problem) states that the only minimally imperfect graphs
are the odd cycles and their complements.

Given a graph $G$ it is known that the \emph{full-rank constraint}
\[\sum_{u\in V} x_u\leq\alpha(G)\]
is always valid for $\stab(G)$.
A graph is \emph{near-perfect} if its stable
set polytope is defined only by non-negativity constraints, clique constraints
and the full-rank constraint \cite{Shepherd-1994}.
Due to the results of Chv\'atal \cite{chvatal}, near-perfect graphs define a superclass of perfect graphs and after \cite{Pad} minimally imperfect graphs are also near-perfect.
Moreover, every node induced subgraph of a near-perfect graph is near-perfect \cite{Shepherd-1994}.
In addition, Shepherd \cite{Shepherd-1994}
conjectured that near-perfect graphs could be characterized in terms of certain combinatorial parameters
and established that the validity of the conjecture follows from the Strong Perfect Graph (then Conjecture, now Theorem). Therefore,

\begin{theorem}(\cite{CRST2006},\cite{Shepherd-1994})\label{np}
A graph $G$ is near-perfect if and only if, for every $G'\subseteq  G$ minimally imperfect, the following two properties hold:
\begin{enumerate}
\item $\alpha(G')=\alpha(G)$;
\item for all $v\in V(G)$, $\alpha(G'\ominus v)=\alpha(G)-1$.
\end{enumerate}
\end{theorem}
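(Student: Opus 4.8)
The plan is to prove both implications by relating the rank facet of a minimally imperfect subgraph $G'$ to the full-rank facet of $G$ through sequential lifting. I would use three ingredients throughout. First, by the Strong Perfect Graph Theorem every minimally imperfect $G'$ is an odd hole or an odd antihole, and a direct computation on these two families gives the \emph{destruction identity} $\alpha(G'\ominus v)=\alpha(G')-1$ for every $v\in V(G')$ (destroying $v$ in $C_{2k+1}$ leaves a path with stability number $k-1$, and in $\overline{C_{2k+1}}$ leaves a single edge, of stability number $1=\alpha-1$). Second, by Padberg's description of $\stab(G')$ for minimally imperfect $G'$, the unique full-support facet is the rank facet $\sum_{u\in V(G')}x_u\le\alpha(G')$, and $\alpha(G')\ge 2$. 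Third, I would use that lifting the rank facet of $\stab(G')$ by a node $w\notin V(G')$ \emph{first} produces a facet of $\stab(G[V(G')\cup\{w\}])$ with coefficient $1$ on each $u\in V(G')$ and coefficient $a_w=\alpha(G')-\alpha(G'\ominus w)$ on $w$, and that continuing to lift the remaining outside nodes yields a facet of $\stab(G)$ with the same right-hand side $\alpha(G')$.

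For the forward implication I would assume $G$ near-perfect and take $G'\subseteq G$ minimally imperfect. Fixing $w\in V(G)\setminus V(G')$, I lift $w$ first and then the other outside nodes, obtaining a facet $F$ of $\stab(G)$ with right-hand side $\alpha(G')\ge 2$ and coefficient $1$ on all of $V(G')$. Near-perfection forces $F$ to be a non-negativity, a clique, or the full-rank facet; it is not a non-negativity facet (it has at least $|V(G')|\ge 5$ positive coefficients) and not a clique facet (its right-hand side exceeds $1$), hence $F$ is the full-rank facet $\sum_{u\in V(G)}x_u\le\alpha(G)$. Matching right-hand sides yields $\alpha(G')=\alpha(G)$, which is property~(1); matching the coefficient of $w$ yields $a_w=1$, i.e. $\alpha(G'\ominus w)=\alpha(G')-1=\alpha(G)-1$. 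Since $w$ was an arbitrary node outside $V(G')$, property~(2) holds for all such $w$, and for $w\in V(G')$ it is exactly the destruction identity recalled above.

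For the converse I would argue by contraposition. Suppose $G$ is not near-perfect, and let $F:\sum_{u\in S}a_ux_u\le b$ be a facet of $\stab(G)$ of none of the three admissible types, chosen with inclusion-minimal support $S$; then $F$ is a full-support facet of $\stab(G_S)$ with $a\ge 0$, where $G_S=G[S]$. A perfect graph has only clique and non-negativity facets, and a full-support clique facet forces $G_S$ to be complete; hence $G_S$ is imperfect and contains a minimally imperfect $G'$. In the case $G_S=G'$, Padberg forces $F$ to be the rank facet $\sum_{u\in V(G')}x_u\le\alpha(G')$ with $V(G')=S\subsetneq V(G)$; applying property~(2) to any $w\in V(G)\setminus V(G')$, every stable set attaining $\sum_{u\in V(G')}x_u=\alpha(G')$ must have $x_w=0$ (otherwise its trace on $V(G')$ would be a stable set of $G'\ominus w$ of size $\alpha(G')$, contradicting $\alpha(G'\ominus w)=\alpha(G)-1$), so the face lies in $\{x_w=0\}$ and $F$ is not a facet --- a contradiction.

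The main obstacle is the remaining case of the converse, where $G_S$ properly contains every minimally imperfect subgraph and $F$ may carry coefficients larger than one. There one cannot read off a rank facet directly, and I expect the decisive step to be showing that an inclusion-minimal bad facet must in fact have minimally imperfect support, thereby reducing to the case already settled. This reduction --- controlling full-support facets with non-$0/1$ coefficients --- is the technically demanding part, and it is where the rigidity of minimally imperfect graphs furnished by Lehman's theorem, together with Padberg's structural results, would have to be invoked; the forward implication and the minimally-imperfect-support case of the converse are, by contrast, straightforward consequences of lifting and of the destruction identity $\alpha(G'\ominus v)=\alpha(G')-1$.
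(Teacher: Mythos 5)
Your forward implication is sound, and it is worth noting for calibration that the paper itself offers no proof of this theorem: it is quoted from \cite{Shepherd-1994}, where Shepherd proved that the stated characterization follows from the Strong Perfect Graph Conjecture, now Theorem \cite{CRST2006}; so the substance of the citation is exactly the part a from-scratch proof must supply. Your lifting argument for the easy direction is correct: sequential lifting (Padberg) of the rank facet of a minimally imperfect $G'\subseteq G$, lifting an outside node $w$ first with coefficient $\alpha(G')-\alpha(G'\ominus w)$, produces a facet of $\stab(G)$ with unit coefficients on $V(G')$ and right-hand side $\alpha(G')$; near-perfection forces this facet to be the full-rank constraint, which yields both properties, and the destruction identity $\alpha(G'\ominus v)=\alpha(G')-1$ for $v\in V(G')$ handles the inside nodes. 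The minimally-imperfect-support case of your converse is also correct.

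The genuine gap is the remaining case of the converse, and it is not a technicality to be patched --- it is the whole theorem. Your proposed repair, namely that an inclusion-minimal bad facet must have minimally imperfect support, is false. Take $G=W_5$, the $5$-wheel with hub $h$: by the join structure of facets (stated in Section~\ref{operation} of the paper), the facets of $\stab(W_5)$ are the nonnegativity constraints, the triangle (clique) constraints, and the single inequality $2x_h+\sum_{i=1}^{5}x_i\le 2$, while the full-rank constraint $x_h+\sum_{i=1}^{5}x_i\le 2$ is valid but dominated (add $x_h\ge 0$), hence not a facet. So $W_5$ is not near-perfect, its unique bad facet is trivially inclusion-minimal, and its support is all of $W_5$, which properly contains $C_5$ and is not minimally imperfect. (Consistently with the theorem, property~(2) fails here: $\alpha(C_5\ominus h)=0\neq\alpha(W_5)-1$.) Thus the reduction cannot be an unconditional structural fact about bad facets; at best one could try to prove it under the standing assumption that properties~(1) and~(2) hold for every minimally imperfect subgraph, but that conditional statement is precisely the hard content of the result --- it requires controlling \emph{all} facets of $\stab(G)$, with arbitrary coefficients, from hypotheses about odd holes and odd antiholes alone --- and neither Lehman's theorem nor Padberg's results deliver it in one step. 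This missing case is exactly what \cite{Shepherd-1994} (together with the SPGT) supplies, so your proposal, while correct in the forward direction and in the special case, does not amount to a proof of the statement.
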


As a generalization of near-perfect graphs we consider the family of \emph{$\fs$-perfect} (full-support perfect) graphs. A graph is $\fs$-perfect if its stable set polytope is defined only by non-negativity constraints, clique constraints and at most one single full-support inequality.
Then, every node induced subgraph of an $\fs$-perfect graph is $\fs$-perfect.
We say that a graph is \emph{properly} $\fs$-perfect if it is an imperfect $\fs$-perfect graph.
Clearly, near-perfect graphs are $\fs$-perfect but we will see that  $\fs$-perfect graphs define a strict superclass of near-perfect graphs.

\subsection{The ${\LS}_+$ operator}

In this section, we present the definition of the ${\LS}_+$-operator \cite{LS91} and some of its well-known properties when it is applied to relaxations of the stable set polytope of a graph.
In order to do so, we need some more notation and definitions.

We denote by $\e_0, \e_1, \dots, \e_n$ the vectors of the canonical basis of $\R^{n+1}$ where the first coordinate is indexed zero.
Given a convex set $K$ in $[0,1]^n$,
 \[
 \cone(K) := \left\{\left(\begin{array}{c} x_0\\ x \end{array}
 \right) \in \R^{n+1}: x=x_0 y ; y\in K; x_0 \geq 0 \right\}.
 \]
Let $\S^{n}$ be the space of $n\times n$ symmetric matrices with real entries. If $Y\in \S^n$, $\diag (Y)$ denotes the vector whose $i$-th entry is $Y_{ii}$, for every $i \in \{1,\ldots,n\}$. Let

 \begin{eqnarray*}
 M(K) := \left\{Y \in \S^{n+1}: \right.
 & & Y\e_0 = \diag (Y),\\
 & & Y\e_i \in \cone(K), \\
 & & \left. Y (\e_0 - \e_i) \in \cone(K),
 \forall i \in \{1,\dots,n \} \right\}.
 \end{eqnarray*}

Projecting this polyhedral lifting back to the space $\R^n$  results in
 \[
 \LS(K) := \set{ x \in [0,1]^n : \left(\begin{array}{c} 1\\ x \end{array}
 \right)= Y \e_0, \mbox{ for some } Y \in M(K)}.
 \]
Clearly, $\LS(K)$ is a relaxation of the convex hull of integer solutions in $K$, i.e., $\conv(K\cap \{0,1\}^n)$.

Let $\S_+^{n}$ be the space of $n\times n$ symmetric
positive semidefinite (PSD) matrices with real entries.  Then
\[
M_+(K) := M(K) \cap \S_+^{n+1}
\]
yields the tighter relaxation
\[
 {\LS}_+(K) := \set{ x \in [0,1]^{n} : \left(\begin{array}{c} 1\\ x \end{array}
 \right)= Y \e_0, \mbox{ for some } Y \in M_+(K)}.
 \]

If we let $\LS^0(K):=K$, then the successive applications of the $\LS$ operator yield $\LS^k(K)=\LS(\LS^{k-1}(K))$ for every $k\geq 1$. Similarly for the ${\LS}_+$ operator.  Lov\'{a}sz and Schrijver proved that $\LS^n(K)={\LS}_+^n(K)=\conv(K\cap \{0,1\}^n)$.

In this paper, we focus on the behaviour of the  ${\LS}_+$ operator on the edge relaxation of the stable set polytope.
In order to simplify the notation we write ${\LS}_+(G)$ instead of ${\LS}_+(\fra(G))$ and similarly for the successive iterations of it.
It is known \cite{LS91} that, for every graph $G$,
\[
\stab(G) \subseteq   {\LS}_+(G)\subseteq   \th(G)\subseteq   \clique(G)
\]
and
\[
\stab(G) \subseteq   {\LS}_+(G)\subseteq   \NB(G).
\]

\subsection{${\LS}_+$-perfect graphs}
Recall that a graph $G$ is ${\LS}_+$-perfect if ${\LS}_+(G) = \stab(G)$. A graph that is not ${\LS}_+$-perfect is called \emph{${\LS}_+$-imperfect}.

Using the results in \cite{EMN} and \cite{LipTun2003}
we know that all imperfect graphs with at most $6$ nodes
are ${\LS}_+$-perfect,
except for the two properly near-perfect graphs depicted in Figure \ref{grafos6}, denoted by $G_{LT}$ and $G_{EMN}$, respectively.
These graphs prominently figure into our current work as the building blocks of an interesting family of graphs.

\begin{figure}[h]
\begin{center}
\includegraphics{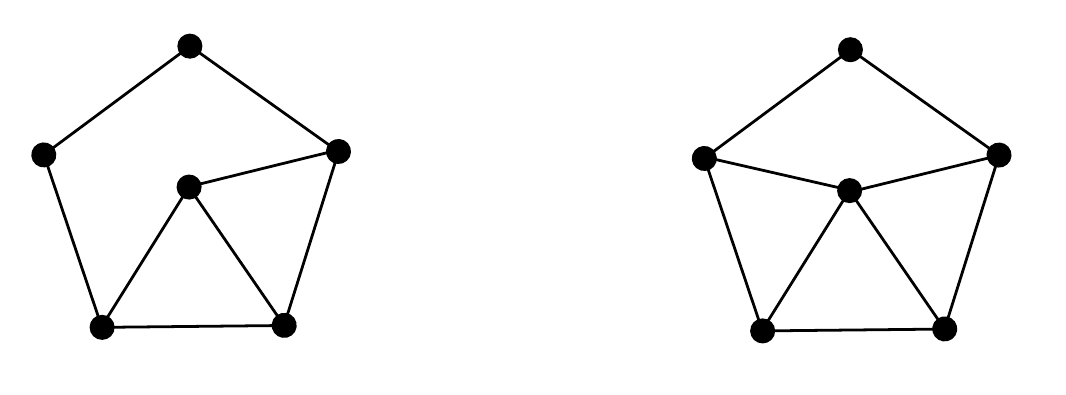}
\caption{The graphs $G_{LT}$ and $G_{EMN}$.}
\label{grafos6}
\end{center}
\end{figure}

From the results in \cite{LS91}, it can be proved that every subgraph of an ${\LS}_+$-perfect graph is also ${\LS}_+$-perfect. Moreover, every graph for which $\stab(G)=\NB(G)$ is ${\LS}_+$-perfect. In particular, perfect and near-bipartite graphs are ${\LS}_+$-perfect.  Recall that in Conjecture~\ref{conj1} we wonder whether the only ${\LS}_+$-perfect graphs are those graphs $G$ for which $\stab(G)=\NB(G)$.
Obviously, $G$ is ${\LS}_+$-perfect if and only if every facet defining inequality of $\stab(G)$ is valid for ${\LS}_+(G)$. In this context, we have the Lemma 1.5 in \cite{LS91} that can be rewritten in the following way:

\begin{theorem}
Let $ax\leq \beta$ be a full-support valid inequality for $\stab(G)$. If, for every $v\in V(G)$, $\sum_{w\in V(G-v)} ax\leq \beta- a_v$ is valid for $\fra(G\ominus v)$ then $ax\leq \beta$ is valid for ${\LS}_+(G)$.
\end{theorem}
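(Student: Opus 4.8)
The plan is to verify validity directly in the semidefinite lifting. I would take an arbitrary $\bar x \in \LS_+(G)$ with a certificate $Y \in M_+(\fra(G))$ (so $Y \in \S_+^{n+1}$, $n=|V(G)|$) satisfying $Y\e_0 = \left(\begin{smallmatrix}1\\\bar x\end{smallmatrix}\right)$, and show $a\bar x \le \beta$. Throughout I index $V(G)=\{1,\dots,n\}$ and extend $a$ to $\R^{n+1}$ by $a_0:=0$; the defining relation $Y\e_0 = \diag(Y)$ gives $Y_{ii}=Y_{0i}=\bar x_i$ for $i\ge 1$ and $Y_{00}=1$. I will use that $a \ge 0$ and $\beta > 0$: these are the only inequalities that need checking for $\LS_+$-perfection are the facet-defining ones, and a full-support facet of the full-dimensional, down-monotone polytope $\stab(G)$ has strictly positive coefficients and positive right-hand side.

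First I would exploit the columns $Y\e_i$, which lie in $\cone(\fra(G))$. Fix $i\in\{1,\dots,n\}$. If $\bar x_i = 0$, the diagonal entry $Y_{ii}$ vanishes and positive semidefiniteness forces the entire $i$-th row and column to be zero, so $\sum_w a_w Y_{wi} = 0 = \beta Y_{0i}$ trivially. If $\bar x_i > 0$, then $Y\e_i$ has zeroth entry $Y_{0i}=\bar x_i>0$, so the vector $y^i$ obtained from $\tfrac{1}{\bar x_i}Y\e_i$ by deleting the zeroth coordinate lies in $\fra(G)$, and $y^i_i = Y_{ii}/\bar x_i = 1$. Every edge constraint $y^i_i + y^i_w \le 1$ then forces $y^i_w = 0$ for all $w\in\Gamma(i)$, so the restriction of $y^i$ to $V(G\ominus i)=V(G)\setminus\Gamma[i]$ is a point of $\fra(G\ominus i)$. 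Applying the hypothesis to this restricted point gives $\sum_{w\in V(G\ominus i)} a_w y^i_w \le \beta - a_i$; adding the contribution $a_i y^i_i = a_i$ of node $i$ and the vanishing contributions on $\Gamma(i)$ yields $a\,y^i \le \beta$, which after clearing the denominator reads $(Ya)_i \le \beta (Y\e_0)_i$.

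It remains to propagate this family of inequalities, valid in coordinates $1,\dots,n$, to the zeroth coordinate, since $a\bar x = (Ya)_0$ and the target $a\bar x \le \beta$ is exactly $(Ya)_0 \le \beta(Y\e_0)_0$. Writing $d:=Y(a-\beta\e_0)$, the previous step says $d_i \le 0$ for all $i\ge 1$, while $d_0 = a\bar x - \beta$ is the quantity whose sign I must control. Here I would invoke positive semidefiniteness a second time, testing $Y$ against the tailored vector $a-\beta\e_0$: expanding $(a-\beta\e_0)\tr d = (a-\beta\e_0)\tr Y (a-\beta\e_0) \ge 0$ and using $a_0=0$ gives $-\beta d_0 + \sum_{i\ge 1} a_i d_i \ge 0$. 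Since $a_i \ge 0$ and $d_i \le 0$ for $i\ge 1$, the sum is nonpositive, forcing $-\beta d_0 \ge 0$ and hence $d_0 \le 0$ because $\beta>0$. This is precisely $a\bar x \le \beta$.

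The main obstacle is the last step: the column argument by itself controls only coordinates $1,\dots,n$, and a naive aggregation of those inequalities delivers the bound in the wrong direction. The decisive idea is the second use of positive semidefiniteness with the test vector $a-\beta\e_0$, which transfers the sign information to the zeroth coordinate; the nonnegativity of $a$ and positivity of $\beta$ (guaranteed by full support together with the down-monotonicity of $\stab(G)$) are exactly what make this transfer valid. The degenerate columns with $\bar x_i = 0$ must be dispatched separately, but this is immediate from the zero-diagonal property of positive semidefinite matrices.
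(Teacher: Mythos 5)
Your proof is correct, and it is essentially \emph{the} proof of this statement: the paper does not prove the theorem itself but presents it as a rewriting of Lemma~1.5 of \cite{LS91}, and the argument given there is exactly your two-step scheme. First, the columns $Y\e_i\in\cone(\fra(G))$, whose $0$-th and $i$-th entries coincide, turn the destruction hypothesis into $\e_i\tr Y(a-\beta\e_0)\le 0$ for every $i\ge 1$; second, positive semidefiniteness tested against the vector $a-\beta\e_0$ transfers this sign information to the $0$-th coordinate. Your separate treatment of the degenerate columns ($\bar x_i=0$ forces the whole $i$-th row to vanish) is also correct; in the cone formulation it is automatic, since a vector of $\cone(\fra(G))$ with vanishing $0$-th entry is the zero vector.

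One caveat: your proof uses $a\ge 0$ and $\beta>0$, which are not in the statement as written --- in this paper \emph{full support} only means that every coefficient is nonzero --- and your justification for assuming them (only facet-defining inequalities matter for $\LS_+$-perfection) concerns the intended application rather than the claim itself. The hypotheses do not force nonnegativity: on $K_2$ the inequality $x_1-x_2\le 1$ is valid for $\stab(K_2)$, has full support, and satisfies the destruction condition, yet has a negative coefficient. This is a gap in the letter rather than in substance: Lemma~1.5 of \cite{LS91} does carry the hypothesis $a\ge 0$ (so the paper's restatement is the imprecise one), and the general case reduces to the one you proved. Indeed, let $P=\{v\in V(G): a_v>0\}$ and let $G_P$ be the subgraph of $G$ induced by the nodes in $P$. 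The destruction hypothesis restricted to $G_P$ holds (set the coordinates outside $P$ to zero), the hypothesis at any $v\in P$ evaluated at the origin gives $\beta\ge a_v>0$, every point of $\LS_+(G)$ projects onto a point of $\LS_+(G_P)$, and the discarded terms $a_vx_v$ with $v\notin P$ are nonpositive on $[0,1]^{V(G)}$; if instead $P=\emptyset$, the inequality is trivially valid on $[0,1]^{V(G)}$ because $\beta\ge 0$ by validity at the origin.
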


In \cite{BENT1} we proved that the converse of the previous result is not true. However, it is plausible that the converse holds when the full-support valid inequality is a facet defining inequality of $\stab(G)$. Actually, the latter assertion would be a consequence of the validity of Conjecture~\ref{conj1}. Thus, we present an equivalent formulation of it in the following.

\begin{conjecture} \label{conjecture}
If a graph is ${\LS}_+$-perfect and its stable set polytope has a full-support facet defining inequality, then the graph is near-bipartite.
\end{conjecture}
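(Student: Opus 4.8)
The plan is to route the argument through the \emph{conditioning} behaviour of $\LS_+$, first establishing the converse of the Lov\'asz--Schrijver criterion stated just above for full-support \emph{facets}, and then forcing bipartiteness of each destruction $G\ominus v$. Let $ax\le\beta$ be a full-support facet of $\stab(G)$. Since $G$ is $\LS_+$-perfect, $\LS_+(G)=\stab(G)$, so $ax\le\beta$ is valid for $\LS_+(G)$. My first observation is that full support supplies, for every $v\in V(G)$, a tight stable set $S$ with $v\in S$ and $a\chi^{S}=\beta$; otherwise the whole face $\{x:ax=\beta\}\cap\stab(G)$ would lie in $\{x_v=0\}$ and would coincide with the nonnegativity facet $x_v\ge 0$, forcing $a$ to have support $\{v\}$, contrary to full support. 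Hence $S\setminus\{v\}$ is a stable set of $G\ominus v$, the \emph{$v$-conditioned} inequality $\sum_{w\in V(G\ominus v)}a_w x_w\le\beta-a_v$ is tight at $\chi^{S\setminus\{v\}}$, and it is automatically valid over $\stab(G\ominus v)$, since any stable set $T$ of $G\ominus v$ gives a stable set $T\cup\{v\}$ of $G$.

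The crux is to upgrade this to validity over the \emph{edge relaxation} $\fra(G\ominus v)$: by the preceding theorem this is exactly the property that certifies $ax\le\beta$ as a genuine $\LS_+$-facet, and it is the hinge of near-bipartiteness, because $\fra(G\ominus v)=\stab(G\ominus v)$ precisely when $G\ominus v$ is bipartite. I would argue by contradiction, assuming some $\bar x\in\fra(G\ominus v)$ violates the $v$-conditioned inequality. The naive attempt---set $x_v=1$, extend by $\bar x$ on $G\ominus v$ and by $0$ on $\Gamma(v)$---does \emph{not} land in $\LS_+(G)$, which is exactly why the converse fails for arbitrary valid inequalities (as the authors remark after the theorem): fixing $x_v=1$ collapses any certificate to one for $\LS_+(G\ominus v)$, in general a proper subset of $\fra(G\ominus v)$. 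The facet hypothesis must therefore enter through a better certificate: I would assemble $Y\in M_+(\fra(G))$ as a positive-semidefinite combination of the rank-one blocks $\binom{1}{\chi^{S}}\binom{1}{\chi^{S}}\tr$ over the affinely independent tight stable sets $S$ of the facet, perturbed by a block carrying the fractional data $\bar x$ with a weight $x_v$ strictly below $1$, tuned so that the slice $Y\e_v$ reproduces $\bar x$ on $V(G\ominus v)$ while the diagonal, off-diagonal and cone constraints defining $M_+$ are all met. Should such $Y$ exist, its projection would lie in $\LS_+(G)=\stab(G)$ yet violate $ax\le\beta$, the desired contradiction.

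Granting that every $v$-conditioned inequality is valid over $\fra(G\ominus v)$, the final step is to deduce that each $G\ominus v$ is bipartite, i.e.\ that $G$ is near-bipartite. Here I would argue inductively on $|V(G)|$, exploiting that every node-induced subgraph of the $\LS_+$-perfect graph $G$ is again $\LS_+$-perfect: if some $G\ominus v$ contained an odd cycle $C$, I would combine $C$ with $v$ and a minimal set of its neighbours to isolate an induced $\LS_+$-imperfect subgraph, which on few nodes is forced to be one of $G_{LT}$ or $G_{EMN}$---the only small properly $\LS_+$-imperfect graphs---contradicting $\LS_+$-perfection of $G$ and so ruling out odd cycles in $G\ominus v$.

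The heart of the argument, and the step I expect to be the main obstacle, is the certificate construction of the second paragraph: producing an explicit $Y\in M_+(\fra(G))$ from the tight stable sets of a full-support facet together with a violating fractional point. Because the analogue for arbitrary valid inequalities is \emph{false}, any correct construction must genuinely use facetness---the abundance and affine independence of the tight stable sets---and controlling positive semidefiniteness of the assembled $Y$ while simultaneously matching every conditioned slice is precisely where the difficulty, and the presently open status of the statement, resides.
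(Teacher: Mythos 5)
You are attempting to prove a statement that the paper itself leaves open: Conjecture~\ref{conjecture} is presented there as an equivalent formulation of Conjecture~\ref{conj1}, and the paper's actual contribution is to verify it only on the subclass of $\fs$-perfect graphs (Theorem~\ref{main}), by a structural route entirely different from yours: classify the minimal candidates in $\F^k$ (Theorem~\ref{alfa2}, Lemma~\ref{nofs}, Theorem~\ref{reduccion}), exhibit an explicit PSD certificate $Y(k,\gamma)$ proving ${\LS}_+$-imperfection of $H^k$ (Theorem~\ref{oddanti}), and propagate imperfection through odd subdivision, $1$-stretching and clique subdivision (Theorems~\ref{Thm2.10} and \ref{cliquesubdivision}). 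Your proposal does not close the general statement, and you effectively concede this: the entire argument hinges on the paragraph-two construction of $Y\in M_+(\fra(G))$ from the roots of the facet together with a violating point $\bar x\in\fra(G\ominus v)$, and no such construction is given --- no entries, no verification of the conditions defining $M_+$, no positive semidefiniteness argument, only the conditional ``should such $Y$ exist.'' That step is not a technical lacuna to be filled in later; it \emph{is} the open problem. It amounts to the converse of Lemma 1.5 of Lov\'asz--Schrijver for full-support facet defining inequalities, which the paper explicitly notes is false for general valid inequalities and whose validity for facets is tied to Conjecture~\ref{conj1} itself. (Your preliminary observation --- every node of $G$ lies in a root of a full-support facet --- is correct, but it is the easy part.)

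Even granting paragraph two, the remainder fails. Validity of the single $v$-conditioned inequality over $\fra(G\ominus v)$ does not imply that $G\ominus v$ is bipartite: bipartiteness is equivalent to $\fra(G\ominus v)=\stab(G\ominus v)$, i.e.\ to validity over the edge relaxation of \emph{every} facet of the destruction, and your conditioned inequality says nothing about the others. Your paragraph three tries to supply bipartiteness by a separate induction, but its pivot --- that an odd cycle $C\subseteq G\ominus v$ together with $v$ and a few neighbours isolates an induced ${\LS}_+$-imperfect subgraph ``on few nodes\ldots forced to be $G_{LT}$ or $G_{EMN}$'' --- is unjustified. The known classification (all imperfect graphs on at most $6$ nodes are ${\LS}_+$-perfect except $G_{LT}$ and $G_{EMN}$) applies only up to six nodes, whereas $C$ may be arbitrarily long, so the subgraph you isolate lies in a family like $\F^k$ of unbounded size; proving ${\LS}_+$-imperfection for such graphs is exactly the hard technical content of Sections~\ref{Fk}--\ref{lodelN+}, and even there it is obtained only under the $\fs$-perfect hypothesis, which Theorem~\ref{alfa2} uses to pin down the unique full-support facet. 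Without some hypothesis of that kind controlling the facet structure, the imperfection claims your induction needs are unproven, and the argument collapses to a restatement of the conjecture.
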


\subsection{Graph operations}\label{operation}

In this section, we present some properties of four graphs operations that will be used throughout this paper.
Firstly, let us recall the \emph{complete join} of graphs.
Given two graphs $G_1$ and $G_2$ such that $V(G_1)\cap V(G_2)=\emptyset$, we say that a graph $G$ is obtained after the complete join of $G_1$ and $G_2$, denoted $G=G_1 \vee G_2$, if $V(G)=V(G_1) \cup V(G_2)$ and $E(G)= E(G_1) \cup E(G_2) \cup \{vw: v\in V(G_1)\text{ and }w\in V(G_2)\}$.
A simple example of a join is the $n$-\emph{wheel} $W_n$, for $n\geq 2$  which is the complete join of the trivial graph with one node and the $n$-cycle.

It is known that every facet defining inequality of $\STAB(G_1 \vee G_2)$ can be obtained by the cartesian product of facets of $\STAB(G_1)$ and $\STAB(G_2)$.
Hence, odd wheels are $\fs$-perfect but not near-perfect graphs. Moreover, it is easy to see
\begin{remark}\label{joinclique}
The complete join of two graphs is properly $\fs$-perfect if and only if one of them is a complete graph and the other one is a properly $\fs$-perfect graph.
\end{remark}

Also, it is known that
\begin{remark}\label{necesaria}
The complete join of two graphs is ${\LS}_+$-perfect if and only if both of them are ${\LS}_+$-perfect graphs.
\end{remark}

Now, let us recall the \emph{odd-subdivision of an edge} \cite{Wol76}.
Given a graph $G=(V,E)$ and $e\in E$, we say that the graph $G'$ is obtained from $G$ after the odd-subdivision of the edge $e$ if it is replaced in $G$ by a path of odd length.
Next, we consider the $k$-\emph{stretching of a node} which is a generalization of the \emph{type (i) stretching operation} defined in \cite{LipTun2003}.
Let $v$ be a node of $G$ with neighborhood $\Gamma(v)$ and let $A_1$ and $A_2$ be nonempty
subsets of $\Gamma(v)$ such that $A_1 \cup A_2 = \Gamma(v)$, and $A_1\cap A_2$ is a clique of size $k$.
A $k$-stretching of the node
$v$ is obtained as follows:
remove $v$, introduce three nodes instead, called $v_1$, $v_2$ and $u$, and add an edge
between
$v_i$ and every node in $\{u\} \cup A_i$ for $i \in \{1, 2\}$. Figure \ref{1stretching} illustrates the case $k=1$.

\begin{figure}[h]
\begin{center}
\includegraphics{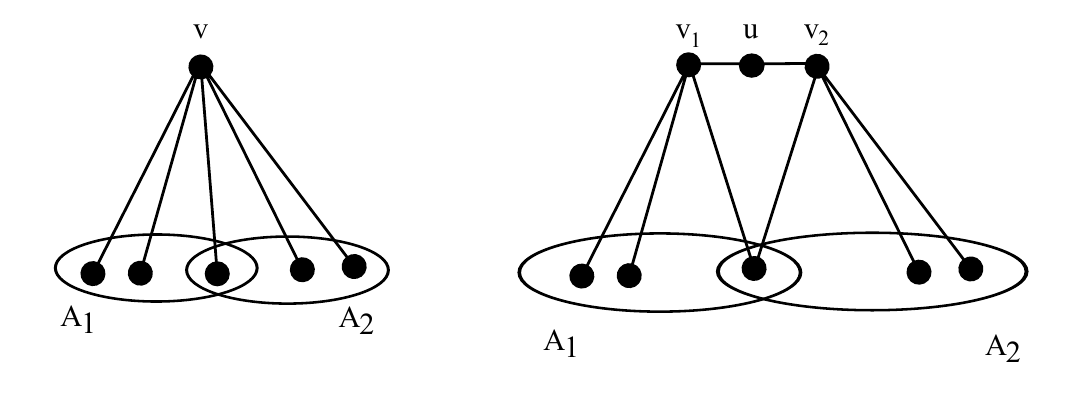}
\caption{A $1$-stretching operation on node $v$.}
\label{1stretching}
\end{center}
\end{figure}

The type (i) stretching operation presented in \cite{LipTun2003} corresponds to the case $k=0$.

We also consider another graph operation defined in \cite{AEF}.
Given the graph $G$ with nodes $\{1, \dots,n\}$ and a clique $K = \{v_1,\dots, v_s\}$ in $G$ (not necessarily maximal),  the \emph{clique subdivision of the edge} $v_1 v_2$ \emph{in } $K$ is defined as follows: delete the edge $v_1 v_2$ from $G$, add the nodes $v_{n+1}$ and $v_{n+2}$ together
with the edges $v_1 v_{n+1}$, $v_{n+1} v_{n+2}$, $v_{n+2} v_2$ and $v_{n+i} v_j$ for $i \in \{1, 2\}$ and $j \in  \{3,\dots, s\}$.
Figure \ref{clique_subdivision} illustrates the clique subdivision of the edge $v_1 v_2$
in the clique $K = \{v_1, v_2, v_3\}$.
\begin{figure}
\centering
\includegraphics{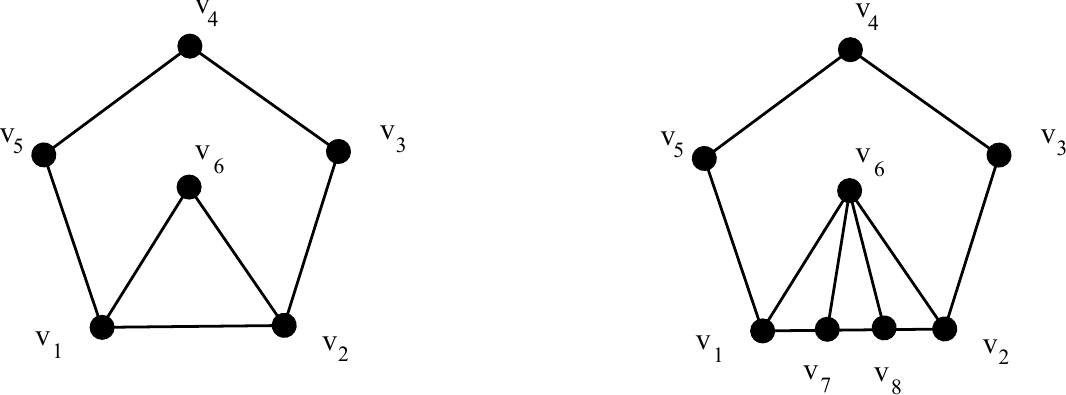}
\caption{The graph $G'$ is obtained from $G$ after the clique subdivision of edge $v_1v_2$.}
\label{clique_subdivision}
\end{figure}
Notice that if the clique is an edge this operation reduces to the
odd-subdivision of it.

\section{$\fs$-perfection on graphs in $\F^k$}\label{Fk}

In order to prove Conjecture \ref{conjecture} on $\fs$-perfect graphs, we first consider a minimal structure that a graph must have in order to be properly $\fs$-perfect and ${\LS}_+$-imperfect.
This leads us to define $\F^k$ for every $k\geq 2$ as the family of graphs having node set  $\{0,1,\dots,2k+1\}$ and such that $G-0$ is a minimally imperfect graph with $1\leq \delta_G(0)\leq 2k$.
Let us consider necessary conditions for a graph in $\F^k$ to be $\fs$-perfect.

\begin{theorem} \cite{gra} \label{alfa2}
Let $G\in \F^k$ be an $\fs$-perfect graph. Then, the following conditions hold:
\begin{enumerate}
\item $\alpha(G)=\alpha(G-0)$;
\item $1\leq \alpha(G\ominus 0) \leq \alpha(G)-1$;
\item the full-support facet defining inequality of $\stab(G)$ is the inequality
\begin{equation*}
\left(\alpha(G)-\alpha(G \ominus 0)\right)x_0+\sum_{i=1}^{2k+1} x_i\leq \alpha(G).
\end{equation*}
\end{enumerate}
\end{theorem}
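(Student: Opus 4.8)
The plan is to pin down the unique full-support facet of $\stab(G)$ by restricting it to the minimally imperfect subgraph $G-0$, and then to read off the three conditions directly from its coefficients. First I would record the starting structure: since $G-0$ is minimally imperfect, $G$ is imperfect, and because $G$ is $\fs$-perfect it is in fact \emph{properly} $\fs$-perfect, so $\stab(G)$ is described by non-negativity constraints, clique constraints, and \emph{exactly one} full-support facet, say $ax\le\beta$ with $a_i>0$ for all $i$. By the Strong Perfect Graph Theorem, $G-0$ is either $C_{2k+1}$ or $\overline{C_{2k+1}}$, and in both cases the unique full-support (non-clique) facet of $\stab(G-0)$ is the full-rank constraint $\sum_{i=1}^{2k+1}x_i\le\alpha(G-0)$.

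Next I would determine the coefficients $a_1,\dots,a_{2k+1}$ and the right-hand side $\beta$. Since $\stab(G)$ is full-dimensional, the face $\stab(G)\cap\{x_0=0\}$ equals $\stab(G-0)$ and is itself a facet of $\stab(G)$ (the one coming from $x_0\ge0$). The full-rank facet of $\stab(G-0)$ is then a facet of this facet; by the standard ``facet of a facet'' lemma it is the intersection of $\stab(G-0)$ with a second facet of $\stab(G)$. That second facet cannot be a clique facet or a single non-negativity facet, since each of those restricts to $\{x_0=0\}$ as a clique or non-negativity facet of $\stab(G-0)$, which is distinct from the full-rank facet (here I use that $G-0$ is imperfect, so the full-rank facet is genuinely new). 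Hence it must be the full-support facet, and so the restriction of $ax\le\beta$ to $x_0=0$ is a positive multiple of $\sum_{i\ge1}x_i\le\alpha(G-0)$. After normalizing, this forces $a_1=\cdots=a_{2k+1}=1$ and $\beta=\alpha(G-0)$.

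Then I would compute $a_0$. Because $a_0>0$, the full-support facet is not the facet $\{x_0=0\}$, so it must contain a vertex with $x_0=1$, that is, the characteristic vector of a stable set $S\ni 0$ with $a_0+|S\setminus\{0\}|=\alpha(G-0)$. Noting that $S\setminus\{0\}$ ranges over the stable sets of $G\ominus 0$, and that validity of $ax\le\beta$ applied to $\{0\}$ together with a maximum stable set of $G\ominus 0$ gives $a_0+\alpha(G\ominus 0)\le\alpha(G-0)$, the two bounds combine to yield $a_0=\alpha(G-0)-\alpha(G\ominus 0)$.

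Finally I would extract the three statements. Every maximum stable set either avoids $0$ or contains it, so $\alpha(G)=\max\{\alpha(G-0),\,\alpha(G\ominus 0)+1\}$; since $a_0>0$ gives $\alpha(G\ominus 0)<\alpha(G-0)$, the maximum is $\alpha(G-0)$, proving (1). Statement (2) follows because $a_0=\alpha(G)-\alpha(G\ominus 0)>0$ gives $\alpha(G\ominus 0)\le\alpha(G)-1$, while $\delta_G(0)\le 2k$ guarantees $V(G\ominus 0)\ne\emptyset$ and hence $\alpha(G\ominus 0)\ge 1$. For (3), substituting $\beta=\alpha(G-0)=\alpha(G)$ and $a_0=\alpha(G)-\alpha(G\ominus 0)$ into $ax\le\beta$ gives exactly the stated inequality. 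The main obstacle is the second paragraph: correctly certifying that the full-support facet of $\stab(G)$ restricts to the full-rank facet of $G-0$ and not to some clique facet, which is what forces the equal coefficients on the nodes of $G-0$, and which relies crucially on the uniqueness of the full-support facet in minimally imperfect graphs.
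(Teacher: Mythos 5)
Your proposal is correct, and its skeleton is the same as the paper's: restrict the unique full-support facet to the face $\{x_0=0\}$, identify that restriction with the full-rank facet of the minimally imperfect graph $G-0$, determine $a_0$ from a root with $\tilde{x}_0=1$ together with validity, and then read off conditions (1)--(3). The one place you genuinely diverge is the identification step. The paper gets it in one line: $\fs$-perfection means $\stab(G)$ is cut out by non-negativity, clique constraints and the single inequality $ax\le\beta$, so setting $x_0=0$ yields $\stab(G-0)=\left\{x\in\clique(G-0):\sum_{i\ge 1}a_ix_i\le\beta\right\}$; since $G-0$ is imperfect and minimally imperfect graphs are near-perfect (Padberg), the added inequality must define the rank facet and is therefore a positive multiple of the rank constraint. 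You instead invoke the facet-of-a-facet lemma and exclude the other facets of $\stab(G)$ case by case; this works, but it is heavier machinery, and your phrasing that a clique facet of $G$ ``restricts to a clique facet of $\stab(G-0)$'' is not literally true (when the clique contains node $0$, the restriction may be a lower-dimensional face, not a facet); the case is still excluded because the rank facet, being a maximal proper face with full support, cannot be contained in any clique or non-negativity face, so this is a repairable imprecision rather than a gap. Two further minor differences: you appeal to the Strong Perfect Graph Theorem where Padberg's near-perfection of minimally imperfect graphs suffices, and you prove (1) via $\alpha(G)=\max\{\alpha(G-0),\alpha(G\ominus 0)+1\}$ rather than the paper's substitution of a maximum stable set into the normalized facet inequality; both arguments are equally valid, yours being marginally more elementary.
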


\begin{proof}
Let
\begin{equation}\label{full}
\sum_{i=0}^{2k+1}a_i x_i\leq \beta
\end{equation}
be the full-support facet defining inequality of $\STAB(G)$. We may assume that all coefficients $a_i$,  $i \in \{0,\dots,2k+1\}$ are positive integers.
Clearly,
\[
\STAB(G-0)=\left\{x\in \QSTAB(G-0): \sum_{i=1}^{2k+1} a_i x_i\leq \beta\right\}.
\]
Since $G-0$ is a minimally imperfect graph, the inequality $\sum_{i=1}^{2k+1} a_i x_i\leq \beta$ is a
positive multiple of its rank constraint, i.e., there exists a positive integer number $p$ such that
$a_i= p$ for $i \in \{1,\dots, 2k+1\}$ and $\beta= p \ \alpha(G-0).$
Therefore, \eqref{full} has the form
\begin{equation} \label{facet}
a_0 x_0+ p\sum_{i=1}^{2k+1} x_i\leq p \ \alpha(G-0).
\end{equation}
Observe that there is at least one root $\tilde{x}$ of \eqref{facet} such that $\tilde{x}_0=1$.
Clearly, $\tilde{x}$ is the incidence vector of a stable set $S$ of $G$ such that $S-\{0\}$ is a maximum
stable set of $G\ominus 0$. Then,
\[
a_0= p \ \left(\alpha(G-0)- \alpha(G\ominus 0)\right).\]

Since $a_0\geq 1$, we have
$\alpha(G\ominus 0)\leq \alpha(G-0)-1$. Moreover, since $\delta_G(0)\leq 2k$, then $\alpha(G\ominus 0)\geq 1$.
Therefore, the inequality \eqref{facet} becomes
\begin{equation} \label{facet2}
(\alpha(G-0)- \alpha(G\ominus 0)) x_0+ \sum_{i=1}^{2k+1} x_i\leq \alpha(G-0).
\end{equation}
To complete the proof, we only need to show that $\alpha(G)=\alpha(G-0)$.
Let $\bar{x}$ be the incidence vector of a maximum stable set in $G$, then
\begin{equation}\label{stableG}
\alpha(G)=\bar{x}_0+\sum_{i=1}^{2k+1}\bar{x}_i.
\end{equation}
Moreover, since $\alpha(G-0)- \alpha(G\ominus 0)\geq 1$ and $\bar{x}$ satisfies  \eqref{facet2} we have
\[
\alpha(G)=\bar{x}_0+\sum_{i=1}^{2k+1}\bar{x}_i\leq (\alpha(G-0)- \alpha(G\ominus 0)) \bar{x}_0+ \sum_{i=1}^{2k+1} \bar{x}_i\leq \alpha(G-0).
\]
We have that $\alpha(G)\leq \alpha(G-0)$, implying $\alpha(G)=\alpha(G-0)$.
\end{proof}

As a first consequence of the previous theorem we have:

\begin{corollary} \label{cor_alfa21}
Let  $G\in \F^k$ be such that $\alpha(G)=2$. Then, $G$ is $\fs$-perfect if and only if $G-0=\overline{C_{2k+1}}$ (the complementary graph of $C_{2k+1}$) and $G$ is near-perfect.
\end{corollary}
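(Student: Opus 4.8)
The plan is to derive everything from Theorem~\ref{alfa2}, since the hypotheses $G\in\F^k$ and $\alpha(G)=2$ are precisely what is needed to pin down both $G-0$ and the full-support facet of $\stab(G)$. The \emph{if} direction is immediate: near-perfect graphs are $\fs$-perfect (as already recorded in the excerpt), so assuming $G-0=\overline{C_{2k+1}}$ and $G$ near-perfect forces $G$ to be $\fs$-perfect; in fact the first conjunct is not even used here. Hence all of the content lies in the \emph{only if} direction.

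For the forward direction I would first note that, by the very definition of $\F^k$, the graph $G-0$ is minimally imperfect; by the Strong Perfect Graph Theorem it is therefore either the odd hole $C_{2k+1}$ or the odd antihole $\overline{C_{2k+1}}$, and in particular $G$ is imperfect. Being an imperfect $\fs$-perfect graph, $G$ is \emph{properly} $\fs$-perfect and so possesses exactly one full-support facet, which legitimizes the application of Theorem~\ref{alfa2}. Part~(1) of that theorem gives $\alpha(G-0)=\alpha(G)=2$. Since $\alpha(C_{2k+1})=k$ while $\alpha(\overline{C_{2k+1}})=2$ for all $k\ge 2$, the value $\alpha(G-0)=2$ rules out $C_{2k+1}$ whenever $k\ge 3$ and is consistent only with $\overline{C_{2k+1}}$; for $k=2$ the two graphs coincide ($C_5$ is self-complementary), so in every case $G-0=\overline{C_{2k+1}}$, which is the first assertion.

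It then remains to show that $G$ is near-perfect, and here the key step is to observe that the unique full-support facet collapses to the full-rank constraint. Indeed, Theorem~\ref{alfa2}(2) gives $1\le\alpha(G\ominus 0)\le\alpha(G)-1=1$, whence $\alpha(G\ominus 0)=1$; substituting $\alpha(G)=2$ and $\alpha(G\ominus 0)=1$ into the inequality of Theorem~\ref{alfa2}(3) makes the coefficient $\alpha(G)-\alpha(G\ominus 0)=1$ on $x_0$, so the full-support facet is exactly $\sum_{i=0}^{2k+1}x_i\le 2$, i.e.\ the full-rank constraint $\sum_{u\in V(G)}x_u\le\alpha(G)$. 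Since $G$ is $\fs$-perfect, $\stab(G)$ is described by non-negativity constraints, clique constraints and this single full-support facet; as that facet is the full-rank constraint, the description matches the definition of near-perfection exactly, and $G$ is near-perfect.

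There is no serious technical obstacle once Theorem~\ref{alfa2} is available; the only points requiring care are (a) justifying that a full-support facet genuinely exists before invoking Theorem~\ref{alfa2}, which I handle by noting that $G-0$ minimally imperfect already forces $G$ to be imperfect and hence properly $\fs$-perfect, and (b) the recognition that, under $\alpha(G)=2$, the full-support inequality furnished by Theorem~\ref{alfa2}(3) is \emph{precisely} the full-rank constraint---this is the bridge that converts $\fs$-perfection into near-perfection. The self-complementarity of $C_5$ in the boundary case $k=2$ is a minor bookkeeping remark rather than a real difficulty.
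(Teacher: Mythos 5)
Your proposal is correct and follows essentially the same route as the paper: apply Theorem~\ref{alfa2} to get $\alpha(G-0)=\alpha(G)=2$ (hence $G-0=\overline{C_{2k+1}}$), then use parts (2) and (3) to conclude $\alpha(G\ominus 0)=1$ and $a_0=1$, so the unique full-support facet is the full-rank constraint and $G$ is near-perfect, with the converse following from the fact that near-perfect graphs are $\fs$-perfect. Your extra care in justifying the existence of the full-support facet (via imperfection of $G$) and in handling the self-complementary case $k=2$ only makes explicit steps the paper leaves implicit.
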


\begin{proof}
Assume that $G$ is $\fs$-perfect. By the previous theorem, $\alpha(G-0)= \alpha(G)=2$ and then $G-0=\overline{C_{2k+1}}$. Moreover, $1\leq \alpha(G\ominus 0)\leq \alpha(G)-1=1$ and $a_0=1$. Thus $G$ is near-perfect.
The converse follows from the definition of $\fs$-perfect graphs.
\end{proof}

For $k\geq 2$, let $H^k$ denote the graph in $\F^k$ having $\alpha(H^k)=2$ and $\delta_{H^k}(0)=2k$. Using Theorem \ref{np} it is easy to check that $H^k$ is near-perfect.
Using Corollary \ref{cor_alfa21}, we have the following result:

\begin{corollary} \label{cor_alfa22}
Let $G\in \F^k$ be such that $\alpha(G)=2$. Then, $G$ is $\fs$-perfect if and only if $G$ is a near-perfect edge subgraph of $H^k$.
\end{corollary}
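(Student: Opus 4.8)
The plan is to prove Corollary~\ref{cor_alfa22} by chaining it to the already-established Corollary~\ref{cor_alfa21}, so that the real content reduces to understanding how $H^k$ relates to an arbitrary $G\in\F^k$ with $\alpha(G)=2$ and $G-0=\overline{C_{2k+1}}$. First I would fix notation: any $G\in\F^k$ has node set $\{0,1,\dots,2k+1\}$ with $G-0$ minimally imperfect, and under the hypothesis $\alpha(G)=2$ the earlier corollary forces $G-0=\overline{C_{2k+1}}$. Since all such graphs share the same subgraph $\overline{C_{2k+1}}$ on nodes $\{1,\dots,2k+1\}$, the only freedom left in specifying $G$ is the neighbourhood $\Gamma_G(0)\subseteq\{1,\dots,2k+1\}$ of node $0$. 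Thus the family of candidates is parametrized entirely by the set of edges incident to $0$, and $H^k$ is precisely the member where $0$ attains its maximal degree $\delta_{H^k}(0)=2k$ (the largest value permitted in $\F^k$). This identifies every relevant $G$ as an \emph{edge subgraph} of $H^k$ in the sense defined in Section~\ref{sec:definitions}: $V(G)=V(H^k)$ and $E(G)\subseteq E(H^k)$, the containment holding because deleting edges at $0$ from $H^k$ produces exactly these graphs.

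Next I would prove the two directions. For the forward direction, suppose $G$ is $\fs$-perfect with $\alpha(G)=2$. By Corollary~\ref{cor_alfa21}, $G-0=\overline{C_{2k+1}}$ and $G$ is near-perfect; it remains only to verify $G$ is an edge subgraph of $H^k$, which follows from the structural observation above once I confirm $E(G)\subseteq E(H^k)$. The key point is that $H^k$ has node $0$ adjacent to $2k$ of the $2k+1$ nodes of $\overline{C_{2k+1}}$, and I must check that the degree constraint $1\le\delta_G(0)\le 2k$ from the definition of $\F^k$ guarantees $\Gamma_G(0)$ is contained in some appropriate $\Gamma_{H^k}(0)$ after relabeling. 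For the converse, suppose $G$ is a near-perfect edge subgraph of $H^k$; since near-perfect graphs are $\fs$-perfect by definition, $G$ is immediately $\fs$-perfect, giving the reverse implication essentially for free.

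The main obstacle I anticipate is the forward-direction claim that \emph{every} such $\fs$-perfect $G$ embeds as an edge subgraph of the single fixed graph $H^k$, rather than merely of \emph{some} graph with $0$ of degree $2k$. Because $\overline{C_{2k+1}}$ has a transitive (cyclic) automorphism group, any choice of $\Gamma_G(0)$ of a given cardinality can be rotated to sit inside the specific neighbourhood $\Gamma_{H^k}(0)$; I would make this precise by invoking the vertex-transitivity of $\overline{C_{2k+1}}$ to relabel nodes so that $\Gamma_G(0)\subseteq\Gamma_{H^k}(0)$, and then note that edge subgraphs are considered up to this isomorphism. The subtlety is ensuring that removing edges at $0$ from $H^k$ does not disturb the $\overline{C_{2k+1}}$ structure on the remaining nodes—but this is automatic, since edge deletion at $0$ leaves $H^k-0=\overline{C_{2k+1}}=G-0$ untouched. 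Once this identification is in place, the equivalence follows by combining it with Corollary~\ref{cor_alfa21} and the fact that $H^k$ itself is near-perfect, which was verified via Theorem~\ref{np}.
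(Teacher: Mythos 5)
Your proposal is correct and follows essentially the same route as the paper, which gives no separate proof of Corollary~\ref{cor_alfa22} beyond observing that it follows from Corollary~\ref{cor_alfa21} together with the identification of the graphs $G\in\F^k$ having $G-0=\overline{C_{2k+1}}$ as edge subgraphs of $H^k$ (the converse being the one-line implication that near-perfect graphs are $\fs$-perfect). Your explicit use of the cyclic vertex-transitivity of $\overline{C_{2k+1}}$ to relabel so that $\Gamma_G(0)\subseteq\Gamma_{H^k}(0)$ simply makes precise the ``up to isomorphism'' reading that the paper leaves implicit.
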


Let us now study the structure of $\fs$-perfect graphs $G$ in $\F^k$ with stability number at least 3. From Theorem \ref{alfa2}, $\alpha(G-0)=\alpha(G)\geq 3$ and  $G-0=C_{2k+1}$ with $k\geq 3$.
Recall that in the cycle $C_{2k+1}$ node $i$ is adjacent to node $i+1$ for $i \in \{1,\ldots, 2k\}$ and $2k+1$ is adjacent to node $1$.
Clearly, if $\delta(0)\leq 2$ and $0v \in E(G)$ then $G-v$ is bipartite and, by Remark \ref{Gmenosv}, $G$ is not $\fs$-perfect.

From now on, $3\leq s=\delta(0)\leq 2k$ and $\Gamma(0)=\{v_1,\ldots,v_s\}$ such that $1\leq v_1<\ldots <v_{s}\leq 2k+1$. Observe that for every $i \in \{1,\ldots,s-1\}$, the nodes in $\{w\in V(G-0): v_i\leq w \leq v_{i+1}\}$  together with node $0$ form a chordless cycle $D_i$ in $G$.  Also, $D_s$ in $G$ is the chordless cycle induced by the nodes in $\{w\in V(G-0): v_s\leq w \leq 2k+1 \text{ or } 1\leq w \leq v_1 \}$ and node $0$. We refer to these cycles as \emph{central cycles of} $G$.  It is easy to see, using parity arguments, that every $G\in \F^k$ has and odd number of odd central cycles.
If $G$ has only one odd central cycle, say $D_1$, then $G-v_1$ is bipartite and, by Remark \ref{Gmenosv}, $G$ is not $\fs$-perfect.

We summarize the previous ideas in the following result:

\begin{lemma} \label{nofs}
Let $G\in \F^k$ be a $\fs$-perfect graph with $\alpha(G)\geq 3$. Then, $k\geq 3$, $G-0=C_{2k+1}$ and $G$ has at least three odd central cycles.
\end{lemma}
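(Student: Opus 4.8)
The plan is to assemble the structural observations recorded just before the statement into the three conclusions, relying throughout on Theorem~\ref{alfa2} and Remark~\ref{Gmenosv}. First I would settle the first two assertions. Since $G$ is $\fs$-perfect, Theorem~\ref{alfa2} gives $\alpha(G-0)=\alpha(G)\geq 3$. By the Strong Perfect Graph Theorem the minimally imperfect graph $G-0$ on $2k+1$ nodes is either $C_{2k+1}$ or $\overline{C_{2k+1}}$; because $\alpha(\overline{C_{2k+1}})=2<3$, this forces $G-0=C_{2k+1}$, and then $\alpha(C_{2k+1})=k\geq 3$ yields $k\geq 3$.

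For the third assertion I would first rule out small degree. If $\delta(0)\leq 2$, delete any neighbour $v$ of $0$: then $C_{2k+1}-v$ is a path and $0$ retains at most one edge, so $G-v$ is bipartite. By Remark~\ref{Gmenosv}, $\stab(G)=\fra(G)\cap\oddcycle(G)$, whose facets are edge and odd-cycle inequalities, none of which is full-support on all $2k+2$ nodes. But $G$ is imperfect (it contains the odd hole $C_{2k+1}$), hence properly $\fs$-perfect, so it must carry the full-support facet of Theorem~\ref{alfa2}(3) (genuinely full-support, since the coefficient of $x_0$ is $\alpha(G)-\alpha(G\ominus 0)\geq 1$ and the remaining coefficients equal $1$)---a contradiction. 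Thus $3\leq s=\delta(0)\leq 2k$, and I set up $\Gamma(0)=\{v_1<\dots<v_s\}$ together with the central cycles $D_1,\dots,D_s$.

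Next comes the parity count. Each arc of $C_{2k+1}$ between consecutive neighbours $v_i,v_{i+1}$ has length $v_{i+1}-v_i$, and $D_i$ is odd precisely when this length is odd; since these arcs partition the $2k+1$ edges of $C_{2k+1}$, their lengths sum to an odd number, so an odd number of them are odd. Hence $G$ has an odd number of odd central cycles, in particular at least one. To finish I must exclude exactly one: if $D_1$ were the unique odd central cycle, then its arc $A_1$ (from $v_1$ to $v_2$) is the only odd arc, so every odd cycle of $G$---the full cycle $C_{2k+1}$ itself, and every cycle through $0$, whose parity equals the parity of the number of odd arcs it spans---must span $A_1$ and thus contain $v_1$. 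Then $G-v_1$ is bipartite, and Remark~\ref{Gmenosv} again contradicts proper $\fs$-perfection. Being odd and different from one, the number of odd central cycles is at least three.

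The main obstacle is this last step: verifying that a single odd central cycle forces every odd cycle of $G$ through $v_1$. The delicate point is that a cycle through $0$ joining neighbours $v_i,v_j$ may use either of the two arcs between them, and since their combined length $2k+1$ is odd, exactly one of the two resulting cycles is odd; I must confirm that the odd one is always the arc containing the unique odd arc $A_1$, so that it necessarily passes through $v_1$. Once this parity bookkeeping is pinned down, the remaining arguments are routine.
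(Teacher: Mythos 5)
Your proposal is correct and follows essentially the same route as the paper: the paper proves Lemma~\ref{nofs} by exactly these observations (Theorem~\ref{alfa2} forcing $G-0=C_{2k+1}$ and $k\geq 3$, Remark~\ref{Gmenosv} ruling out $\delta(0)\leq 2$, the parity count of odd central cycles, and Remark~\ref{Gmenosv} again ruling out a unique odd central cycle), stated informally just before the lemma and then ``summarized'' in it. The step you flag as the main obstacle is not a genuine gap, and your own sketch already settles it: a cycle of $G$ through $0$ joining $v_i$ and $v_j$ has length $2$ plus the total length of the consecutive arcs it spans, so it is odd if and only if it spans an odd number of odd arcs, hence must span the unique odd arc $A_1$ and therefore contain its endpoint $v_1$; equivalently, in $G-v_1$ the graph $C_{2k+1}-v_1$ is a path, so the unique path between any two remaining neighbours of $0$ is a concatenation of even arcs, making $G-v_1$ bipartite.
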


According to lemma above, we can focus on structural properties of graphs in $\F^k$ with at least $3$ odd central cycles.
Firstly, we have:

\begin{lemma}\label{oddsubdiv}
Let $G\in \F^k$ with $k\geq 3$ be such that $G-0=C_{2k+1}$ and $G$ has at least $3$ odd central cycles.  Then, $G$ can be obtained after odd subdivisions of edges in a graph $G'\in \F^p$ for some $2\leq  p<k$ with $\delta_{G'}(0)=\delta_{G}(0)$. Moreover,
\begin{enumerate}
	\item if every central cycle of $G$ is odd, $G'$ has one central cycle of length $5$ and $2(p-1)$ central cycles of length $3$;
	\item if $G$ has an even central cycle, every central cycle of $G'$ has length $3$ or $4$.
	\end{enumerate}
\end{lemma}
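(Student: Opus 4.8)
The plan is to realize $G$ as a sequence of odd subdivisions of a smaller graph $G'$ by \emph{shortening each arc of the cycle $G-0$} to the smallest path of the correct parity, i.e.\ by performing the reverse of the odd-subdivision operation. Write $\Gamma(0)=\{v_1,\dots,v_s\}$ with $v_1<\cdots<v_s$ and let $\ell_i$ denote the number of edges of the arc of $C_{2k+1}$ joining $v_i$ and $v_{i+1}$ (indices taken cyclically), so that the central cycle $D_i$ has length $\ell_i+2$, $\sum_{i=1}^s \ell_i=2k+1$, and $D_i$ is odd exactly when $\ell_i$ is odd. Recall that the number of odd central cycles is odd. Since an odd subdivision replaces one edge by a path with an even number of extra edges, it preserves the parity of every arc; hence the odd/even pattern of the central cycles is invariant under the construction, and any target graph we build will have the same set of odd and even arcs as $G$.

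First I would fix the target arc lengths. In case (2), where some $\ell_j$ is even, I set $\ell_i'=1$ for every odd arc and $\ell_i'=2$ for every even arc. In case (1), where every $\ell_i$ is odd, I single out one arc --- which necessarily satisfies $\ell_i\ge 3$, because $\delta_G(0)=s\le 2k<2k+1$ forbids all arcs from having length $1$ --- and set its target to $\ell'=3$, and all remaining targets to $\ell'=1$. Let $G'$ be the graph with the same node $0$ and the same spokes, whose cycle $G'-0$ uses these shortened arcs. By construction $G'-0$ is a chordless cycle of length $\sum_i\ell_i'$, which we denote $2p+1$, the spokes are untouched, so $\delta_{G'}(0)=s=\delta_G(0)$, and each central cycle of $G'$ has the prescribed length ($5$ once and $3$ otherwise in case (1); $3$ or $4$ in case (2)).

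Next I would verify $G'\in\F^p$ with $p\ge 2$ by checking that $G'-0$ is a genuine odd hole, that is, an odd cycle of length at least $5$. In case (1) one computes $2p+1=3+(s-1)=s+2$, so there are exactly one central cycle of length $5$ and $s-1=2(p-1)$ central cycles of length $3$; since $s\ge 3$ this gives $2p+1\ge 5$ and $p\ge 2$. In case (2), writing $2a+1$ for the (odd) number of odd arcs and $b\ge 1$ for the number of even arcs, one gets $2p+1=(2a+1)+2b$; because $a\ge 1$ (there are at least three odd central cycles) and $b\ge 1$, again $2p+1\ge 5$ and $p\ge 2$. In both cases $s\le 2p$ holds, so the degree condition $1\le\delta_{G'}(0)\le 2p$ of $\F^p$ is met and $G'-0=C_{2p+1}$ is minimally imperfect.

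Finally I would recover $G$ from $G'$ by odd subdivisions: each odd arc of $G$ of length $\ell_i$ is an odd subdivision of its target (subdivide the single target edge into an odd path of length $\ell_i$, or, for the distinguished arc, subdivide one of the three target edges into a path of odd length $\ell_i-2$), while each even arc of length $\ell_i$ is an odd subdivision of the target length-$2$ path (subdivide one of its two edges into a path of odd length $\ell_i-1$). The strict inequality $p<k$ holds whenever some arc strictly exceeds its prescribed target length, equivalently $\sum_i\ell_i>\sum_i\ell_i'$; the exceptional graphs for which every arc already equals its target are exactly those already in the stated canonical form, which are the irreducible base graphs of $\F^p$. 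I expect the main obstacle to be this length control: one must guarantee that $G'-0$ never degenerates into the \emph{perfect} triangle $C_3$. This is precisely why case (1) cannot uniformly use triangles and must retain one central cycle of length $5$ (the critical situation being $s=3$), whereas in case (2) the simultaneous presence of an odd and an even arc automatically forces the length of $G'-0$ to be at least $5$.
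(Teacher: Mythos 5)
Your proof is correct and follows essentially the same route as the paper's: shrink each arc between consecutive neighbours of node $0$ to its canonical length ($1$ for odd arcs, $2$ for even arcs, with one distinguished arc kept at length $3$ in the all-odd case), verify the resulting graph lies in $\F^p$, and recover $G$ by odd subdivisions of edges. You are in fact slightly more careful than the paper, which silently ignores the degenerate case where $G$ already has the canonical form (there the construction gives $p=k$, so the strict inequality $p<k$ in the statement should really be $p\le k$); this is a defect of the lemma's wording shared by both arguments, and it is harmless for the way the lemma is used in Theorem~\ref{reduccion}.
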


\begin{proof}
Let $\delta_{G}(0)=s$ with $s\geq 3$.
\begin{enumerate}
	\item If every central cycle of $G$ is odd then $s$ is odd and $G$ has a central cycle with length at least $5$. Let $p=\frac{s+1}{2}$ and $G'\in \F^p$ with $\delta_{G'}(0)=s$ such that $0$ is adjacent to all nodes in $C_{2p+1}$ except to two nodes, e.g., nodes $s$ and $s+1$. Clearly, $G'$ has one $5$-central cycle and $s-1=2(p-1)$ central cycles of length $3$. Hence $G$ is obtained after the odd subdivision of edges of $G'$.
	\item If $G$ has $r\geq 1$ even central cycles then $s+r$ is odd. Let $D_i$ with $i \in \{1,\ldots, s \}$ the central cycles of $G$. Let $p= \frac{s+r-1}{2}$ and $G'\in \F^p$ such that $\delta_{G'}(0)=s$ and
for $i \in \{1,\ldots, s\}$ the central cycle $D'_i$ in $G'$ is:
\begin{enumerate}
	\item a $3$-cycle if $D_i$ is odd,
	\item a $4$-cycle If $D_i$ is even.
\end{enumerate}
It is straightforward to check that $G$ is obtained from $G'$ after odd subdivision of edges.
\end{enumerate}
\end{proof}

In addition, we have

\begin{lemma}\label{operaciones}
Let $G\in \F^k$ with $k\geq 3$ be such that $G-0=C_{2k+1}$ and $\delta_G(0)\geq 3$. Let $t(G)$ be the number of $3$-central cycles and $r(G)$ the number of $4$-central cycles in $G$.
\begin{enumerate}
\item if $G$ has three consecutive $3$-central cycles then $G$ can be obtained after the clique subdivision of an edge in a graph $G'\in \F^{k-1}$ with $t(G')=t(G)-2$ and $\delta_{G'}(0)=\delta_G(0)-2$;
\item if $r(G)\geq 2$ then $G$ can be obtained after the 1-stretching operation on a node in a graph $G'\in \F^{k-1}$ with $r(G')=r(G)-1$ and $\delta_{G'}(0)=\delta_{G}(0)-1$.
\end{enumerate}
\end{lemma}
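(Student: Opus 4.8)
The plan is to prove each part by \emph{reversing} the indicated operation: in both cases I construct $G'$ explicitly from $G$ by contracting the relevant central cycles, verify that $G' \in \F^{k-1}$ with the claimed parameter values, and then check that applying the prescribed operation to $G'$ reproduces $G$. Throughout I write $s = \delta_G(0)$ and use the facts, implicit in the definition of central cycles, that a central cycle with apex-neighbours $v_i, v_{i+1}$ is a triangle exactly when $v_i, v_{i+1}$ are consecutive on $C_{2k+1}$, and a $4$-cycle exactly when they are separated by a single cycle node $w$, which then has degree $2$ in $G$ (if $w$ were adjacent to $0$ it would itself be an intermediate neighbour of $0$).

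For part~(1), the three consecutive $3$-central cycles have as apex-neighbours four consecutive cycle nodes $a, a+1, a+2, a+3 \in \Gamma_G(0)$, yielding the triangles $\{0,a,a+1\}$, $\{0,a+1,a+2\}$, $\{0,a+2,a+3\}$. I define $G'$ by deleting the interior nodes $a+1$ and $a+2$ and adding the cycle edge $a(a+3)$, keeping all other adjacencies of $0$. I then verify that $G'-0 = C_{2k-1}$, that the three deleted triangles are replaced by the single $3$-central cycle $\{0,a,a+3\}$ (so $t(G') = t(G)-2$), and that $\delta_{G'}(0) = s-2$; since three consecutive triangles force $s \geq 4$ and always $s \leq 2k$, this gives $1 \leq \delta_{G'}(0) \leq 2(k-1)$ and hence $G' \in \F^{k-1}$. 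Finally I check that the clique subdivision of the edge $a(a+3)$ in the triangle (clique) $\{0,a,a+3\}$ reinserts two nodes $x,y$ along a path $a-x-y-(a+3)$, each joined to the remaining clique node $0$, and that this recreates precisely the three triangles of $G$.

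For part~(2), I fix one $4$-central cycle $0 - v_i - w - v_{i+1} - 0$, where $w$ has degree $2$. I define $G'$ by deleting $w$ and identifying $v_i$ and $v_{i+1}$ into a single node $v$; its neighbourhood is then $\Gamma_{G'}(v) = \{0, v_i-1, v_{i+1}+1\}$. Taking $A_1 = \{0, v_i-1\}$, $A_2 = \{0, v_{i+1}+1\}$ and $u = w$, one has $A_1 \cup A_2 = \Gamma_{G'}(v)$ and $A_1 \cap A_2 = \{0\}$, a clique of size $1$, so that the $1$-stretching of $v$ with these data returns $G$ under the identification $v_1 = v_i$, $v_2 = v_{i+1}$. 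I then verify $G'-0 = C_{2k-1}$, that the two central cycles flanking the chosen $4$-cycle keep their lengths while the $4$-cycle itself vanishes (so $r(G') = r(G)-1$), and that $\delta_{G'}(0) = s-1$.

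The step I expect to require the most care — and where a hypothesis enters essentially — is the membership $G' \in \F^{k-1}$, namely the upper bound $\delta_{G'}(0) \leq 2(k-1)$. In part~(2) this amounts to $s \leq 2k-1$, which I would deduce precisely from $r(G) \geq 2$: if $0$ missed only a single cycle node, i.e.\ $s = 2k$, then that node would be the unique degree-$2$ node and all other central cycles would be triangles, forcing $r(G) = 1$; hence $r(G) \geq 2$ excludes $s = 2k$ and gives $\delta_{G'}(0) = s-1 \leq 2k-2$. The remaining work is the bookkeeping along the cycle showing that each operation disturbs only the targeted central cycles and preserves the lengths of all the others, so that $t$ and $r$ change by exactly the stated amounts.
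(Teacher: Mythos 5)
Your proof is correct and takes essentially the same approach as the paper's: in both parts the paper also constructs $G'$ by reversing the operation (deleting the two interior nodes of the three consecutive triangles and restoring the cycle edge in (1); deleting the degree-$2$ node and merging the two flanking neighbours of $0$ in (2)), just fixing concrete labels $\{2k-1,2k,2k+1,1\}$ resp.\ $\{0,1,2k,2k+1\}$ without loss of generality. Your explicit check that $r(G)\geq 2$ rules out $\delta_G(0)=2k$ and hence gives $\delta_{G'}(0)\leq 2(k-1)$ is exactly the point the paper glosses over with ``clearly,'' and it is right.
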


\begin{proof}
\begin{enumerate}
\item From assumption we may consider that $\{2k-1,2k,2k+1,1\}\subseteq  \Gamma(0)$.
Let $G'\in \F^{k-1}$ be a graph having $\Gamma_{G'}(0)=\Gamma(0)\setminus \{2k,2k+1\}$.
Clearly, $G'$ has $t(G)-2$ $3$-central cycles and $\delta_{G'}(0)=\delta_{G}(0)-2$. Moreover, it is easy to see that $G$ is the clique subdivision of the edge in $G'$ having extreme points $\{1,2k-1\}$.
\item Since there is a $4$-central cycle, without loss of generality, we may assume that the nodes in $\{0,1,2k, 2k+1\}$ induce a 4-central cycle in $G$.
Consider $G'\in F^{k-1}$ be such that $\Gamma_{G'}(0)=\Gamma_G(0)\setminus \{2k\}$ then $G$ is a 1-stretching of $G'$ performed at node 1 and where the new nodes are $2k$ and $2k+1$. Clearly, $r(G')=r(G)-1$ and $\delta_{G'}(0)=\delta_{G}(0)-1$.
\end{enumerate}
\end{proof}

Utilizing the previous lemmas we obtain the following result.

\begin{theorem}\label{reduccion}
Let $G\in \F^k$ with $k\geq 3$ and $G$ has at least three odd central cycles. Then, $G$ can be obtained from $G_{LT}$ or $G_{EMN}$ after successively applying odd-subdivision of an edge, 1-stretching of a node and clique-subdivision of an edge operations.
\end{theorem}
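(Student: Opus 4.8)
The plan is to reduce any such $G$, in two stages, to one of the two six-node graphs $G_{LT}$ and $G_{EMN}$, reading the structural Lemmas~\ref{oddsubdiv} and~\ref{operaciones} in the ``building-up'' direction so that each reduction step is the inverse of one of the three permitted operations. I keep track of the central cycles of $G-0=C_{2k+1}$: if $0$ has neighbours $v_1<\dots<v_s$, then there are $s$ central cycles, one for each arc between consecutive neighbours, a central cycle spanning an arc of $g_i$ edges has length $g_i+2$, the $g_i$ sum to $2k+1$, and the number of odd central cycles is odd. The two stages are a single application of Lemma~\ref{oddsubdiv} to make all central cycles short, followed by an induction that repeatedly applies Lemma~\ref{operaciones}.

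I would first dispose of the base case. For $k=2$ we have $G-0=C_5$ and $3\le\delta_G(0)\le 4$, so enumerating the gap vectors of length $3$ or $4$, with positive integer entries summing to $5$ and at least three odd entries, shows that the only members of $\F^2$ having at least three odd central cycles are the graph with central cycles of lengths $3,3,5$ and the graph with central cycles of lengths $3,3,3,4$. These are precisely $G_{LT}$ and $G_{EMN}$ of Figure~\ref{grafos6}, for which the statement holds with the empty sequence of operations.

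For $k\ge 3$ I would first normalise the cycle lengths. By Lemma~\ref{oddsubdiv}, $G$ is obtained by odd-subdivisions of edges from a graph $G'\in\F^p$ with $2\le p<k$ whose central cycles are short: either (Case~1) one $5$-cycle together with $2(p-1)$ triangles, or (Case~2) only triangles and $4$-cycles. Since odd-subdivision of an edge is a permitted operation and preserves both the number of central cycles and their parities, it suffices to prove the claim for $G'$, which still has an odd number, at least three, of odd central cycles. I then induct on $p$: if $p=2$ we are in the base case, and if $p\ge 3$ I exhibit a graph $H\in\F^{p-1}$ of the same type such that $G'$ arises from $H$ by a single clique-subdivision or $1$-stretching, after which the induction hypothesis applies to $H$. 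Concretely, in Case~1 the $2(p-1)\ge 4$ triangles are consecutive, so Lemma~\ref{operaciones}(1) applies and removes two of them while fixing the $5$-cycle; in Case~2, if there are at least two $4$-cycles I use Lemma~\ref{operaciones}(2) to delete one $4$-cycle, and otherwise the single $4$-cycle leaves the $t=2p-1\ge 5$ triangles consecutive, so Lemma~\ref{operaciones}(1) again removes two of them.

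The delicate part, which I would write out most carefully, is the termination and parity bookkeeping guaranteeing that the induction lands exactly on $G_{LT}$ or $G_{EMN}$ and never stalls. In Case~1 the number of triangles equals the even value $2(p-1)$, is lowered by $2$ at each clique-subdivision with the unique $5$-cycle preserved, and hence decreases to $2$ at $p=2$, yielding lengths $3,3,5$. In Case~2 the number $t$ of triangles equals the number of odd central cycles and is therefore odd and at least $3$; since $1$-stretching removes a $4$-cycle while fixing $t$, and clique-subdivision lowers $t$ by $2$ while preserving the unique remaining $4$-cycle, one first drives the number of $4$-cycles down to $1$ and then drives the odd value $t$ down to $3$, yielding lengths $3,3,3,4$ at $p=2$ (the alternative $t=5$ with a single $4$-cycle is impossible there, since $t+2=5$ forces $t=3$). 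The crux is to verify that the local configuration required by Lemma~\ref{operaciones}---three consecutive triangles, or at least two $4$-cycles---is genuinely present at every step with $p\ge 3$; this is exactly where the hypotheses of that lemma must be checked, and it is the main obstacle in the argument.
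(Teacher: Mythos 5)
Your proposal is correct and follows essentially the same route as the paper's proof: normalize the central-cycle lengths via Lemma~\ref{oddsubdiv}, then descend by repeated applications of Lemma~\ref{operaciones}, using part (1) when all central cycles are odd or only one $4$-cycle remains, and part (2) when there are at least two $4$-cycles, landing on $G_{LT}$ in the all-odd case and on $G_{EMN}$ otherwise. Your explicit enumeration of the $k=2$ base case and the consecutiveness/parity bookkeeping are details the paper leaves implicit, but the decomposition and the key lemmas are identical.
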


\begin{proof}
Using Lemma \ref{oddsubdiv}, by successively  performing the odd subdivision of an edge operation, we can restrict ourselves to consider the following cases:
\begin{enumerate}
	\item[(a)] $G$ has one $5$-central cycle and $2(k-1)$ $3$-central cycles.
	\item[(b)] $G$ has at least one even central cycle and every central cycle has length $3$ or $4$.
\end{enumerate}
Consider $r(G)$, the number of $4$-cycles in $G$.
If $r(G)=0$ then $G$ is a graph described in case (a). Since $k\geq 3$ then $2(k-1)\geq 4$ and by Lemma \ref{operaciones} (1), we can conclude that $G$ is obtained from $G_{LT}$ after successive clique-subdivisions of edges.
For graphs in case (b) we have that $r(G)\geq 1$ and  then by Lemma \ref{oddsubdiv} (2) we have that $2k=r(G)+\delta(G)-1$.
If $r(G)=1$ and $\delta(G)$ is even, since $k\geq 3$ it holds that $G$ has $t(G)=\delta(G)-1\geq 5$ number of $3$-cycles. Using Lemma \ref{operaciones} (1) it is not hard to see that $G$ can be obtained from $G_{EMN}$ by successive clique subdivisions of edges.
If $r(G)\geq 2$, Lemma \ref{operaciones} (2) implies that $G$ can be obtained by successive 1-stretching of nodes from a graph $G'\in \F^{k'}$ with $r(G')=1$ and $2k'=\delta(G')$. If $2k'=4$ then $G'=G_{EMN}$ otherwise we can refer to the previous case and the proof is complete.
\end{proof}

\section{The conjecture on $\fs$-perfect graphs} \label{thevalidity}

In this section, we prove the validity of Conjecture \ref{conjecture} on the family of $\fs$-perfect graphs.
We start by proving it on graphs in the family $\F^k$ with $k\geq 2$.
Recall that when $G\in \F^k$ is a $\fs$-perfect graph with $\alpha(G)=2$, Corollary \ref{cor_alfa22} states that $G$ is a near-perfect edge subgraph of $H^k$.
Observe that $H^2=G_{EMN}$ and then, due to the results in \cite{EMN}, $H^k$ is ${\LS}_+$-imperfect when $k=2$.
Next, we prove the imperfection property for the whole family of graphs $H^k$.

\begin{theorem}\label{oddanti}
For $k\geq 2$, the graph $H^k$ is ${\LS}_+$-imperfect.
\end{theorem}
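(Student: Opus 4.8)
The plan is to show that the unique full-support facet of $\stab(H^k)$ is violated by some point of ${\LS}_+(H^k)$. Since $H^k$ is near-perfect with $\alpha(H^k)=2$ and $\alpha(H^k\ominus 0)=1$, Theorem~\ref{alfa2} identifies this facet as the full-rank constraint $\sum_{i=0}^{2k+1}x_i\le 2$; moreover $H^k-0=\overline{C_{2k+1}}$, with node $0$ adjacent to all but one of its nodes, say $v^{*}=2k+1$. Thus it suffices to produce a point $\bar x$ with $\sum_i \bar x_i>2$ together with a certificate $\bar Y\in M_+(\fra(H^k))$ satisfying $\bar Y\e_0=\begin{pmatrix}1\\ \bar x\end{pmatrix}$.

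First I would record the shape forced on any such $\bar Y$. Writing $\bar Y=\begin{pmatrix}1 & \bar x\tr\\ \bar x & Z\end{pmatrix}$ and using that each column $\bar Y\e_i$ and each complement $\bar Y(\e_0-\e_i)$ lies in $\cone(\fra(H^k))$, one gets $Z_{ii}=\bar x_i$ and $Z_{ij}=0$ for every edge $ij$ of $H^k$; the only free entries of $Z$ sit on the $2k+2$ non-edges, namely the $2k+1$ consecutive pairs of the cycle $C_{2k+1}$ underlying $\overline{C_{2k+1}}$ together with the pair $\{0,v^{*}\}$. Since $\bar Y_{00}=1$, the Schur complement reduces positive semidefiniteness of $\bar Y$ to $W:=Z-\bar x\bar x\tr\succeq 0$, equivalently to exhibiting Gram vectors $\{w_i\}$ with $\|w_i\|^2=\bar x_i(1-\bar x_i)$ and $\langle w_i,w_j\rangle=-\bar x_i\bar x_j$ on every edge $ij$, the values on the non-edges being the remaining freedom.

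For the point I would take the cycle weights uniform, $\bar x_i=c$ for $i\in\{1,\dots,2k+1\}$ with $(2k+1)c$ slightly below $2$, and $\bar x_0=a>0$ chosen so that $a+(2k+1)c>2$. The cycle block of $W$ is then built as a circulant Gram matrix: set $\langle w_i,w_j\rangle=-c^2$ on all antihole edges (cyclic distance at least $2$) and a single free value $\rho$ on the consecutive pairs, fixing $\rho$ through the Dirichlet-kernel identity $\sum_{d=1}^{k}\cos\!\big(\tfrac{2\pi m d}{2k+1}\big)=-\tfrac12$ for $m\neq 0$, so that the circulant is positive semidefinite with its all-ones mode only slightly positive (corresponding to the cycle sum sitting just below the threshold $2$ forced by ${\LS}_+$-perfection of $\overline{C_{2k+1}}$). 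The node $0$ is then inserted by choosing $w_0$ in the span of the cycle vectors with $\langle w_0,w_i\rangle=-ac$ for $i\le 2k$ and $\langle w_0,w_{v^{*}}\rangle$ free; this is possible exactly when the least-norm solution of these linear conditions does not exceed the budget $a(1-a)$.

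The heart of the argument — and the step I expect to be hardest — is the uniform-in-$k$ verification that this budget inequality holds while still keeping $a+(2k+1)c>2$, together with the lifting conditions $0\le Z_{ij}\le\min(\bar x_i,\bar x_j)$ and the per-column edge inequalities on the free entries. Both pieces are genuinely extremal: the circulant block is near-singular in the all-ones direction, and the lifting constraints on the consecutive pairs are nearly tight, so the parameters $a,c,\rho$ are essentially forced and the estimates must be carried out exactly, computing the pseudo-inverse of the circulant in the Fourier basis to control the least norm of $w_0$ against the deficit $2-(2k+1)c$. I would run this computation directly for all $k\ge 3$ and, should the constants become too tight at the boundary case, invoke that $H^2=G_{EMN}$ is already known to be ${\LS}_+$-imperfect \cite{EMN}, thereby covering all $k\ge 2$ without recourse to an inductive graph operation.
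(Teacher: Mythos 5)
Your overall strategy is the same as the paper's: exhibit an explicit point violating the full-rank constraint $\sum_{i=0}^{2k+1}x_i\le 2$ together with an explicit certificate in $M_+(\fra(H^k))$, falling back on $H^2=G_{EMN}$ and \cite{EMN} for $k=2$. Your structural reductions are all correct (entries forced to zero on edges, diagonal equal to the point, Schur-complement/Gram reformulation, identification of the free entries with the $2k+2$ non-edges). Where you differ is the ansatz: you take the point uniform on the antihole with the excess weight $a$ on node $0$ and a circulant cycle block analysed in the Fourier basis, whereas the paper takes the point $\frac{1}{2k+2+\gamma}(2,\dots,2,4)^\top$ with the excess on the non-neighbour $2k+1$ of node $0$, and a non-circulant certificate with alternating off-diagonal entries $1\pm\beta_k$, $\beta_k=\frac{1}{2k+2}$, whose positive semidefiniteness is established in Lemma~\ref{punto_M+} by Schur complements and recursive determinant formulas.

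The genuine gap is that the step you yourself call the heart of the argument is never carried out, and it is precisely where all the content of the theorem lies. The claim that one can choose $a$, $c$, $\rho$ and the free entry $z:=Z_{0,v^*}$ so that simultaneously (i) $a+(2k+1)c>2$, (ii) the bordered matrix is PSD, and (iii) the per-column lifting inequalities hold, is a quantitative statement that must be proved uniformly in $k$; it is the exact analogue of Lemma~\ref{punto_M+}, whose proof occupies most of Section~\ref{lodelN+}. Moreover the verification is delicate, not routine. The lifting conditions force $\rho+c^2\le c/2$, and with the maximal choice $\rho+c^2=c/2$ the all-ones eigenvalue of your circulant equals $c\left(2-(2k+1)c\right)$, so writing $\epsilon:=2-(2k+1)c$ one needs $a>\epsilon$ for the violation while the Schur-complement budget $a(1-a)$ must absorb both the near-singular-mode term $\frac{\left(a(2-\epsilon)-z\right)^2}{(2-\epsilon)\epsilon}$ and a contribution of order $k^2z^2$ from the remaining Fourier modes. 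In particular, the natural choice $z=0$ provably fails: it forces $2a\le\epsilon$, contradicting $a>\epsilon$. So the free entry on the non-edge $\{0,v^*\}$ must be strictly positive and tuned against $\epsilon$ (exactly as the paper's certificate has the nonzero entry $2$ in position $(0,2k+1)$), and then $\epsilon$ must be taken of order $1/k^2$, so the margins shrink with $k$ and genuinely uniform-in-$k$ estimates are required. I believe your ansatz can be pushed through for $k\ge 3$ along these lines, but as written your text is a plan for a proof, not a proof.
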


In order to ease the reading of this paper we postpone the proof of Theorem \ref{oddanti} to Section \ref{lodelN+}.
On the behaviour of the ${\LS}_+$ operator on edge subgraphs, we have the following result:

\begin{lemma}
Let $G_E$ be an edge subgraph of $G$ and $a x \leq \beta$ be a valid inequality for $\stab(G_E)$. Then, if  $a x \leq \beta$ is not valid for $\LS^r_+(G)$, then $\stab(G_E)\neq \LS^r_+(G_E)$.
\end{lemma}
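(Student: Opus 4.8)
The plan is to derive the result from two elementary monotonicity facts and then argue by contrapositive. First I would record the set-theoretic relationship between the two edge relaxations. Since $G_E$ is an edge subgraph of $G$, we have $V(G_E)=V(G)$ and $E(G_E)\subseteq E(G)$; consequently $\fra(G)$ is cut out from $[0,1]^{V(G)}$ by imposing the edge inequalities $x_u+x_v\le 1$ for all $uv\in E(G)$, which form a superset of those imposed to obtain $\fra(G_E)$. Hence $\fra(G)\subseteq\fra(G_E)$.

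Next I would invoke the monotonicity of the $\LS_+$ operator under inclusion: if $K_1\subseteq K_2\subseteq[0,1]^n$ then $\LS_+(K_1)\subseteq\LS_+(K_2)$. This follows directly from the definitions, since $K_1\subseteq K_2$ gives $\cone(K_1)\subseteq\cone(K_2)$, so that every matrix $Y\in M_+(K_1)$ also lies in $M_+(K_2)$ (the semidefiniteness constraint $Y\in\S_+^{n+1}$ being unaffected), and projecting onto the $x$-coordinates preserves the inclusion. A straightforward induction on $r$ then yields $\LS^r_+(K_1)\subseteq\LS^r_+(K_2)$. Taking $K_1=\fra(G)$ and $K_2=\fra(G_E)$, and using the notational convention $\LS^r_+(G)=\LS^r_+(\fra(G))$, we obtain $\LS^r_+(G)\subseteq\LS^r_+(G_E)$.

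Finally I would prove the contrapositive. Suppose $\stab(G_E)=\LS^r_+(G_E)$. By hypothesis $ax\le\beta$ is valid for $\stab(G_E)$, hence valid for $\LS^r_+(G_E)$. Since $\LS^r_+(G)\subseteq\LS^r_+(G_E)$, every point of $\LS^r_+(G)$ satisfies $ax\le\beta$, so the inequality is valid for $\LS^r_+(G)$, contradicting the assumption that it is not. Therefore $\stab(G_E)\neq\LS^r_+(G_E)$, as claimed. I do not expect any genuine obstacle: the entire argument rests on the elementary monotonicity of $\LS_+$ under set inclusion, and the only point requiring a line of care is the inductive step passing from one-step to $r$-step monotonicity.
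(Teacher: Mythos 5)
Your proof is correct and follows essentially the same route as the paper: both rest on the inclusion $\LS^r_+(G)\subseteq \LS^r_+(G_E)$ (which the paper simply asserts as clear by definition, while you justify it via monotonicity of $\LS_+$ and induction on $r$) and then conclude by observing that a point of $\LS^r_+(G)$ violating $ax\le\beta$ would also lie in $\LS^r_+(G_E)$, where the inequality is valid for $\stab(G_E)$. The only difference is presentational: you phrase it as a contrapositive, the paper argues directly.
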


\begin{proof}
Clearly, by definition $\LS^r_+(G)\subseteq \LS^r_+(G_E)$.
Thus, if there exists $\hat{x}\in \LS^r_+(G)$ such that $a \hat{x} > \beta$ then $a x\leq \beta$ is not valid for $\LS^r_+(G_E)$. Moreover, by hypothesis, $a x\leq \beta$ is valid for $\stab(G_E)$ and the result follows.
\end{proof}

As a consequence of the above, we have:

\begin{theorem} \label{antiholes}
Let $G \in \mathcal{F}^k$ be a $\fs$-perfect graph with $\alpha(G)=2$. Then, $G$ is ${\LS}_+$-imperfect.
\end{theorem}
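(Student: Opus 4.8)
The plan is to reduce the claim to the already-established ${\LS}_+$-imperfection of $H^k$ (Theorem \ref{oddanti}), exploiting the fact that such a $G$ sits inside $H^k$ as an edge subgraph and shares its single full-support facet. First I would invoke Corollary \ref{cor_alfa22}: since $G \in \mathcal{F}^k$ is $\fs$-perfect with $\alpha(G)=2$, it is a near-perfect edge subgraph of $H^k$. In particular $G-0=\overline{C_{2k+1}}$ and $\alpha(G\ominus 0)=1$, so Theorem \ref{alfa2} identifies the unique full-support facet of $\stab(G)$ as $(\alpha(G)-\alpha(G\ominus 0))x_0+\sum_{i=1}^{2k+1}x_i\leq \alpha(G)$, which here is exactly the full-rank inequality $\sum_{i=0}^{2k+1}x_i\leq 2$. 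This inequality is automatically valid for $\stab(G)$ because $\alpha(G)=2$.

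Next I would extract from Theorem \ref{oddanti} a point of ${\LS}_+(H^k)$ that violates this very inequality. Since $H^k$ is near-perfect, $\stab(H^k)$ is cut out by nonnegativity constraints, clique constraints, and the single full-rank constraint $\sum_{i=0}^{2k+1}x_i\leq 2$. The inclusions ${\LS}_+(H^k)\subseteq \clique(H^k)\subseteq [0,1]^{2k+2}$ show that every clique and box constraint of $\stab(H^k)$ remains valid for ${\LS}_+(H^k)$. Hence the strict inclusion $\stab(H^k)\subsetneq {\LS}_+(H^k)$ guaranteed by Theorem \ref{oddanti} can only be witnessed by a point $\hat{x}\in {\LS}_+(H^k)$ with $\sum_{i=0}^{2k+1}\hat{x}_i>2$.

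Finally I would apply the edge-subgraph lemma immediately preceding this theorem, taking the ambient graph to be $H^k$, the edge subgraph to be $G$, the inequality $ax\leq\beta$ to be $\sum_{i=0}^{2k+1}x_i\leq 2$, and $r=1$. That inequality is valid for $\stab(G)$ and fails on $\hat{x}\in {\LS}_+(H^k)=\LS^1_+(H^k)$; the lemma then yields $\stab(G)\neq {\LS}_+(G)$, i.e.\ $G$ is ${\LS}_+$-imperfect, as desired.

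The routine bookkeeping (computing $\alpha(G\ominus 0)$, checking that the clique and box constraints survive the operator) is straightforward; the one point that deserves care is the middle step, namely arguing that the facet separated from ${\LS}_+(H^k)$ is necessarily the full-rank inequality and, crucially, that this same inequality is valid for the strictly smaller polytope $\stab(G)$. Both follow from near-perfection together with $\alpha(G)=\alpha(H^k)=2$, so no genuine obstacle remains; the real analytic content has already been absorbed into Theorem \ref{oddanti}.
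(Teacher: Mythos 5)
Your proposal is correct and follows essentially the same route as the paper's proof: Corollary \ref{cor_alfa22} to realize $G$ as a near-perfect edge subgraph of $H^k$, Theorem \ref{oddanti} to obtain a point of ${\LS}_+(H^k)$ violating the full-rank constraint, and the edge-subgraph lemma to transfer that violation to ${\LS}_+(G)$ using $\alpha(G)=\alpha(H^k)=2$. The only (harmless) presentational difference is that you deduce that the violated inequality must be the rank constraint from near-perfection of $H^k$ together with ${\LS}_+(H^k)\subseteq\clique(H^k)\subseteq[0,1]^{2k+2}$, whereas the paper can read this off directly from the explicit point $x(k,\bar{\gamma})$ constructed in its proof of Theorem \ref{oddanti}.
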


\begin{proof}
  By Corollary \ref{cor_alfa22} we know that $G$ is a near-perfect edge subgraph of $H^k$. Theorem \ref{oddanti} states that $H^k$ is ${\LS}_+$-imperfect then the full rank constraint is not valid for ${\LS}_+(H^k)$. Since $\alpha(H^k)=\alpha(G)=2$, the full rank constraint is not valid for ${\LS}_+(G)$ and $G$ is ${\LS}_+$-imperfect.
\end{proof}

Let us consider the $\fs$-perfect graphs $G$ in $\F^k$ with $\alpha(G)\geq 3$.
Due to the structural characterization in Theorem \ref{reduccion}, we are interested in the behavior of the ${\LS}_+$-operator
under the odd subdivision of an edge, $k$-stretching of a node and clique subdivision of an edge operations.
In this context, a related earlier result is:

\begin{theorem}[\cite{LipTun2003}]\label{}
Let $G$ be a graph and $r \geq 1$ such that ${\LS}_+^r(G) \neq \stab(G)$.  Further assume
that $\tilde{G}$ is obtained from $G$ by using the odd subdivision
operation on one of its edges.  Then, ${\LS}_+^r(\tilde{G}) \neq \stab(\tilde{G})$.
\end{theorem}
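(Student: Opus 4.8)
The plan is to reduce the statement to its simplest instance and then transport an ${\LS}_+$-certificate across the operation. First I would observe that an odd subdivision replacing an edge $uv$ by a path of odd length $2t+1$ is the composition of $t$ \emph{elementary} odd subdivisions, each of which replaces a single edge by a path of length three (at each step one subdivides an edge of the path produced by the previous step). Since the assertion to be proved is precisely that one such operation preserves ${\LS}_+^r(\cdot)\neq\stab(\cdot)$, it suffices to settle the elementary case and then apply it once for each of the $t$ steps. So from now on $\tilde G$ is obtained from $G$ by replacing $uv$ with the path $u-p-q-v$, and I start from a point $\bar x\in{\LS}_+^r(G)\setminus\stab(G)$, which exists by hypothesis.

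Next I would build a lifted point $\tilde x\in\R^{V(\tilde G)}$ by keeping $\tilde x_w=\bar x_w$ for every $w\in V(G)$ and assigning carefully chosen values to the two new nodes $p,q$ (these values are essentially forced, up to symmetry, by requiring that $\tilde x_u+\tilde x_p\le 1$, $\tilde x_p+\tilde x_q\le 1$ and $\tilde x_q+\tilde x_v\le 1$ be tight in the relevant configurations). The heart of the matter is to promote $\tilde x$ to a genuine element of ${\LS}_+^r(\tilde G)$, and for this I would argue by induction on $r$. For $r=0$ this is trivial, since the only new constraints of $\fra(\tilde G)$ are the three path edges and $\bar x_u+\bar x_v\le 1$ leaves room to choose $\tilde x_p,\tilde x_q$. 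For the inductive step, given a certificate $Y\in M_+({\LS}_+^{r-1}(G))$ with $Y\e_0=\diag(Y)=\begin{pmatrix}1\\ \bar x\end{pmatrix}$, I would enlarge $Y$ with two rows and columns indexed by $p,q$, setting their diagonal entries to $\tilde x_p,\tilde x_q$ and defining the new correlations (with $\e_0$, with $u,v$, and with the remaining old nodes) so that the path $u-p-q-v$ is treated as the bipartite gadget it is. The old columns $Y\e_i$ are replaced by their lifts, which lie in $\cone({\LS}_+^{r-1}(\tilde G))$ by the inductive hypothesis, and the two new columns correspond to conditioning on a gadget vertex, which again produces a lift of a point of ${\LS}_+^{r-1}(G)$.

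To see that $\tilde x\notin\stab(\tilde G)$ I would lift the violated \emph{inequality} rather than project the point: one checks that no affine map can send $\stab(\tilde G)$ into $\stab(G)$ while recovering $\bar x$, because $u$ and $v$ are no longer adjacent, so a projection argument is simply unavailable. Instead, choosing a facet $ax\le\beta$ of $\stab(G)$ with $a\bar x>\beta$ (its coefficients are nonnegative, since $\bar x\ge\cero$ rules out the trivial facets), I would invoke the known odd-subdivision lifting of valid inequalities. In the case relevant here, where the subdivided edge carries equal coefficients $a_u=a_v=1$ (as for clique and rank constraints), the lift keeps all old coefficients, gives $p$ and $q$ coefficient $1$, and raises the right-hand side by $1$, exactly turning a triangle inequality into the five-cycle inequality. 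With $\tilde x_p,\tilde x_q$ chosen so that their combined contribution equals this increase, the lifted inequality is violated by $\tilde x$ by the same positive margin $a\bar x-\beta$, whence $\tilde x\notin\stab(\tilde G)$ and $\tilde G$ is ${\LS}_+$-imperfect at level $r$.

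The main obstacle is the matrix-lifting step: the new off-diagonal entries must be chosen so that the enlarged matrix stays positive semidefinite and so that every vector $\tilde Y\e_i$ and $\tilde Y(\e_0-\e_i)$—in particular those indexed by $p$ and $q$—remains in $\cone({\LS}_+^{r-1}(\tilde G))$, and this must be maintained through all $r$ iterations. I would expect the clean way through to be showing that the two adjoined rows are, up to scaling, convex combinations of rows already present, as dictated by the bipartition of the path, which keeps both the positive semidefiniteness (via a Schur-complement argument) and the cone memberships under control. The careful inductive bookkeeping over the $r$ levels, together with the fact noted above that the two directions are genuinely asymmetric (membership is transported by the certificate lift, non-membership only by inequality lifting), is where I expect the real work to lie.
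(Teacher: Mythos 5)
Your overall strategy is the right one, and it is essentially the approach behind this theorem: the paper itself does not reprove it (it is quoted from Lipt\'ak--Tun\c{c}el), but the paper's own proofs of the companion results for the $k$-stretching and clique-subdivision operations (Theorems \ref{Thm2.10} and \ref{cliquesubdivision}) use exactly your scheme --- transport the point by a lift, transport the PSD certificate by adjoining rows and columns built from existing ones, and transport a violated valid inequality by the known facet lifting, the two directions being handled asymmetrically just as you say. Two corrections are needed, one cosmetic and one substantive. The cosmetic one: the adjoined rows are not ``convex combinations of rows already present.'' With $\tilde x_p=1-\bar x_u$ and $\tilde x_q=\bar x_u$, the correct choice is row $p$ $=$ (row $0$) $-$ (row $u$) and row $q$ $=$ row $u$; equivalently $\tilde Y=A^{\top}YA$, where $A\e_i=\e_i$ for old indices $i$, $A\e_p=\e_0-\e_u$ and $A\e_q=\e_u$. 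This gives positive semidefiniteness for free (no Schur complement needed), and it makes every vector $\tilde Y\e_i$ and $\tilde Y(\e_0-\e_i)$ literally the homogenized lift of the corresponding column of $Y$, so the cone conditions follow from the inductive hypothesis in its universal form (every point of ${\LS}_+^{r-1}(G)$ lifts into ${\LS}_+^{r-1}(\tilde G)$), which is the form you must carry through the induction.

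The substantive gap is in your inequality-lifting step. You restrict to ``the case relevant here, where the subdivided edge carries equal coefficients $a_u=a_v=1$,'' but nothing in the hypothesis justifies this: all you are given is some point $\bar x\in{\LS}_+^r(G)\setminus\stab(G)$, and the facet of $\stab(G)$ it violates can be arbitrary (an odd-antihole or wheel facet, say, with unequal coefficients on $u$ and $v$). As written, your argument proves the theorem only for graphs whose violated facet happens to have unit coefficients at both ends of the subdivided edge. The repair is short and is the general Wolsey lifting: for any valid inequality $\sum_j a_jx_j\le\beta$ of $\stab(G)$ with $a\ge\cero$, the inequality
\[
\sum_{j\in V(G)} a_jx_j+c\,(x_p+x_q)\;\le\;\beta+c,\qquad c:=\min\{a_u,a_v\},
\]
is valid for $\stab(\tilde G)$ (if a stable set of $\tilde G$ contains both $u$ and $v$, it avoids $p,q$ and deleting the cheaper of $u,v$ gives a stable set of $G$; otherwise its trace on $V(G)$ is already stable in $G$). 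Since your lift satisfies $\tilde x_p+\tilde x_q=1$, the margin of violation is $a\bar x+c-(\beta+c)=a\bar x-\beta>0$, unchanged. With this replacement, and with the matrix $\tilde Y=A^{\top}YA$ above, your proof is complete and coincides with the technique used throughout Section \ref{lodelN+}.
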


Concerning the remaining operations, we present the following results whose proofs are included in Section \ref{lodelN+} for the sake of clarity.

\begin{theorem}\label{Thm2.10}
Let $G$ be a graph and $r \geq 1$ such that ${\LS}_+^r(G) \neq \stab(G)$.  Further assume
that $\tilde{G}$ is obtained from $G$ by using the $k$-stretching
operation on one of its nodes.  Then, ${\LS}_+^r(\tilde{G}) \neq \stab(\tilde{G})$.
\end{theorem}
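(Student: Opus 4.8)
The plan is to show that the $k$-stretching operation preserves $\LS_+$-imperfection by exhibiting an explicit fractional point that survives $r$ rounds of the $\LS_+$-operator on the stretched graph $\tilde G$ but violates a valid inequality of $\stab(\tilde G)$. Since we are told ${\LS}_+^r(G) \neq \stab(G)$, there is a point $\hat x \in {\LS}_+^r(G)$ and a valid inequality $ax \leq \beta$ for $\stab(G)$ with $a\hat x > \beta$. The goal is to lift $\hat x$ to a point $\tilde x$ in the space of $\tilde G$ and a matching valid inequality $\tilde a \tilde x \le \tilde\beta$ for $\stab(\tilde G)$, and then to certify $\tilde x \in {\LS}_+^r(\tilde G)$ by building, round by round, the required PSD matrix certificates from those available for $\hat x$ in $G$.

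First I would fix notation for the $k$-stretching: the node $v$ (with $\Gamma(v)=A_1\cup A_2$, $A_1\cap A_2$ a clique of size $k$) is replaced by $v_1, v_2, u$, where $v_i$ is joined to $\{u\}\cup A_i$. The natural lift sets $\tilde x_{v_1}=\tilde x_{v_2}=\hat x_v$ and $\tilde x_u = 1-\hat x_v$, leaving all other coordinates unchanged; this is forced by the intuition that $\{v_1,v_2\}$ together play the role of $v$ while $u$ encodes ``$v\notin S$''. I would verify that any valid inequality $ax\le\beta$ for $\stab(G)$ lifts to a valid inequality for $\stab(\tilde G)$ under the substitution $a_v x_v \mapsto a_v x_{v_1}$ (one checks that for every stable set $\tilde S$ of $\tilde G$, contracting $\{v_1,v_2,u\}$ back to $v$ yields a stable set of $G$ of no smaller objective value, using that $v_1$ and $v_2$ are each adjacent to $u$, so a stable set contains at most one of $v_1,v_2$ together with $u$ in a controlled way). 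This establishes that the violated inequality descends to $\tilde G$.

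The technical heart is producing the PSD certificate. I would proceed by induction on the number of rounds $r$. Given a matrix $Y\in M_+(\fra(G))$ certifying $\hat x \in \LS_+(\LS_+^{r-1}(G))$, I would construct $\tilde Y \in \S^{|V(\tilde G)|+1}$ by duplicating the row/column indexed by $v$ into two identical rows/columns for $v_1$ and $v_2$, and adding a row/column for $u$ built from the ``complementary'' slice $Y(\e_0-\e_v)$. The key algebraic facts to check are: (i) $\tilde Y\succeq 0$, which follows because duplicating a row/column and taking Schur-complement-compatible combinations of existing PSD data preserves positive semidefiniteness (a congruence/limit argument, since $v_1,v_2$ receive identical data); (ii) the diagonal condition $\tilde Y\e_0=\diag(\tilde Y)$, immediate from the construction; and (iii) the cone membership conditions $\tilde Y\e_i,\ \tilde Y(\e_0-\e_i)\in\cone(\fra(\tilde G))$ for each new index, which reduce to the corresponding conditions for $Y$ after observing how the edge constraints of $\tilde G$ restrict to those of $G$ via the identification above. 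The induction then pushes this construction down through all $r$ levels simultaneously.

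The main obstacle I anticipate is step (iii) for the stretching node $u$ and for the new edges $v_i u$ and $v_i A_i$: one must confirm that the slices of $\tilde Y$ corresponding to these new adjacencies land in $\cone(\fra(\tilde G))$ rather than merely in $\cone(\fra(G))$ lifted naively. In particular the requirement $\tilde Y_{v_1 v_2}=0$ (forced because $v_1,v_2$ are nonadjacent but must behave like the same vertex) has to be reconciled with the PSD constraint, and getting the $u$-row to simultaneously satisfy the two conditions $\tilde Y\e_u,\ \tilde Y(\e_0-\e_u)\in\cone(\fra(\tilde G))$ is where the clique structure of $A_1\cap A_2$ (size $k$) is genuinely used. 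I expect the clean way through this is to define $\tilde Y$ so that its principal submatrix on $\{0\}\cup V(G)\setminus\{v\}\cup\{v_1\}$ equals $Y$ verbatim (with $v_1$ playing $v$), set the $u$-slice equal to the $(\e_0-\e_v)$-slice of $Y$, and then verify nonnegativity of the off-diagonal entries and the edge-orthogonality relations one family of edges at a time; once the base case $r=1$ is settled, the higher rounds follow by the same bookkeeping applied to $\LS_+^{r-1}$ in place of $\fra$.
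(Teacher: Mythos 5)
Your overall architecture --- lift the point via $\tilde x_{v_1}=\tilde x_{v_2}=\hat x_v$, $\tilde x_u=1-\hat x_v$, lift the PSD certificate by duplicating the $v$-slice and using the $Y(\e_0-\e_v)$-slice for $u$, and lift a violated valid inequality --- is exactly the paper's route (the paper simply observes that the Lipt\'ak--Tun\c{c}el construction for $k=0$ goes through verbatim for $k\ge 1$). However, your inequality-lifting step is wrong, and this is a fatal gap. The claim that validity of $ax\le\beta$ for $\stab(G)$ implies validity of $\sum_{j\neq v}a_jx_j+a_vx_{v_1}\le\beta$ for $\stab(\tilde G)$ is a purely polyhedral assertion, independent of any ${\LS}_+$ hypothesis, and it is false. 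Take $G=C_5$ with cyclic order $v,a,b,c,d$ and the rank facet $x_v+x_a+x_b+x_c+x_d\le 2$, and stretch $v$ with $A_1=\{a\}$, $A_2=\{d\}$; then $\tilde G$ is a $7$-cycle with cyclic order $a,v_1,u,v_2,d,c,b$, and $\{v_1,b,d\}$ is a stable set of $\tilde G$ giving $x_{v_1}+x_b+x_d=3>2$. Your contraction argument breaks precisely here: a stable set containing $v_1$ may also meet $A_2\setminus A_1$ (here $d$), so replacing $v_1$ by $v$ destroys stability in $G$. The correct lifting, which the paper takes from \cite{LipTun2003}, assigns the coefficient $a_v$ to \emph{all three} new nodes $v_1,v_2,u$ and raises the right-hand side to $\tilde\beta=\beta+a_v$; validity follows by cases on $\tilde S\cap\{v_1,v_2,u\}$ (using that $u$ is adjacent to both $v_i$, and that $\{v_1,v_2\}\subseteq\tilde S$ forces $\tilde S\cap(A_1\cup A_2)=\emptyset$), while the violation survives because $\tilde a\tilde x=a\hat x+a_v\bigl(\hat x_v+(1-\hat x_v)\bigr)=a\hat x+a_v>\beta+a_v=\tilde\beta$. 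Without this correction, your violated inequality does not exist and the proof collapses.

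Your treatment of the entry $\tilde Y_{v_1v_2}$, which you flag as the main obstacle, also gets the mechanics backwards. In $M_+(\cdot)$, zero off-diagonal entries are forced by \emph{adjacency}: if $ij\in E(\tilde G)$, the column condition together with $\diag(\tilde Y)=\tilde Y\e_0$ forces $\tilde Y_{ij}=0$. Since $v_1v_2\notin E(\tilde G)$, nothing forces $\tilde Y_{v_1v_2}=0$; moreover imposing it would wreck your own duplication argument, since with $\tilde Y_{0v_i}=\tilde Y_{v_iv_i}=x_v$ the principal submatrix indexed by $\{0,v_1,v_2\}$ then has determinant $x_v^2(1-2x_v)$, which is negative whenever $x_v>1/2$. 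The paper's matrix has $\tilde Y_{v_1v_2}=x_v$, $\tilde Y_{v_1u}=\tilde Y_{v_2u}=0$ (these are the genuinely forced zeros, being edges), and $\tilde Y_{uu}=\tilde Y_{0u}=1-x_v$; equivalently $\tilde Y=B^{\top}YB$ where $B$ has columns $\e_0$, $\e_w$ for $w\neq v$, then $\e_v$, $\e_v$, and $\e_0-\e_v$, so positive semidefiniteness is an immediate congruence and the cone-membership conditions reduce to those of $Y$. Finally, contrary to your expectation, the clique hypothesis on $A_1\cap A_2$ plays no essential role in the certificate verification: the entire content of the paper's proof is that the $k=0$ construction carries over unchanged for $k\ge 1$.
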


\begin{theorem}\label{cliquesubdivision}
Let $G$ be a graph and $r \geq 1$ such that ${\LS}_+^r(G) \neq \stab(G)$.  Further assume
that $\tilde{G}$ is obtained from $G$ by using the clique subdivision
operation on one of its edges.  Then, ${\LS}_+^r(\tilde{G}) \neq \stab(\tilde{G})$.
\end{theorem}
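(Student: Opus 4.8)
The plan is to exhibit a single point of ${\LS}_+^r(\tilde{G})$ lying outside $\stab(\tilde{G})$, organised around a reduction to the odd-subdivision case already available. Write the clique $K=\{v_1,\dots,v_s\}$ and put $K'=\{v_3,\dots,v_s\}$; the operation deletes $v_1v_2$, inserts the path $v_1\,v_{n+1}\,v_{n+2}\,v_2$, and joins each of $v_{n+1},v_{n+2}$ to all of $K'$. Let $\tilde{G}_0$ be the graph obtained by performing \emph{only} the odd subdivision of $v_1v_2$ (the path, with no edges to $K'$). Then $\tilde{G}_0$ is exactly the odd subdivision of $G$, so by the odd-subdivision theorem recalled above (\cite{LipTun2003}) we have ${\LS}_+^r(\tilde{G}_0)\neq\stab(\tilde{G}_0)$. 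Moreover $E(\tilde{G}_0)\subseteq E(\tilde{G})$ gives $\stab(\tilde{G})\subseteq\stab(\tilde{G}_0)$, so any facet of $\stab(\tilde{G}_0)$ is valid for $\stab(\tilde{G})$. Hence it suffices to produce a point of ${\LS}_+^r(\tilde{G})$ violating such a facet.

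First I would set up the witness for $\tilde{G}$ directly from a point $\bar{x}\in{\LS}_+^r(G)\setminus\stab(G)$ and its recursively defined matrix certificate $Y=[\langle u_i,u_j\rangle]$. The only difference between $\tilde{G}$ and $\tilde{G}_0$ is the clique-attachment edges $v_{n+1}v_j,\,v_{n+2}v_j$ ($j\in K'$), which demand that the new Gram vectors be orthogonal to $u_{v_j}$. The enabling fact is that $u_{v_1},u_{v_2}\perp u_{v_j}$ for every $j\in K'$, since $v_1v_j,v_2v_j\in E(G)$ force $Y_{v_1v_j}=Y_{v_2v_j}=0$. Thus any new vector whose old-space part lies in $\mathrm{span}(u_{v_1},u_{v_2})$, with extra length supplied only by fresh coordinates orthogonal to all old vectors, is automatically orthogonal to the $K'$-vectors, so the clique edges cost nothing; the path edges $v_1v_{n+1},v_{n+1}v_{n+2},v_{n+2}v_2$ and the now-deleted edge $v_1v_2$ are then arranged exactly as in the odd-subdivision construction. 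Positive semidefiniteness is immediate from the Gram form, and membership of the columns $\tilde{Y}\e_i,\tilde{Y}(\e_0-\e_i)$ in $\cone\big({\LS}_+^{r-1}(\tilde{G})\big)$ follows by induction on $r$.

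Second, the witness must actually lie outside $\stab(\tilde{G})$. The purely inherited (duplication) extension, taking $u_{v_{n+1}}:=u_{v_2}$ and $u_{v_{n+2}}:=u_{v_1}$, satisfies every edge condition but its point can slip back into $\stab(\tilde{G})$: when the violated facet of $\stab(G)$ uses neighbours of $v_1$ or $v_2$ outside $K$ — to which $v_{n+1},v_{n+2}$ are not adjacent — the extended point becomes a convex combination of incidence vectors of stable sets of $\tilde{G}$. To separate, the witness must instead carry strictly positive mass on the path interior, the amount prescribed by the odd-subdivision witness that \cite{LipTun2003} guarantees for $\tilde{G}_0$; since that facet is valid for $\stab(\tilde{G})$, separation reduces to producing a point of ${\LS}_+^r(\tilde{G})$ that realises this path-mass.

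The hard part will be reconciling these last two requirements. Raising the path-mass on $v_{n+1},v_{n+2}$ above the duplication value forces their Gram vectors to acquire components outside $\mathrm{span}(u_{v_1},u_{v_2})$ (along $u_0$ or along external vectors), and any such component generically destroys the orthogonality to the $K'$-vectors that the clique edges require. One must therefore solve, coherently at every level $\ell\le r$ of the certificate tower, for new Gram vectors that simultaneously (i) are orthogonal to $u_{v_1}$ or $u_{v_2}$, to all of $u_{v_3},\dots,u_{v_s}$, and to each other, as the edges of $\tilde{G}$ dictate, (ii) keep each matrix positive semidefinite with the correct diagonal, (iii) send every column into $\cone\big({\LS}_+^{\ell-1}(\tilde{G})\big)$ despite the deletion of $v_1v_2$, and (iv) still push enough mass onto the path to violate the facet. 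This balancing act between cone-feasibility of the lift and separation from $\stab(\tilde{G})$ is the crux, and it is precisely where the clique $K'$, rather than a bare edge, makes the argument genuinely harder than the odd-subdivision case.
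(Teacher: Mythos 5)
Your proposal is not a proof: it is a strategy that stops exactly at the step you yourself label ``the crux.'' Everything after your reduction to valid inequalities of $\stab(\tilde{G}_0)$ consists of requirements (i)--(iv) that the desired Gram vectors would have to satisfy, together with the observation that they pull in opposite directions; no witness point is constructed, and no argument is given that one exists. Naming the tension between orthogonality to the $K'$-vectors and putting mass $1$ on the path interior is a correct diagnosis of why a ``complementary'' (Lipt\'ak--Tun\c{c}el-style) extension does not transplant to the clique-subdivision setting --- indeed, by your own computation the new columns would then carry entries $x_{v_j}\neq 0$ against the nodes of $K'$, violating the edge conditions --- but diagnosing the obstruction does not remove it. As written, the statement remains unproved.

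The comparison with the paper is instructive, because the paper's proof takes precisely the route you discarded. Following \cite{AEF}, it keeps the pure duplication extension $\bar{x}_{n+1}=x_2$, $\bar{x}_{n+2}=x_1$, certified by the matrix $\tilde{Y}$ of \eqref{Y_tilde_AEF}, whose rows and columns indexed by $n+1$ and $n+2$ are literal copies of those indexed by $2$ and $1$. All edge conditions of $\tilde{G}$ --- including the clique edges, via exactly the orthogonality $Y_{v_1v_j}=Y_{v_2v_j}=0$ that you observed --- hold for this copied matrix, the column conditions $\tilde{Y}\e_i,\ \tilde{Y}(\e_0-\e_i)\in\cone\bigl({\LS}_+^{r-1}(\tilde{G})\bigr)$ follow recursively, and positive semidefiniteness is immediate since $\tilde{Y}=P^{\top}YP$ for a $0/1$ duplication matrix $P$; that single observation is the paper's entire proof. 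Your reason for rejecting duplication --- that the duplicated point can slip back into $\stab(\tilde{G})$ --- is a legitimate worry for an \emph{arbitrary} point $x\notin\stab(G)$ (one can build explicit examples, e.g.\ from $C_5$ with a suitable fractional point, although such points never lie in ${\LS}_+^r$ of the smaller graph), but the separation half of the argument is not obtained by inflating the path mass: it is inherited from the facet analysis of \cite{AEF} accompanying the duplication construction, exactly as the paper reproduces Lipt\'ak--Tun\c{c}el's inequality transformation for the stretching operation. In short, the route you abandoned is the one that closes both halves of the argument, and the route you kept cannot be completed in the form you propose.
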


In summary, we can conclude that the odd-subdivision of an edge, the $1$-stretching of a node and the clique-subdivision of an edge are operations that \emph{preserve} ${\LS}_+$-imperfection.
Then, the behavior of these operations under the ${\LS}_+$ operator together with the fact that graphs $G_{LT}$ and $G_{EMN}$ are ${\LS}_+$-imperfect, Lemma \ref{nofs} and Theorem \ref{reduccion} allow us to deduce:

\begin{theorem}\label{join0}
Let $G\in \F^k$ be a $\fs$-perfect graph with $\alpha(G)\geq 3$. Then, $G$ is ${\LS}_+$-imperfect.
\end{theorem}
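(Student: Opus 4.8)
The plan is to assemble the structural decomposition of Theorem~\ref{reduccion} with the three operation-invariance results, running an induction on the number of operations applied and using $G_{LT}$ and $G_{EMN}$ as base cases.

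First I would apply Lemma~\ref{nofs}. Since $G \in \F^k$ is $\fs$-perfect with $\alpha(G) \geq 3$, the lemma guarantees $k \geq 3$, $G - 0 = C_{2k+1}$, and that $G$ has at least three odd central cycles. These are exactly the hypotheses of Theorem~\ref{reduccion}, so $G$ can be produced from either $G_{LT}$ or $G_{EMN}$ by a finite sequence of operations, each being an odd-subdivision of an edge, a $1$-stretching of a node, or a clique-subdivision of an edge.

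Next I would run the induction. The base case is the fact recorded earlier (from \cite{EMN,LipTun2003}) that both $G_{LT}$ and $G_{EMN}$ are ${\LS}_+$-imperfect, that is, $\LS_+^1(H) \neq \stab(H)$ for $H \in \{G_{LT}, G_{EMN}\}$. Writing the decomposition furnished by Theorem~\ref{reduccion} as a chain $H = G_0 \to G_1 \to \cdots \to G_m = G$, with each arrow one of the three admissible operations, I would argue by induction on $m$: assuming $\LS_+^1(G_j) \neq \stab(G_j)$, the relevant invariance theorem --- the result of \cite{LipTun2003} for odd-subdivision, Theorem~\ref{Thm2.10} for the $1$-stretching, or Theorem~\ref{cliquesubdivision} for clique-subdivision --- yields $\LS_+^1(G_{j+1}) \neq \stab(G_{j+1})$. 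Carrying this to $j = m$ gives $\LS_+(G) = \LS_+^1(G) \neq \stab(G)$, so $G$ is ${\LS}_+$-imperfect.

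Since the genuine work has already been isolated into the preceding results, the main point requiring care is the bookkeeping of the induction. The three invariance theorems are stated for an arbitrary fixed iteration level $r \geq 1$, so I must ensure that the same level $r = 1$ is propagated unchanged along the entire chain, and that the operations appearing in Theorem~\ref{reduccion} are precisely those for which an invariance result is available (they are). With these observations the induction closes and the theorem follows.
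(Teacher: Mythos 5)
Your proposal is correct and follows essentially the same route as the paper: the paper likewise deduces Theorem~\ref{join0} by combining Lemma~\ref{nofs}, the decomposition of Theorem~\ref{reduccion}, the ${\LS}_+$-imperfection of $G_{LT}$ and $G_{EMN}$, and the three operation-invariance results. Your explicit induction along the operation chain, with the iteration level fixed at $r=1$, is just a careful write-up of the argument the paper leaves implicit.
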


Finally, we are able to present the main result of this contribution.

\begin{theorem}\label{main}
Let $G$ be a properly $\fs$-perfect graph which is also ${\LS}_+$-perfect. Then, $G$ is the complete join of a complete graph (possibly empty) and a minimally imperfect graph.
\end{theorem}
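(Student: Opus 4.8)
The plan is to anchor the argument on a minimally imperfect induced subgraph $M$ of $G$ and to classify every other vertex by the way it attaches to $M$: ${\LS}_+$-perfection will be used to forbid all ``partial'' attachments, while $\fs$-perfection will be used to forbid the two extreme attachments, leaving exactly the complete-join structure. First I would observe that since $G$ is properly $\fs$-perfect it is imperfect, so by the Strong Perfect Graph Theorem \cite{CRST2006} it contains an induced minimally imperfect subgraph $M$, i.e. an odd hole or odd antihole on $2k+1$ nodes for some $k\geq 2$; in particular $\alpha(M)\geq 2$ and the rank inequality $\sum_{u\in V(M)}x_u\leq\alpha(M)$ is facet-defining for $\stab(M)$. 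I would also record that both $\fs$-perfection and ${\LS}_+$-perfection are inherited by node-induced subgraphs.

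Next I would fix a vertex $v\in V(G)\setminus V(M)$, set $d:=|\Gamma(v)\cap V(M)|$, look at $H:=G[V(M)\cup\{v\}]$, and split into the cases $d\in\{1,\dots,2k\}$, $d=0$, and $d=2k+1$. In the partial case $1\leq d\leq 2k$ one has $H\in\F^k$, since $H-v=M$ is minimally imperfect and $1\leq\delta_H(v)=d\leq 2k$; as an induced subgraph of the $\fs$-perfect graph $G$, $H$ is itself $\fs$-perfect, so Theorems \ref{antiholes} and \ref{join0} (which together cover $\alpha(H)\geq 2$) give that $H$ is ${\LS}_+$-imperfect, whence $G$ is ${\LS}_+$-imperfect, contradicting the hypothesis. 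In the no-attachment case $d=0$, the graph $H$ is $M$ together with an isolated node $v$, so $\stab(H)=\stab(M)\times[0,1]$ and the rank facet of $M$ lifts to the facet $\sum_{u\in V(M)}x_u\leq\alpha(M)$ of $\stab(H)$; since its right-hand side is $\alpha(M)\geq 2$ this is neither a nonnegativity nor a clique inequality, and its support misses $v$ so it is not full-support, whence $H$ is not $\fs$-perfect, again a contradiction. Hence only $d=2k+1$ survives, i.e. every $v\in V(G)\setminus V(M)$ is adjacent to all of $M$.

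Writing $K:=V(G)\setminus V(M)$, it then remains to show $K$ induces a clique. I would suppose $w,w'\in K$ are non-adjacent, so that $G[V(M)\cup\{w,w'\}]=\overline{K_2}\vee M$, and invoke the description of the facets of a complete join: the product of the facet $x_w\leq 1$ of $\stab(\overline{K_2})$ with the rank facet of $\stab(M)$ yields the facet $\alpha(M)\,x_w+\sum_{u\in V(M)}x_u\leq\alpha(M)$, whose support misses $w'$ and which is again neither nonnegativity nor clique nor full-support; thus $\overline{K_2}\vee M$ is not $\fs$-perfect, contradicting heredity. Therefore $K$ is a clique, every vertex of $K$ is adjacent to every vertex of $M$, and consequently $G=G[K]\vee M$ with $G[K]$ a complete graph (possibly empty when $K=\emptyset$) and $M$ minimally imperfect, as desired.

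The substantive input is the combination of Theorems \ref{antiholes} and \ref{join0}, which is precisely what eliminates the partial attachments and forces the ``all-or-nothing'' adjacency pattern; this is where I expect the real difficulty to lie, and it is already supplied by the earlier sections. Once that is in hand, the two remaining cases are settled by the elementary observation that inserting an isolated vertex, or an antiedge completely joined to $M$, creates a stray non-clique facet (a lift of the rank facet of $M$) of strictly smaller support, which $\fs$-perfection forbids. The only points that require care are confirming that the three-way split $d\in\{0\}$, $d\in\{1,\dots,2k\}$, $d=2k+1$ is exhaustive and that the product description of join facets is applied correctly so that the two extreme configurations are genuinely not $\fs$-perfect.
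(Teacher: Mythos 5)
Your proof is correct and takes essentially the same route as the paper's: anchor on a minimally imperfect induced subgraph, use heredity together with Theorems \ref{antiholes} and \ref{join0} to rule out partial attachments of each outside vertex, and then use the facet description of complete joins to force the remaining vertices to form a clique (the paper compresses this last step into a citation of Remark \ref{joinclique}). Your explicit handling of the isolated-vertex case $d=0$ spells out a step the paper leaves implicit when it passes from $G_v\notin\F^k$ to $\delta_{G_v}(v)=2k+1$.
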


\begin{proof}
Since $G$ is a properly $\fs$-perfect graph, $G$ has a $(2k+1)$-minimally imperfect node induced subgraph $G'$. If $G'=G$ the theorem follows. Otherwise, let $v\in V(G)\setminus V(G')$ and let $G_v$ be the subgraph of $G$ induced by $\{v\}\cup V(G')$. Clearly, $G_v$ is properly $\fs$-perfect as well as ${\LS}_+$-perfect.
Then, by Theorem \ref{antiholes} and Theorem \ref{join0}, $G_v\notin \F^k$. So, $\delta_{G_v}(v)=2k+1$ and $G_v=\{v\}\vee G'$.
Therefore, if $G''$ is the subgraph of $G$ induced by $V(G)-V(G')$, $G=G' \vee G''$. By Remark \ref{joinclique}, $G''$ is a complete graph and by Remark \ref {necesaria} the result follows.
\end{proof}

Since complete joins of  complete graphs and minimally imperfect graphs are near-bipartite,  they satisfy $\nb(G)=\STAB(G)$.  Therefore,
based on the results obtained so far, we can conclude that Conjecture \ref{conjecture} holds for $\fs$-perfect graphs.

\section{Results concerning the ${\LS}_+$-operator}
\label{lodelN+}

In this section we include the proofs of some results on the ${\LS}_+$-operator that were stated without proof in the previous sections.

\subsection{The ${\LS}_+$-imperfection of the graph $H^k$.}

Recall that $V(H^k)=\{0,1,\ldots,2k+1\}$, $H^k-0=\overline{C_{2k+1}}$ and $\delta(0)=2k$.
Without loss of generality, we may assume that the node $2k+1$ in $H^k$ is the only one not connected with node $0$.
Let us introduce the point $x(k,\gamma)=\frac{1}{2k+2+\gamma}(2,2,\dots,2,4)^{\top} \in \R^{2k+2}$ where for $i \in \{1,\ldots, 2k+2\}$, the $i$-th component of $x(k,\gamma)$ corresponds to the node $i-1$ in $H^k$,
In what follows we show that $x(k,\gamma)\in {\LS}_+(H^k)\setminus \stab(H^k)$ for some $\gamma\in (0,1)$ thus proving Theorem \ref{oddanti}.
We first consider $\beta_k = \frac{1}{2k+2}$,  $\gamma\in(0,1)$ and the $(2k+3)\times (2k+3)$ matrix given by

{\small 
\[
Y(k,\gamma) :=
\left [\begin {array}{ccccccccc} (2k+2+\gamma) & 2 & 2 & 2 & 2& 2 & \cdots &
2 & 4\\ \noalign {\medskip}
2 & 2 & 0 & 0 & 0 & 0 & \cdots & 0 & 2\\ \noalign {\medskip}
2 & 0 & 2 & 1-\beta_k & 0 & 0 & \cdots & 0 & 1+\beta_k\\ \noalign {\medskip}
2 & 0 & 1-\beta_k & 2& 1+\beta_k & 0 & \cdots & 0 & 0\\ \noalign {\medskip}
2 & 0 & 0 & 1+\beta_k & 2& 1-\beta_k & \cdots & 0 & 0\\ \noalign {\medskip}
2 & 0 & 0 & 0 & 1-\beta_k & 2& \cdots & 0 & 0\\ \noalign {\medskip}
\vdots & \vdots & \vdots & \vdots & \vdots & \vdots & \ddots & \vdots & \vdots
\\ \noalign {\medskip}
2 & 0 & 0 & 0 & 0 & 0 & \cdots & 2& 1+\beta_k\\ \noalign {\medskip}
4 & 2 & 1+\beta_k & 0 & 0 & 0 & \cdots & 1+\beta_k & 4
\end
 {array}\right ].
 \]
}

\begin{lemma}\label{punto_M+}
For $k\geq 3$, there exists $\gamma\in (0,1)$ such that $Y(k,\gamma)$ is PSD.
\end{lemma}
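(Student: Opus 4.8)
The plan is to prove Lemma~\ref{punto_M+} by exhibiting a value of $\gamma\in(0,1)$ for which the symmetric matrix $Y(k,\gamma)$ is positive semidefinite, and the natural route is to analyze its structure via Schur complements and the special near-circulant pattern of its lower-right block. First I would isolate the leading diagonal entry $Y_{00}=2k+2+\gamma$ (the row/column indexed by $\e_0$) and take the Schur complement of $Y(k,\gamma)$ with respect to this entry. Because the zeroth column is $(2k+2+\gamma,2,2,\dots,2,4)^\top$, forming the Schur complement produces a $(2k+2)\times(2k+2)$ matrix whose $(i,j)$ entry is $Y_{ij}-\frac{Y_{i0}Y_{0j}}{2k+2+\gamma}$; since all the off-diagonal couplings $Y_{i0}$ equal $2$ except the last which is $4$, this subtracts a rank-one correction that is essentially a constant times the all-ones outer product (with a perturbation in the last coordinate). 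The key observation to exploit is that this makes $Y(k,\gamma)\succeq 0$ equivalent to $2k+2+\gamma>0$ together with positive semidefiniteness of the Schur complement, and the latter is a tridiagonal-plus-corner matrix whose structure I can control.

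The core of the argument is then to understand the lower-right $(2k+2)\times(2k+2)$ submatrix of $Y(k,\gamma)$ (rows and columns indexed by the actual nodes $0,1,\dots,2k+1$ of $H^k$). This block is almost a symmetric tridiagonal matrix: it has $2$ on the diagonal and alternating entries $1-\beta_k$ and $1+\beta_k$ on the sub/super-diagonals, wrapping around through the corner entries that tie node $2k+1$ back to nodes $1$ and $2$. I would diagonalize or at least bound the eigenvalues of this matrix using the fact that a tridiagonal matrix with diagonal $2$ and off-diagonals of magnitude at most $1+\beta_k$ can be compared to the standard path/cycle Laplacian-type matrices whose spectra are explicitly known via cosines. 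The choice $\beta_k=\frac{1}{2k+2}$ is presumably calibrated so that the alternating perturbation $\pm\beta_k$ keeps every principal minor nonnegative; I expect the cleanest implementation is to show that all $2\times2$ minors of the form $\det\begin{pmatrix}2 & 1\pm\beta_k\\ 1\pm\beta_k & 2\end{pmatrix}=4-(1\pm\beta_k)^2>0$ are positive and then bootstrap positivity of the full matrix through a Cholesky-type recursion on the tridiagonal structure, treating the two corner entries as a rank-bounded modification.

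The main obstacle I anticipate is precisely the interaction between the Schur-complement rank-one correction coming from the $\e_0$-row and the corner wrap-around entries $1+\beta_k$ that break the pure tridiagonal pattern. These two features together mean I cannot simply cite a closed-form eigenvalue formula; instead I will likely need to pick $\gamma$ explicitly (for instance solving for the $\gamma$ that makes the smallest eigenvalue exactly zero, which would show PSD-ness holds on a closed sub-interval) and verify, perhaps by a determinantal recursion in $k$ or by writing $Y(k,\gamma)=L L^\top$ for an explicit lower-triangular $L$, that the resulting matrix is genuinely PSD rather than merely having nonnegative leading principal minors. A convenient strategy to sidestep delicate eigenvalue estimates is to produce an explicit PSD factorization or an explicit vector in the kernel: if I can write down the near-zero eigenvector of the tridiagonal-plus-corner block and choose $\gamma$ so that this vector becomes a null vector of the whole matrix, then nonnegativity of the remaining spectrum follows from a dimension/continuity argument as $\gamma$ ranges over $(0,1)$. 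I would finish by checking that the chosen $\gamma$ indeed lies strictly inside $(0,1)$ for all $k\geq 3$, which is where the asymptotic behaviour $\beta_k\to 0$ guarantees that the required correction stays small.
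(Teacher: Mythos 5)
Your overall skeleton does match the paper's: the paper likewise pivots on the $(0,0)$ entry via the Schur complement lemma, so that $Y(k,\gamma)\succeq 0$ becomes the condition $\tilde Y(k)-\frac{4}{2k+2+\gamma}(\one+\e_{2k+2})(\one+\e_{2k+2})^{\top}\succeq 0$, where $\tilde Y(k)$ is the $\gamma$-independent lower block. The problem is that your plan defers precisely the two steps that constitute the proof, and the shortcuts you offer for them are unsound. First, positive definiteness of the near-tridiagonal block: positivity of the $2\times 2$ minors $4-(1\pm\beta_k)^2$ cannot be ``bootstrapped'' into positive definiteness. A tridiagonal matrix with diagonal $2$ and \emph{constant} off-diagonal $1+\beta$ has every $2\times 2$ principal minor positive, yet its least eigenvalue is roughly $2-2(1+\beta)\cos\frac{\pi}{n+1}$, which is negative once $n$ is large; for the same reason, comparison with path/cycle matrices with cosine spectra gives a bound of the wrong sign here, because the off-diagonal entries $1+\beta_k$ exceed $1$. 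Positive definiteness holds only because of the alternation of $1-\beta_k$ and $1+\beta_k$, and making that quantitative is exactly the two-term leading-principal-minor recursion (the quantities $A_\ell(k)$, $B_\ell(k)$) that the paper runs; nothing in your proposal replaces it. (Minor slips in the same vein: the wrap-around ties node $2k+1$ to nodes $0$ and $1$, not $1$ and $2$, and its diagonal entry is $4$, not $2$; the all-$2$ diagonal only appears after one further Schur complement, which is how the paper produces $\hat Y(k)$.)

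Second, the choice of $\gamma$: engineering a kernel vector does not prove positive semidefiniteness --- a symmetric matrix can have $0$ as an eigenvalue and still have negative eigenvalues, so your ``dimension/continuity'' finish does not close the argument. What is true, and what implicitly underlies the paper's argument, is that the good set of $\gamma$ is a ray, since $Y(k,\gamma')-Y(k,\gamma)=(\gamma'-\gamma)\,\e_0\e_0^{\top}\succeq 0$ for $\gamma'\ge\gamma$; but to see the ray is nonempty one already needs $\tilde Y(k)\succ 0$, and to see its endpoint lies below $1$ one needs an explicit threshold. The paper computes it: with $u$ defined by $\tilde Y(k)u=2(\one+\e_{2k+2})$, the matrix $Y(k,\gamma)$ is PSD if and only if $\gamma\ge 1-\beta_k u_{2k+2}$, and since $u_{2k+2}=\tfrac12-\beta_k\bigl(\tilde Y(k)^{-1}\bigr)_{2k+2,2k+2}<\tfrac12$, the admissible window inside $(0,1)$ is contained in $\bigl(1-\beta_k/2,\,1\bigr)$. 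This also shows your closing asymptotic intuition points the wrong way: because $\beta_k\to 0$, that window shrinks to nothing as $k$ grows, so smallness of $\beta_k$ makes the lemma harder, not easier; everything hinges on the strict inequality $u_{2k+2}>0$, which the paper again extracts from the determinant recursion via Cramer's rule. Without these two ingredients carried out, your proposal is a roadmap rather than a proof.
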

\begin{proof}
Let us denote by $\tilde{Y}(k)$ the $(2k+2)\times (2k+2)$ submatrix of $Y(k,\gamma)$ obtained after deleting the first row and column.
Also consider $\hat{Y}(k)$, the Schur Complement of the $(1,1)$ entry of $\tilde{Y}(k)$, then
{\small
\[
\hat{Y}(k) =
\left [\begin {array}{ccccccc}
2 & 1-\beta_k & 0 & 0 & \cdots & 0 & 1+\beta_k\\ \noalign {\medskip}
1-\beta_k & 2& 1+\beta_k & 0 & \cdots & 0 & 0\\ \noalign {\medskip}
0 & 1+\beta_k & 2& 1-\beta_k & \cdots & 0 & 0\\ \noalign {\medskip}
0 & 0 & 1-\beta_k & 2& \cdots & 0 & 0\\ \noalign {\medskip}
\vdots & \vdots & \vdots & \vdots & \ddots & \vdots & \vdots
\\ \noalign {\medskip}
0 & 0 & 0 & 0 & \cdots & 2& 1+\beta_k\\ \noalign {\medskip}
1+\beta_k & 0 & 0 & 0 & \cdots & 1+\beta_k & 2
\end
 {array}\right ].
\]
}

\begin{claim}
For every $k \geq 2$, $\tilde{Y}(k)$ is positive definite.
\end{claim}

\begin{proof}
Let us first show that $\hat{Y}(k)$ is positive definite. For this purpose, we only need to verify that every leading principle minor of $\hat{Y}(k)$ is positive.
Let us define $A_0(k):=1$, $B_0(k):=2$ and for $\ell\geq 1$,
{\small
\[
A_{\ell}(k):=\det\left[\begin {array}{ccccccc}
2 & 1-\beta_k & 0 & 0 & \cdots & 0 & 0\\ \noalign {\medskip}
1-\beta_k & 2& 1+\beta_k & 0 & \cdots & 0 & 0\\ \noalign {\medskip}
0 & 1+\beta_k & 2& 1-\beta_k & \cdots & 0 & 0\\ \noalign {\medskip}
0 & 0 & 1-\beta_k & 2& \cdots & 0 & 0\\ \noalign {\medskip}
\vdots & \vdots & \vdots & \vdots & \ddots & \vdots & \vdots
\\ \noalign {\medskip}
0 & 0 & 0 & 0 & \cdots & 2& 1-\beta_k\\ \noalign {\medskip}
0 & 0 & 0 & 0 & \cdots & 1-\beta_k & 2
\end{array}\right],
\]
}
where the matrix in the definition is $2\ell\times2\ell$ and
{\small
\[
B_{\ell}(k):=\det\left[\begin {array}{ccccccc}
2 & 1-\beta_k & 0 & 0 & \cdots & 0 & 0\\ \noalign {\medskip}
1-\beta_k & 2& 1+\beta_k & 0 & \cdots & 0 & 0\\ \noalign {\medskip}
0 & 1+\beta_k & 2& 1-\beta_k & \cdots & 0 & 0\\ \noalign {\medskip}
0 & 0 & 1-\beta_k & 2& \cdots & 0 & 0\\ \noalign {\medskip}
\vdots & \vdots & \vdots & \vdots & \ddots & \vdots & \vdots
\\ \noalign {\medskip}
0 & 0 & 0 & 0 & \cdots & 2& 1+\beta_k\\ \noalign {\medskip}
0 & 0 & 0 & 0 & \cdots & 1+\beta_k & 2
\end{array}\right],
\]
}
where the matrix in the definition is $(2\ell+1)\times(2\ell+1)$.
Using the determinant expansion on $A_{\ell}(k)$ and $B_{\ell}(k)$ we have that for every $\ell\geq 1$,
\[
A_{\ell}(k) = 2 B_{\ell-1}(k) -(1-\beta_k)^2 A_{\ell-1}(k),
\]
\[
B_{\ell}(k) = 2 A_{\ell}(k) -(1+\beta_k)^2 B_{\ell-1}(k),
\]
and
\begin{eqnarray}\label{reccursive}
\det\left(\hat{Y}(k)\right) & = & 2 \left[A_k(k) -(1+\beta_k)^2
B_{k-1}(k) + (1-\beta_k)^k (1+\beta_k)^{k+1}\right] \nonumber\\
& = & 2\left[B_k(k) - A_k (k)+ (1-\beta_k)^k (1+\beta_k)^{k+1}\right].
\end{eqnarray}

Using these recursive formulas, we have that $\hat{Y}(k)$ is positive definite.
Finally, after the Schur Complement Lemma we have that $\tilde{Y}(k)$ is positive definite.
\end{proof}

Using this claim we have:
\begin{claim}
\label{lem:2}
Let $u$ be the (unique) vector such that
\begin{equation}\label{def_u}
\tilde{Y}(k) u =
2 (\one+ \e_{2k+2}).
\end{equation}
Then $Y(k,\gamma)$ is PSD if and only if
$\gamma \geq 1-\beta_k u_{2k+2}$.
\end{claim}

\begin{proof}
Using the Schur Complement Lemma for $\tilde{Y}(k)$
we have that $Y(k,\gamma)$ is PSD if and only if
\[
\tilde{Y}(k) - \frac{4}{2k+2+\gamma}
(\one+ \e_{2k+2})(\one+ \e_{2k+2})^{\top} \text{ is PSD}.
\]
Using the automorphism $\left[\tilde{Y}(k)\right]^{-1/2}
\cdot \left[\tilde{Y}(k)\right]^{-1/2}$ of the PSD cone, the
latter is true if and only if the following matrix
\begin{equation}\label{rank1}
I - \frac{4}{2k+2+\gamma} \left[\tilde{Y}(k)\right]^{-1/2} (\one+ \e_{2k+2})(\one+ \e_{2k+2})^{\top} \left[\tilde{Y}(k)\right]^{-1/2}
\end{equation}
is PSD.
Since
\[
\frac{4}{2k+2+\gamma} \left[\tilde{Y}(k)\right]^{-1/2} (\one+ \e_{2k+2})(\one+ \e_{2k+2})^{\top} \left[\tilde{Y}(k)\right]^{-1/2}
\]
is a rank one matrix, using (\ref{def_u}) we have that the matrix in (\ref{rank1}) is PSD if and only if
\begin{equation}\label{cond1}
1 \geq \frac{4}{2k+2+\gamma}(\one+ \e_{2k+2})^{\top}
[\tilde{Y}(k)]^{-1}
(\one+ \e_{2k+2})=\frac{2(\one+ \e_{2k+2})^{\top} u}{2k+2+\gamma}.
\end{equation}
Now, using the definition of $\tilde{Y}(k)$ we have that
\[
\tilde{Y}(k) \one= 4 (\one)  + (4+ 2 \beta_k )\e_{2k+2}=2\tilde{Y}(k)u+2\beta_k \e_{2k+2},
\]
and then
\[
u=\frac{1}{2} \one - \beta_k [\tilde{Y}(k)]^{-1} \e_{2k+2}.
\]
Therefore,
\begin{align}
2(\one+ \e_{2k+2})^{\top} u & =2(\one+ \e_{2k+2})^{\top} \left( \frac{1}{2} \one - \beta_k [\tilde{Y}(k)]^{-1} \e_{2k+2}\right)\nonumber \\
&=(2k+3) -2 \beta_k (\one+ \e_{2k+2})^{\top} [\tilde{Y}(k)]^{-1}
\e_{2k+2}\nonumber
\end{align}
and again using (\ref{def_u}), we obtain
\begin{equation}\label{u}
2(\one+ \e_{2k+2})^{\top} u = (2k+3) -\beta_k u_{2k+2}.
\end{equation}
Hence, using (\ref{cond1}) and  (\ref{u})
we can conclude that the matrix in (\ref{rank1}) is PSD if and only if
\[1\geq \frac{(2k+3) -\beta_k u_{2k+2}}{2k+2+\gamma}\] or equivalently, if and only if
\[\gamma\geq 1- \beta_k u_{2k+2}.\]
\end{proof}

By the previous claims, to prove that $Y(k,\gamma)$ is PSD for some $\gamma \in (0,1)$, it suffices
to prove that there exists $\gamma \in(0,1)$ such that
\[
\gamma \geq 1-\beta_k u_{2k+2},
\]
where  $u$ is the unique
solution of \eqref{def_u}.
Thus, as long as $u_{2k+2}>0$, we may have $\gamma <1$ as desired.
Using (\ref{u}) we have that
\[
u_{2k+2}=\e_{2k+2}^{\top} u= \frac{1}{2} -\beta_k \e_{2k+2}^{\top} \tilde{Y}(k)^{-1} \e_{2k+2}
\]
and by Cramer's rule and the definitions of $\tilde{Y}$,
$\hat{Y}$, and $A_k$, we conclude
\[
u_{2k+2}=\frac{1}{2}-\beta_k\frac{A_k}{\det\left(\hat{Y}(k)\right)}.
\]
Then, using the recursive formula (\ref{reccursive}) we have that $u_{2k+2}>0$.
This completes the proof.
\end{proof}

Utilizing the previous lemma we are able to prove Theorem \ref{oddanti}.

\begin{proof}[Proof of Theorem \ref{oddanti}]
Recall that we may assume that in $H^k$ the node $2k+1$ is not connected with node $0$.
Let $\gamma\in(0,1)$ and   $x(k,\gamma)=\frac{1}{2k+2+\gamma}(2,2,\dots,2,4)^{\top}\in \R^{2k+2}$ where the $i$-th component of $x(k,\gamma)$ corresponds to node $i-1$ in $H^k$ for $i \in \{1,\ldots,2k+2\}$.
Let $Y^*(k,\gamma)=\frac{1}{2k+2+\gamma}Y(k,\gamma)$. Then, $Y^*(k, \gamma)$ is a symmetric matrix that clearly satisfies that $Y^*(k, \gamma)\e_0=\diag(Y^*(k, \gamma))\in \fra(H^k)$.
Moreover, it is not hard to check that, for $i \in \{1,\dots,2k+2\}$,
\[
Y^*(k, \gamma)\e_i \in \cone(\fra(H^k))\; \text{ and } \; Y^*(k, \gamma)(\e_0-\e_i) \in \cone(\fra(H^k)).
\]
This proves that $Y^*(k, \gamma)\in M(H^k)$. By the previous lemma, there exists $\bar{\gamma}\in (0,1)$ for which $Y^*(k, \bar{\gamma})\in M_+(H^k)$.
Hence, $x(k, \bar{\gamma})\in {\LS}_+(H^k)$.
It only remains to observe that $x(k,\bar{\gamma})$ violates the rank inequality of $H^k$.
Thus, $x(k,\bar{\gamma})\notin \stab(H^k)$.
\end{proof}

\subsection{Operations that preserve ${\LS}_+$-imperfection}

Firstly, we prove Theorem \ref{Thm2.10} on the $k$-stretching operation for $k\geq 1$, already stated in Section \ref{thevalidity}. Actually, we will see that the same proof given in \cite{LipTun2003} for the case $k=0$ can be used for the case $k\geq 1$.
Assume that $\tilde G$ is obtained from $G$ after the $k$-stretching operation on node $v$ and let $u$, $v_1$ and $v_2$ be as in the definition of the operation in Section \ref{operation}. For any $x\in \R^{V(G)}$, we write
$x=\left(\begin{array}{c}
\bar x\\
x_v
\end{array} \right)$ where $\bar x \in \mathbb R^{V(G-v)}$.

For the case $k=0$, the authors in \cite{LipTun2003} prove that if a point $x=\left(\begin{array}{c}
\bar x\\
x_v
\end{array} \right)\in {\LS}_+^r(G)$ then the point $\tilde x$ given by
\[
x_w=\left\{
\begin{array}{ll}
\tilde{x}_w & \text{ if } w\in \{u,v_1,v_2\},\\
\bar{x}_w & \text{ otherwise,}
\end{array}
\right.
\]
satisfies
$\tilde x \in {\LS}_+^r(\tilde G)$. In order to do so they prove that if
\[
Y= \left[\begin{array}{c|ccc|c}
1 & & \bar{x}^{\top} && x_v\\\hline
&&&&\\
\bar{x} & & \bar{X} && \bar{y} \\
&&&&\\\hline
x_v & &\bar{y}^{\top} && x_v
\end{array}
\right] \in M_+({\LS}_+^{r-1}(G))
\]
then
\[
\tilde{Y}=
\left[\begin{array}{c|c|c|c|c}
1 & \bar{x}^{\top} & x_v &  x_v &  \left(1-x_v\right) \\\hline
&&&&\\
\bar{x} & \bar{X} & \bar{y} & \bar{y} & \bar{x}-\bar{y}\\
&&&&\\\hline
x_v & \bar{y}^{\top} & x_v & x_v& 0\\\hline
x_v & \bar{y}^{\top} & x_v & x_v& 0\\\hline
\left(1-x_v\right) & \left(\bar{x}-\bar{y}\right)^{\top} & 0 & 0 & \left(1-x_v\right)
\end{array}
\right] \in M_+({\LS}_+^{r-1}(\tilde G)).
\]
On the other hand, they show that if $\sum_{j \in V(G)} a_j x_j \leq \beta$ is a valid inequality for $\stab(G)$, defining $\tilde{\beta} = \beta +a_v$ and
\[
\tilde{a}_j = \left\{
\begin{array}{lll}
a_v & & \text{ if } j \in \{v_1, v_2, u\},\\
&&\\
a_j & & \textup{otherwise,}
\end{array}
\right.
\]
the inequality $\sum_{j \in V(\tilde G)} \tilde a_j x_j \leq \tilde{\beta}$ is valid for $\stab(\tilde G)$.
Moreover, if $x^*$ violates $\sum_{j \in V(G)} a_j x_j \leq \beta$ then $\tilde{x^*}$ violates $\sum_{j \in V(\tilde G)} \tilde a_j x_j \leq \tilde{\beta}$.

\begin{proof} [Proof of Theorem \ref{Thm2.10}]
It is enough to observe that $\tilde Y\in M_+\left(\LS^{r-1}(\tilde G)\right)$  and the inequality\\
$\sum_{j \in V(\tilde G)} \tilde a_j x_j \leq \tilde{\beta}$ is valid for $\stab(\tilde G)$
even for the case $\tilde{G}$ is obtained after the $k$-stretching on node $v$ in $G$, for $k\geq 1$.
\end{proof}

Let us now consider the clique-subdivision operation defined in \cite{AEF}.
For $x \in\R^n$, let $\bar{x} \in \R^{n+2}$ such that $\bar{x}_i=x_i$ for every $i \in \{1,\ldots,n\}$, $\bar{x}_{n+1}=x_2$ and $\bar{x}_{n+2}=x_1$, and write $\bar{x}=\left(\begin{array}{c}
x\\
x_2\\
x_1
\end{array}\right)$.
In \cite{AEF} the authors prove that if $\tilde G$ be obtained from $G$ by the clique subdivision of the edge $v_1 v_2$ in the clique $K$ and  $x\in {\LS}_+^k(G)$ then $\bar{x}\in \LS^k(\tilde G)$.
In order to do so, they show that if $Y\e_0=\left(
\begin{array}{c}
1 \\ x
\end{array}\right)
$ for
\begin{equation}\label{Y_AEF}
Y= \left[\begin{array}{c|c|c|*{3}{c}}
1   & x_1 & x_2 & & \bar{x}^{\top} & \\\hline
x_1  & x_1 & 0   & & y_1^{\top} & \\\hline
x_2 & 0 & x_2   & & y_2^{\top} & \\\hline
&&&&&\\
\tilde{x} & y_1 & y_2 & & \bar{X} & \\
&&&&&\\
\end{array}
\right] \in M(\LS^{r-1}(G))
\end{equation}
then
\begin{equation}\label{Y_tilde_AEF}
\tilde{Y}=
\left[\begin{array}{c|c|c|*{3}{c}|c|c}
1 & x_1 & x_2 &  & \tilde{x}^{\top} & & x_2 & x_1 \\\hline
x_1 & x_1 & 0 & & y_1^{\top} & & 0 & x_1\\\hline
x_2 & 0 & x_2 & & y_2^{\top} & & x_2 & 0 \\\hline
&&&&&&&\\
\tilde{x} & y_1 & y_2 & & \bar{X} & & y_2 & y_1 \\
&&&&&&&\\\hline
x_2 & 0 & x_2 & & y^{\top}_2 & & x_2 & 0\\\hline
x_1 & x_1 & 0 & & y_1^{\top} & & 0 & x_1
\end{array}
   \right] \in M(\LS^{r-1}(\tilde G)).
\end{equation}

\begin{proof} [Proof of Theorem \ref{cliquesubdivision}]
It is enough to observe that if the matrix $Y$ is PSD then so is the matrix in \eqref{Y_tilde_AEF}. 
\end{proof}

\section{Conclusions and further results}\label{conclu}

In this work, we face the problem of characterizing the stable set polytope of ${\LS}_+$-perfect graphs, a graph class where the Maximum Weight Stable Set Problem is polynomial time solvable. This class strictly includes many
well-known graph classes such as perfect graphs, $t$-perfect graphs, wheels, anti-holes, near-bipartite graphs and the graphs obtained from various suitable compositions of these.
The stable set polytope of either a perfect or a  near-bipartite graph only needs the inequalities associated with the stable set polytopes of its near-bipartite subgraphs.
In a previous work, we have conjectured that the same holds for all ${\LS}_+$-perfect graphs.
In this paper, we prove the validity of this conjecture for $\fs$-perfect graphs, a superclass of near-perfect graphs. Moreover, if $\mathcal{FS}$ denotes the class of $\fs$-perfect graphs, using the definition in (\ref{F}), we actually prove that the conjecture holds for a superclass of $\fs$-perfect graph defined as those graphs for which $\mathcal{FS}(G)=\stab(G)$. Observe that the graph in Figure \ref{clique_sum} satisfies $\mathcal{FS}(G)=\stab(G)$ and it is not $\fs$-perfect.

\begin{figure}[h]
\begin{center}
\includegraphics{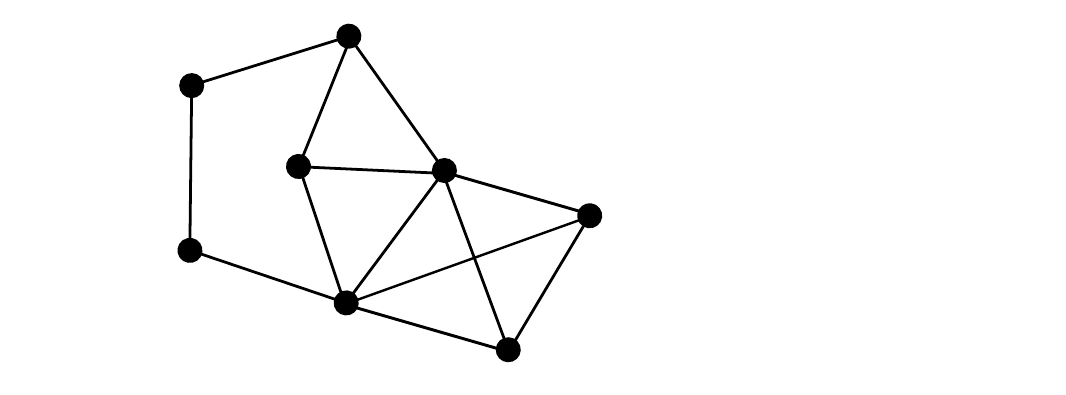}
\caption{A graph $G$ satisfying $\mathcal{FS}(G)=\stab(G)$ which is not $\fs$-perfect. }
\label{clique_sum}
\end{center}
\end{figure}

Also, the results used in the proof of the Theorem \ref{main} allow us to conclude the following:

\begin{corollary}
Let $G$ be a graph such that $V(G)=\{0,1,\ldots,2k+1\}$ with $k\geq 2$ and $G-0$ is minimally imperfect. Then:

\begin{itemize}
\item If $G-0=C_{2k+1}$ then $G$ is ${\LS}_+$-perfect if and only if either $\delta_G(0)\geq 2$ and $G$ has only one odd central cycle or $\delta_G(0)\in \{0,1,2k+1\}$.
\item
If $G-0=\overline{C_{2k+1}}$ with $\alpha(G)=2$ then $G$ is ${\LS}_+$-perfect if and only $\delta_G(0)=2k+1$.
\end{itemize}
\end{corollary}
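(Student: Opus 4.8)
The plan is to read the corollary as an exhaustive case analysis on $d:=\delta_G(0)$, invoking in each case the machinery already developed. Because $G-0$ is minimally imperfect, the Strong Perfect Graph Theorem leaves exactly the two families of the two bullets, $G-0=C_{2k+1}$ and $G-0=\overline{C_{2k+1}}$. The central-cycle dichotomy of Section \ref{Fk} governs the range $3\le d\le 2k$, while the boundary values $d\in\{0,1,2,2k+1\}$ are treated by hand. The one tool I will use throughout the \emph{perfect} direction is the containment $\LS_+(G)\subseteq\fra(G)\cap\oddcycle(G)$: here $\LS_+(G)\subseteq\fra(G)$ because the lifting only tightens ($\LS_+(K)\subseteq K$), while $\LS_+(G)\subseteq\oddcycle(G)$ follows from the stated chain $\LS_+(G)\subseteq\NB(G)\subseteq\clique(G)\cap\oddcycle(G)$. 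Consequently, whenever Remark \ref{Gmenosv} yields $\stab(G)=\fra(G)\cap\oddcycle(G)$, we immediately conclude $\LS_+(G)=\stab(G)$.

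I would dispose of the antihole bullet first, as it is the cleaner of the two. If $d=2k+1$ then $G=\{0\}\vee\overline{C_{2k+1}}$; since $K_1$ is perfect and the antihole $\overline{C_{2k+1}}$ is near-bipartite — each $\overline{C_{2k+1}}\ominus v$ induces only the two cycle-neighbours of $v$ — both factors are $\LS_+$-perfect and Remark \ref{necesaria} gives that $G$ is $\LS_+$-perfect (consistently, $\alpha(G)=2$). Conversely, if $d\le 2k$ then $0$ has a non-neighbour, and a suitable relabelling (an automorphism of $\overline{C_{2k+1}}$) placing that node in the role of the unique non-neighbour of $0$ in $H^k$ exhibits $G$ as an edge subgraph of $H^k$. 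Since $\alpha(G)=2$, the full-rank inequality $\sum_v x_v\le 2$ is valid for $\stab(G)$, yet by Theorem \ref{oddanti} it is violated by a point of $\LS_+(H^k)$; the edge-subgraph lemma preceding Theorem \ref{antiholes} then forces $\stab(G)\neq\LS_+(G)$. This is precisely the argument of Theorem \ref{antiholes}, now run for an arbitrary edge subgraph rather than a near-perfect one.

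For the odd-hole bullet, all $\LS_+$-perfect cases reduce to producing a node $v$ with $G-v$ bipartite, after which Remark \ref{Gmenosv} together with the containment above forces $\LS_+(G)=\stab(G)$. If $d\le 2$, deleting a cycle node (when $d=0$) or a neighbour of $0$ (when $d\in\{1,2\}$) leaves a path plus an isolated node, hence a bipartite graph; if $3\le d\le 2k$ and $G$ has a single odd central cycle, the computation recorded just before Lemma \ref{nofs} shows that $G-v_1$ is bipartite. The sole exception is the wheel $d=2k+1$, where no single deletion works and Remark \ref{necesaria} applies to $\{0\}\vee C_{2k+1}$ instead. Since every $G\in\F^k$ has an odd number of odd central cycles, for $3\le d\le 2k$ the count is either one or at least three, so these cases exhaust the perfect side.

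The remaining case is $3\le d\le 2k$ with at least three odd central cycles, where I must establish $\LS_+$-imperfection. For $k\ge 3$ this is immediate: Theorem \ref{reduccion} builds such a $G$ from $G_{LT}$ or $G_{EMN}$ by odd subdivisions, $1$-stretchings and clique subdivisions, each of which preserves $\LS_+$-imperfection (Theorem \ref{Thm2.10}, Theorem \ref{cliquesubdivision}, and the odd-subdivision theorem of \cite{LipTun2003}), while the two base graphs are themselves $\LS_+$-imperfect. For $k=2$ the reduction is unavailable, but a short enumeration of the arc-multisets of $C_5$ shows that the only admissible graphs have $0$ adjacent to $\{1,2,3\}$ or to $\{1,2,3,4\}$, which are exactly $G_{LT}$ and $G_{EMN}$. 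I expect the difficulty to lie in the bookkeeping rather than in any single step: the delicate points are confirming that the sufficiency arguments together cover every admissible $d$ (which rests on the parity of the odd-central-cycle count and on the reduction of a single odd central cycle to the bipartiteness of $G-v_1$), checking that the antihole relabelling genuinely yields an edge subgraph of $H^k$ so that Theorem \ref{oddanti} transfers, and matching the two $k=2$ graphs against the figures that define $G_{LT}$ and $G_{EMN}$.
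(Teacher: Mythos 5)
Your proof is correct and assembles exactly the ingredients the paper itself invokes for this corollary (Remark \ref{Gmenosv} with the containment ${\LS}_+(G)\subseteq \fra(G)\cap\oddcycle(G)$, Remark \ref{necesaria} for the joins, Theorem \ref{oddanti} with the edge-subgraph lemma for the antihole case, and Theorem \ref{reduccion} with the three imperfection-preserving operations for the odd-hole case), which is precisely the route the paper indicates when it says the corollary follows from the results used in proving Theorem \ref{main}. Your two supplementary checks --- running the Theorem \ref{antiholes} argument for an arbitrary edge subgraph of $H^k$ rather than a near-perfect one, and the explicit $k=2$ enumeration yielding $G_{LT}$ and $G_{EMN}$ where Theorem \ref{reduccion} requires $k\geq 3$ --- correctly fill in the details the paper leaves implicit.
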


From the above characterization we identify the forbidden structures in the family of ${\LS}_+$-perfect graphs. In other words,

\begin{corollary}
Let $G$ be an ${\LS}_+$-perfect graph. Then, there is no subgraph $G'$ of $G$ such that
\begin{itemize}
\item $G'-v_0=C_{2k+1}$, $2\leq \delta_{G'}(v_0)\leq 2k$ and $G'$ has at least two odd central cycles, or
\item $G'-v_0=\overline{C_{2k+1}}$ and $k+1 \leq \delta_G'(v_0)\leq 2k$ and $\alpha(G')=2$,
\end{itemize}
for some $v_0\in V(G')$ and $k\geq 2$.
\end{corollary}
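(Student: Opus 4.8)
The plan is to derive this as a short contrapositive consequence of the \emph{preceding} corollary together with the hereditary nature of $\LS_+$-perfection. First I would recall the fact, recorded earlier in the paper (and following from \cite{LS91}), that every node-induced subgraph of an $\LS_+$-perfect graph is again $\LS_+$-perfect. Because of this monotonicity, it suffices to prove that every graph $G'$ of either of the two listed types is itself $\LS_+$-\emph{imperfect}: if such a $G'$ occurred as a node-induced subgraph of an $\LS_+$-perfect graph $G$, then $G'$ would have to be $\LS_+$-perfect, a contradiction. Accordingly, I would argue by contradiction, assuming $G$ is $\LS_+$-perfect and contains a forbidden subgraph $G'$, and reducing to show that $G'$ is $\LS_+$-imperfect.

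Next I would dispatch the two types by matching each to the preceding corollary, whose hypotheses require $V(G')=\{0,1,\dots,2k+1\}$ with $G'-0$ minimally imperfect. In both cases I relabel the distinguished node $v_0$ as $0$; then $G'-0$ is either $C_{2k+1}$ or $\overline{C_{2k+1}}$, each minimally imperfect by the Strong Perfect Graph Theorem, so the corollary applies. For the first type, $G'-0=C_{2k+1}$ with $2\leq\delta_{G'}(0)\leq 2k$ and at least two odd central cycles; since a graph in $\F^k$ always has an \emph{odd} number of odd central cycles, ``at least two'' forces at least three, so $G'$ does not have exactly one odd central cycle, and moreover $\delta_{G'}(0)\notin\{0,1,2k+1\}$. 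By the first bullet of the preceding corollary, $G'$ is therefore $\LS_+$-imperfect. For the second type, $G'-0=\overline{C_{2k+1}}$ with $\alpha(G')=2$ and $\delta_{G'}(0)\leq 2k<2k+1$; the second bullet of that corollary yields at once that $G'$ is $\LS_+$-imperfect. (Note that only $\delta_{G'}(0)\leq 2k$ is used here, so the lower bound $k+1$ in the statement is a harmless restriction.)

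Combining the two cases produces the desired contradiction and finishes the proof. I do not expect a genuine obstacle, since the substantive content has already been established upstream, namely the $\LS_+$-imperfection of $H^k$ (Theorem~\ref{oddanti}) and the operations preserving $\LS_+$-imperfection (Theorems~\ref{Thm2.10} and~\ref{cliquesubdivision}), which together feed the preceding corollary. The only point demanding a moment's care is the bookkeeping for the first type: reconciling the hypothesis ``at least two odd central cycles'' with the corollary's condition ``only one odd central cycle,'' which is resolved cleanly by the parity fact that every graph in $\F^k$ has an odd number of odd central cycles.
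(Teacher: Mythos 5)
Your proposal is correct and matches the paper's intended argument: the paper offers no separate proof, deriving this corollary directly ``from the above characterization,'' i.e., exactly your combination of the hereditary property of ${\LS}_+$-perfection with the preceding corollary's if-and-only-if conditions (noting, as you do, that ``at least two odd central cycles'' negates ``only one,'' and $\delta_{G'}(v_0)\leq 2k$ negates $\delta_{G'}(v_0)=2k+1$).
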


\end{document}